\documentclass[runningheads]{llncs}

\title{Online Drone Coverage of Targets on a Line\thanks{S. Dobrev was supported in part by VEGA 2/0117/25, and K. Georgiou, E. Kranakis, L. Narayanan, D. Pankratov were supported in part by NSERC.},\thanks{This is the complete version of the paper which is to appear in the proceedings of the 37th International Workshop on Combinatorial Algorithms (IWOCA 2026), Clermont-Ferrand, France, on June 8-12, 2026.}} 
\mainmatter              % start of the contributions

\mainmatter   
\titlerunning{Online Drone Coverage of Targets on a Line}  % abbreviated title (for running head)
\author{Stefan Dobrev\inst{1} \and Konstantinos Georgiou \inst{2} \and Evangelos Kranakis \inst{3} \and Danny Krizanc\inst{4}   \and Lata Narayanan\inst{5} \and Jaroslav Opatrny\inst{5} \and Denis Pankratov\inst{5} \and Sunil Shende\inst{6}}
\authorrunning{$\;$} % abbreviated author list (for running head)
\institute{Slovak Academy of Sciences, Bratislava, Slovakia \email{stefan.dobrev@savba.sk} 
\and 
Dept. of Math., Toronto Metropolitan University, Toronto, Canada \email{konstantinos@torontomu.ca}
\and School of Computer Science, Carleton University, Ottawa, Canada \email{evankranakis@gmail.com}
\and
Dept. of Math. and Computer Science,
Wesleyan University, Middletown CT, USA, USA
\email{dkrizanc@wesleyan.edu}
\and
Dept. of Comp. Sc. and Soft. Eng., Concordia University, Montreal, QC,Canada
\email{lata@cs.concordia.ca, opatrny@cs.concordia.ca}
\and Department of Computer Science, Rutgers University, Camden, USA
\email{sunil.shende@rutgers.edu}}

\usepackage{fullpage}
\usepackage{amsmath,amsfonts,amssymb,amscd,amstext}
\usepackage{algorithmicx, algorithm,algpseudocode}
\usepackage{comment}
\usepackage{xspace}
\usepackage{multirow}
\usepackage{subcaption}
\usepackage{mathtools}
\usepackage{mathrsfs}
\usepackage{latexsym}
\usepackage{color}
\usepackage{subcaption}
\usepackage[usenames,dvipsnames]{xcolor}
\usepackage{pdfpages}
\usepackage{url, enumerate}
\usepackage{hyperref}

\usepackage{tikz}
\usetikzlibrary{angles,quotes,shapes,calc,intersections,math,through,backgrounds}

\usepackage{epstopdf}
\usepackage{epsfig}

\long\def\IGNORE#1{}

\def\MaxHedge{{\sc MaxHedge}\xspace}

\newcommand{\halfline}{\overrightarrow}

\newcommand{\ALG}{\textsc{ALG}\xspace}
\newcommand{\OPT}{\textsc{OPT}\xspace}
\newcommand{\FC}{\textsc{FC}\xspace}
\newcommand{\SU}{\textsc{Straight-Up}\xspace}
\newcommand{\Greedy}{\textsc{Greedy}\xspace}
\newcommand{\FA}{\textsc{$\beta$-Hedge}\xspace}%{\textsc{Fixed-Angle $(\beta)$ Hedging}\xspace}

\definecolor{DarkGreen}{rgb}{0.0, 0.5, 0.0}

\newcommand{\bs}[1]{\boldsymbol{#1}}

\newcommand{\reals}{\mathbb{R}}

\newcommand{\coss}[1]{\cos\left(#1\right)}

\newcommand{\arctann}[1]{\arctan\left(#1\right)}
\newcommand{\sinn}[1]{\sin\left(#1\right)}
\newcommand{\tann}[1]{\tan\left(#1\right)}

\newcommand{\ignore}[1]{}

\newcommand{\dd}{\,\mathrm{d}} % Defines \d as an upright differential with proper spacing

\begin{document}

\maketitle

\begin{abstract}
We study a problem of online targets coverage by a drone or a sensor that is equipped with a camera or an antenna of fixed half-angle of view $\alpha$. The targets to be monitored appear at arbitrary positions on a line barrier in an online manner. When a new target appears, the drone has to move to a location that covers the newly arrived target, as well as already existing targets. The objective is to design a coverage algorithm that optimizes the total length of the drone's trajectory.  Our results are reported in terms of an algorithm's competitive ratio, i.e., the worst-case ratio (over all inputs) of its cost to that of an optimal offline algorithm.

In terms of upper bounds, we present three online algorithms and prove bounds on their competitive ratios for every $\alpha \in [0, \pi/2]$. 
The best of them,   called  \FA is significantly better than the other two   for $\pi/6 < \alpha < \pi/3$. In particular, for $\alpha=\pi/4$, its worst  case,   \FA  has competitive ratio $1.25$,
while the other two  have competitive ratio $\sqrt{2}$. 
Finally, we prove a lower bound on the competitive ratio of online algorithms for a drone with half-angle $\alpha \in [0, \pi/4]$; this bound  is a  function of $\alpha$ that achieves its maximum value at $\alpha = \pi/4$ equal to $(1+\sqrt{2})/2 \approx 1.207$.

\vspace{.5cm}
{\bf Key words and phrases. } 
Competitive ratio, Online algorithm, Coverage algorithm, Drone. 
\end{abstract}

% Redefine subfigure numbering format
\renewcommand\thesubfigure{\thefigure\alph{subfigure}}

% Optional: to enclose labels in parentheses like (10a)
\captionsetup[subfigure]{labelformat=simple, labelsep=none}
\renewcommand{\thesubfigure}{(\thefigure\alph{subfigure})}

%\newpage
\section{Introduction} \label{sec:intro}

We consider a problem of {\em path planning for online target coverage} by a drone: Given a sequence of target points that appear \textit{online} in sequence on the ground, determine an optimal trajectory of the drone so that it can cover the collection of points seen so far. Each successive point, called a \textit{request}, must be \textit{serviced} by the drone by moving the drone into a position that covers the new request as well as all previous requests. 

We study a simple model of a drone that can move and hover aerially at any given height above ground. The drone is equipped with a camera or scanning sensor that can see at any distance and scans the ground below. We adopt a camera geometry that has been studied both experimentally and theoretically (see, for instance, \cite{Longmore2017}): the camera points straight at the ground and has a \textit{fixed field-of-view} defined by the \textbf{half angle-of-view} $\alpha$ with respect to the vertical axis. 
%as shown in Figure \ref{fig:3dimage}. 
We refer to the angle $2 \alpha$ as the \textbf{scanning angle} of the drone. 
The ground area scanned by the drone from a given height is defined by its {\em scanning cone} where the apex of the cone is the current position of the drone. This area is said to be \textit{covered} or serviced by the drone. 

We study the case where the targets/requests to be covered {\em all appear on a line.} Without loss of generality, we assume that the line is the $x$-axis. The drone is initially at position $(0,0)$ and can move vertically and horizontally, to a position that covers all of the requests received so far.
Since $\alpha \geq \pi/2 $ would imply that at any height $>0$, the drone covers the entire line, we assume that $\alpha < \pi/2$.
Clearly, a drone at a height $h$ above the ground, covers a segment on the line with length 
\makeatletter
\@ifundefined{myconfversion} {
$2h\tan(\alpha)$ (see Figure~\ref{fig:2dimage}).
}
{
$2h\tan(\alpha)$ .
}
\makeatletter
Note that the drone can expand and/or translate its coverage  by increasing its height or by moving parallel to the ground. 

\makeatletter
\@ifundefined{myconfversion} {
\begin{figure}[ht]
\begin{center}
   \epsfig{width=7cm,file=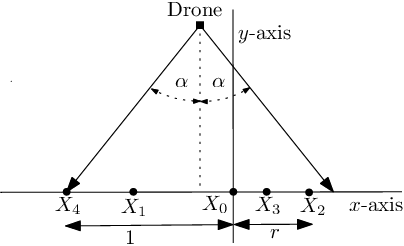}
\end{center}  
 \caption{Two dimensional representation of our problem with $X_0 = (0,0)$ and arrivals $X_1,X_2,X_3,X_4$ revealed one at a time in this order.}
\label{fig:2dimage}
\end{figure}
}
{
% Omit figure in conference version 
}
\makeatother

Displacing a drone from one position to another consumes energy; we will assume that the energy consumed by the drone is proportional to the total length of its trajectory. This is standard in the literature and similar cost measures have been used in ~\cite{shivgan2020energy,zorbas2016optimal}. The goal of our algorithms for drone coverage is to minimize the total sum of displacements of the drone needed to cover all the requests. If all the points to be covered are known in advance, then the optimal (offline) algorithm would simply move directly to the closest point that covers all the points.

\subsection{Our contributions}

We prove upper and lower bounds for the competitive ratio of {\em online}  algorithms for a single drone trajectory that covers target points arriving one at a time on a line. 

We start with presenting two simple and natural algorithms that have very good performance for some ranges of $\alpha$. In the \SU algorithm, in response to a new request, the drone simply moves straight up from its initial location exactly as much as needed to cover all existing points. In the \Greedy algorithm, in response to the new request, the drone moves to the {\em closest} point that would ensure coverage of all requests. 
Observe that if the requests arrive symmetrically on both sides of the initial location of the drone, then \SU is a good approach, while if the request sequence is asymmetric with respect to the initial location, then \Greedy is a better approach. 
This leads us to define, in Section \ref{FA Algo}  our main online algorithm, called \FA, that {\em hedges} between the two extremes of \SU and \Greedy algorithms, and travels at a fixed angle $\beta$ to reach the closest feasible point that covers all requests. The angle $\beta$ that we choose is a function of the scanning half-angle $\alpha$ of the drone. We derive an explicit formula that
gives for each $\alpha$ the value of $\beta$ that yields the best worst-case performance, and prove tight bound on the competitive ratio of \FA algorithm for every $\alpha \in [0, \pi/2]$. 
We prove tight bounds on the competitive ratio of all these algorithms for every $\alpha \in [0, \pi/2]$. 

For $\pi/3 \leq  \alpha < \pi/2$, it turns out that \FA is the same as \SU and for for $0 <  \alpha \leq \pi/6$, \FA is the same as \Greedy. However, in the range $\pi/6 < \alpha < \pi/3$, \FA does better than both \SU and \Greedy, achieving its worst performance at $\alpha = \pi/4$. For $\alpha = \pi/4$, both 
\SU and \Greedy have competitive ratio $\sqrt{2}$, while \FA has competitive ratio $1.25$.
Finally, in Section \ref{sec:LB} we derive a non-trivial lower bound on the competitive ratio of any online algorithm for any value of $\alpha \in [0,\pi/4]$.  The best lower bound is for $\alpha  =\pi/4$, where the competitive ratio of any online algorithm is $\geq (1+ \sqrt{2})/2\approx 1.207$. 
Figure~\ref{fig:all_results} summarizes our results. 

\makeatletter
\@ifundefined{myconfversion}{
  % Code if undefined
}{Unfortunately, due to space constraints, we are unable to provide a fuller exposition of related work, as well as many of the intricate details in our proofs and explanations. 
These details can be found in the full version of the paper. %attached as an appendix.
} %only puts it in the conference version
\makeatother

\begin{figure}[ht]
  \centering
  \begin{subfigure}[t]{0.48\textwidth}
    \centering
    \includegraphics[width=\textwidth]{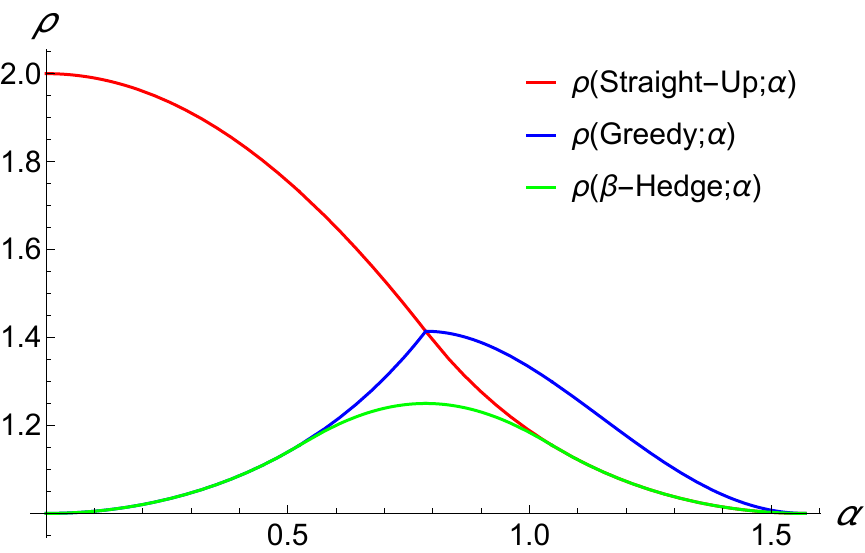}
  \end{subfigure}
  \hfill
  \begin{subfigure}[t]{0.48\textwidth}
    \centering
    \includegraphics[width=\textwidth]{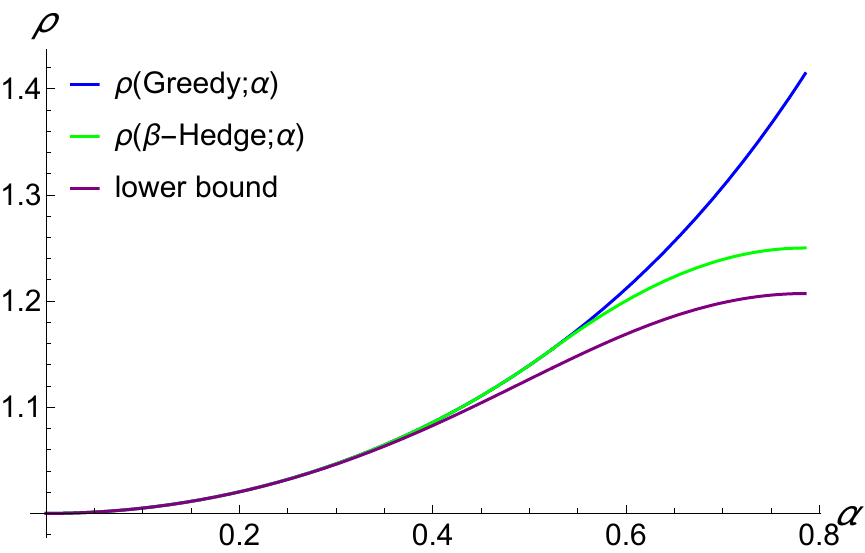}
  \end{subfigure}
  \caption{This plot depicts competitive ratios of algorithms studied in this paper as functions of $\alpha \le \pi/2$, as well as the lower bound (which holds for $\alpha \le \pi/4)$.}\label{fig:all_results}
\end{figure}

\subsection{Related work}

Area coverage, barrier coverage, and target coverage  are key application areas of sensor networks and have been extensively studied \cite{Survey2014,Coverage2001,Survey2011,Wang2009}. In area coverage, a geometrical region is to be covered/monitored by a set of sensors, and in barrier coverage, the sensors are to cover a barrier to a region. In target coverage, the goal is to cover a set of target points. The sensors are generally assumed to be isotropic, that is, they have a circular sensing range. 

Shih {\em et al.} \cite{Shih2010} were the first to study the problem of barrier coverage with {\em camera sensors}, which differ from traditional sensors in that their sensing range is not circular but instead a sector of angle $\theta$, sometimes referred to as its field of view. In their work, sensors are stationary, their viewing area is fixed, and the problem is to form a connected barrier into a rectangular region.  Ma {\em et al.} \cite{Ma2012} study a similar problem, but they assume that the cameras of sensors can be oriented, and seek to find the minimum number of sensors needed to create a barrier.

Johnson and Bar-Noy \cite{johnson2011pan} study the {\em pan-and-scan} problem, which is the problem of target coverage by a set of cameras. They assume that the targets are points in the plane, and stationary sensors have cameras of fixed viewing angle that can be "panned" or oriented. Additionally, the focal length of cameras can be changed in order to improve the sensing quality. The objective is to choose orientations and focal lengths so that the overall sensing quality of targets is maximized; the authors show that the problem is NP-hard in the general case, and give approximation algorithms as well as polytime algorithms for special cases. 
Yao {\em et al.} \cite{YGLP2021} study the problem of camera coverage of point targets on a line, with the objective of minimizing the number of camera sensors needed. Though the target points lie on a line, their work differs from ours in using multiple stationary sensors with fixed-angle cameras that can be rotated in order to achieve coverage. 
Munishwar {\em et al.} \cite{Munishwar2010} study coverage of a set of target points on a plane by a set of stationary camera sensors that can pan to one of a discrete set of pans; their goal is to maximize the  number of target points that can be covered. 
Neishaboori {\em et al.} \cite{Neishaboori2014} study the problem of finding the minimum number of cameras with fixed angle and range to cover a given set of targets in the 2-dimensional plane and give heuristics to solve the problem.

%%%% mobile sensors 
In the above papers, the sensors with cameras are stationary, and cannot be moved. The problem of minimizing the movement of (disk) sensors to achieve target coverage was studied in \cite{Chen2016,Liao2015,Liao2012}, but we are not aware of any work on target coverage with movable camera sensors.

There is also considerable research on point coverage by drones. Saeed et al. \cite{Saeed2019} study the {\em oriented line segment coverage problem}, and show that computing a minimum set of drones equipped with cameras that can be oriented to cover the given set of targets (represented by oriented line segments) is NP-hard and hard to approximate with a better than $O(\log n)$ approximation ratio. Zorbas et al.~\cite{zorbas2016optimal} introduces the minimum cost drone location problem in a two-dimensional terrain. The number of drones and the total energy consumption to cover all targets are the two cost metrics considered. Their drones have a circular sensing range, and can change their altitude but pay a corresponding energy cost. The aim is to find drone locations that minimize the cost while ensuring the surveillance of all the targets. Savkin et al.~\cite{savkin2019method} considers finding positions for drones for maximizing the quality of coverage, assuming the drones maintain a connected communication graph with the ground nodes. It proposes a distributed optimization model and develops a coverage maximizing algorithm. It shows that the proposed algorithm converges to a local maximum in a finite number of steps. 
%%% Flying drones for area coverage
Some previous work has addressed questions about how single drones or swarms of drones can cover a specified area of interest using static path-planning. For instance, Bezas et al. \cite{Bezas2022} studied how best to cover given \textit{points of interest} using particular families of movement trajectories in the motion plan. 
Caillouet et al. \cite{caillouet2018optimization} considers an optimization problem for covering a set of mobile sensors with a fleet of flying drones for continuously monitoring the sensors and reporting information to a fixed base station for efficient data collection.

%Energy
Perhaps the theoretical work that comes closest to ours is Convex Body Chasing by a guarding point \cite{Friedman,Bubeck}: a sequence of convex bodies is presented online and the goal is to move the guard with minimal total movement so that it always remains within the current convex body. 
As far as we are aware, there is no previous work on {\em online} target coverage by sensors or drones, and we are the first to consider this problem. Our work differs from previous work in that our drone/sensor is mobile, it has a camera with a fixed angle of view, and the point targets to be monitored arrive in an {\em online} fashion.

\section{Preliminaries} \label{sec:prelims}
    \subsection{Notation}
We  study the version of the online drone coverage problem where the requests appear on the $x$-axis in a two dimensional plane. The drone can move to any position in the upper part of the plane and the $y$ coordinate of the drone specifies the height of the drone above the ground.

We will use uppercase  letters for points in the plane, lowercase letters for coordinates, and boldface capital letters for sequences or sets of points. Given points $P_1 = (x_1, y_1)$ and $P_2 = (x_2, y_2)$,  we use $P_1P_2$ to denote the line segment between $P_1$ and $P_2$, and $|P_1P_2| = \|P_1 - P_2\|_2 = \sqrt{(x_1-x_2)^2 + (y_1-y_2)^2}$ for the Euclidean distance between $P_1$ and $P_2$.
 The {\em coverage area} of a drone with scanning angle $2\alpha$, located at point $T=(t_x,t_y)$, is defined as the area of the triangle $\Delta P_l T P_r$ with apex $T$ and its base points 
$P_l = (t_x-t_y \tan\alpha,~0)$ and $P_r=(t_x+ t_y \tan \alpha,~0)$ on the $x$-axis.

We assume that the drone is initially located on the ground at point $(0,0)$. 
The input to the drone coverage problem is the scanning angle $2\alpha$ of the drone, and a sequence  of points 
$\bs{X}=\{X_0=(0,0), X_1, X_2, X_3 \ldots X_n \}$ on the $x$-axis, where point $X_i=(x_i,0)$ specifies the position of the $i$-th  
request. 
A solution to the drone coverage problem is a sequence $\bs{P}=\{P_0=(0,0), P_1,P_2, \ldots, P_n\}$ of points, where $P_i$ is the location of the drone providing the coverage of requests $X_0,X_1,\ldots, X_i$.

Let $l_i = \min (x_0,x_1,\ldots,x_i )$ and $r_i = \max(x_0,x_1,\ldots,x_i)$ be the extreme $x$-coordinates among  $X_0,X_1,\ldots,X_i$, i.e., the interval spanned by $X_o,X_1,\ldots,X_i$ is $(l_i,0)(r_i,0)$  and hence the position $P_i$ of the drone must be such that it covers this interval. 
Consequently, if request $X_{i+1}$ is within the interval, then it is trivially covered by
the drone from position $P_i$ and the drone need not move.
We call  such a request {\em redundant}. 
If request $X_{i+1}$ lies outside interval $(l_i,0)(r_i,0)$  
then the drone might need to  move to position $P_{i+1}$ different from $P_i$ 
to cover the expanded interval spanned by the current requests.
To provide a meaningful comparison between outputs of different inputs, 
we scale and possibly flip $\bs{X}$ so that $l_n=-1$ and $|l_n|\geq r_n$.
Thus we assume in the rest of the paper that any given
 input sequence is {\em good}, i.e. without redundant requests, $X_0=(0,0)$, and scaled.

The cost of moving the drone from  position $P_i$ to $ P_{i+1}$  is assumed to be $|P_i P_{i+1}|$, and
we define the {\em net cost} of an algorithm \ALG with the drone positions   $P_0,P_1,\ldots, P_n$
to be
$$\ALG(\bs{X}; \alpha) = \sum_{i=0}^{n-1} | P_i P_{i+1} |.$$
In the online problem, an adversary controls the sequence of requests while \ALG must make local, irrevocable decisions about where to move the drone next to ensure coverage of all requests seen thus far. 
We use the standard measure of {\em competitive ratio} to measure the performance of \ALG:  
it is the worst-case ratio (over the inputs) of the cost incurred by \ALG divided by the corresponding cost incurred by an \textit{optimal offline algorithm} \OPT~ that knows the entire input instance in advance. Formally, 
$$\rho(\ALG;\alpha) = \max \frac{\ALG(\bs{X}; \alpha)}{\OPT(\bs{X}; \alpha)}.$$
Consider a target point $X$ on the $x$-axis and a drone with scanning angle $2\alpha$ that seeks to cover it. The {\em feasibility cone} of $X$, denoted $\FC(X)$, is defined as the locus of drone locations in the plane that cover $X$: it
is the upward-facing cone with apex $P$ whose flanking edges form angles $-\alpha$ and $ +\alpha$ with respect to the $y$ axis. 
In what follows, we will abuse notation by referring to the point $(x, 0)$ on the  $x$-axis as the point $X$.
  
For sequence  $\bs{X} = (X_0, X_1, \ldots X_n)$ of requests, the feasibility cone of $\bs{X}$, defined by
$\FC(\bs{X}) = \bigcap_{0 \leq j \leq n} \FC(X_j),$ is
the intersection of all the feasibility cones of requests in $\bs{X}$. Since $X_n$ is non-redundant, $X_n$ is one of the extreme requests. Let $X_j$ be the other extreme request.  Hence
$\FC(\bs{X}) = \FC(X_{j}) \cap \FC(X_{n}).$
For any $0 \leq j \leq n$, we denote by $T_j$ the apex of the feasibility cone $\FC((X_0,X_1,\ldots,X_j))$. 
\begin{lemma}
For any $0 \leq i \leq n$, consider the path $\mathcal{P}_i = T_0, T_1,  \ldots ,T_i$ that proceeds sequentially through the apexes of the feasibility cones of the first $i$ subsequences of $\bs{X}$. Then the total length $\|\mathcal{P}_i\|$ of this path  equals $|(x_i,0)T_i|$, the distance between points $(x_i,0)$ and $T_i$.
\label{isoceles}
\end{lemma}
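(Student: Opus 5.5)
Plan: prove the claim by induction on $i$, using the explicit geometry of the apexes $T_j$. With $l_j,r_j$ as defined, the feasibility cone of $(X_0,\dots,X_j)$ is the intersection of the cone of the leftmost and the cone of the rightmost request. Its apex is therefore
\[
T_j=\Bigl(\tfrac{l_j+r_j}{2},\ \tfrac{r_j-l_j}{2\tan\alpha}\Bigr).
\]
This is the point whose coverage triangle has base exactly $[l_j,r_j]$. In particular $(l_j,0)$, $(r_j,0)$ and $T_j$ form an isosceles triangle with apex angle $2\alpha$ and legs of common length $L_j=\frac{r_j-l_j}{2\sin\alpha}$. Since the input is good, every request $X_j$ with $j\ge 1$ is non-redundant, so $x_j\in\{l_j,r_j\}$, and $T_j$ lies on the flank of $\FC(X_j)$ issuing from $(x_j,0)$. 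Hence $|(x_j,0)T_j|=L_j$, which is the quantity to be matched.

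Base case: $i=0$ gives $T_0=X_0=(0,0)$, so the path has length $0$ and so does $|(x_0,0)T_0|$. For the inductive step, assume $\|\mathcal P_{i}\|=L_{i}$ and consider $X_{i+1}$. By symmetry (reflect the $x$-axis) assume $X_{i+1}$ extends to the right, so $x_{i+1}=r_{i+1}>r_i$ and $l_{i+1}=l_i$.

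The key observation is that $T_i$ and $T_{i+1}$ both lie on the right flank of $\FC(X_{l})$, where $X_l=(l_i,0)$ is the fixed left extreme. This flank is the ray from $(l_i,0)$ at angle $\alpha$ to the right of vertical. Both apexes lie on it because each is the intersection of that left-extreme flank with the opposite flank of the current right extreme. Consequently $|T_iT_{i+1}|=|(l_i,0)T_{i+1}|-|(l_i,0)T_i|=L_{i+1}-L_i$.

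The remaining issue, and the only delicate one, is matching the left and right legs. The induction hypothesis is stated for $|(x_i,0)T_i|$, but we need $|(l_i,0)T_i|$. These agree because the triangle is isosceles: both legs have length $L_i$ regardless of which side $x_i$ lies on. Therefore
\[
\|\mathcal P_{i+1}\|=L_i+(L_{i+1}-L_i)=L_{i+1}=|(x_{i+1},0)T_{i+1}|,
\]
which completes the induction.

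I will also state that the path $\mathcal P_i$ is a zigzag whose segments alternate between flanks according to the side of each arrival. For $i=1$ the base is degenerate ($l_0=r_0=0$), but the same formula holds since $T_0=(0,0)$ lies on both flanks.
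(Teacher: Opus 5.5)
Your proof is correct and is essentially the paper's argument: induction on $i$, using that consecutive apexes $T_i, T_{i+1}$ lie on the same flank issuing from the unchanged opposite extreme, together with the isosceles property of the apex triangle to replace $|(x_i,0)T_i|$ by the leg from that extreme. Your explicit formula for $T_j$ and your justification of the collinearity and ordering along the flank spell out what the paper's chain $|X_jT_{i-1}|+|T_{i-1}T_i|=|X_jT_i|=|X_iT_i|$ uses implicitly.
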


\begin{proof} 
See the solid  and the dotted blue lines in Figure \ref{fig:mother_of_all_figures} for an intuition of the validity of the claim. We prove it formally
by induction.
The base case is trivial since $T_0 = (x_0,0)$. For any $i \geq 1$, consider the triangle $\Delta_i = \Delta X_j T_i X_{i}$ where $X_j$ is the other extreme point in the subsequence. The triangle is isosceles since $\angle X_{j}= \angle X_i = \pi/2 - \alpha$.
Therefore, $|X_{i}T_{i}| = |X_{j} T_{i} |.$ Combining this with the inductive hypothesis, we conclude that the length of the path $\mathcal{P}_i$ equals 
\begin{align}
\|\mathcal{P}_{i-1}\| +  |T_{i-1} T_{i}|  = ~|X_{j} T_{i-1}| + |T_{i-1} T_{i}| 
 = ~|X_{j} T_i | 
 = ~ |X_{i} T_i |
\end{align} 
\end{proof}

\begin{lemma}
\label{lem:offile_performance}
Consider the set of requests $\bs{X} = \{X_0, X_1, \ldots, X_n\}$ that span the interval $[-1,x_j]$  on the $X$-axis, where $X_j$ is the rightmost request. Let $r = |x_j-x_0|\leq 1$.
Then, the 
cost of the optimal {\em offline} algorithm is
$$
\OPT(\bs{X}; \alpha) = 
\begin{cases}
\frac{1}{2}\sqrt{(1+r)^2\cot^2{\alpha} +(1-r)^2} &
\mbox{ if  }  \alpha \leq \pi/4  \mbox{ or } r\geq \frac{\tan^2\alpha -1}{1+\tan^2\alpha},\\
\cos{\alpha} & {\mbox otherwise}.
\end{cases}
$$
\label{lemma:opt}
\end{lemma}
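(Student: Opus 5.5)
The plan is to reduce the offline problem to a Euclidean projection and then compute that projection in two cases.

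\emph{Reduction.} The feasibility cones are nested: $\FC(\bs{X}) \subseteq \FC((X_0,\ldots,X_i))$ for every $i$. Hence an offline algorithm that knows the whole input can move at the first step in a straight line from $X_0=(0,0)$ to a point $Q$ of $\FC(\bs{X})$ and then stay at $Q$. That position covers every prefix of requests, so the solution is valid.

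Conversely, any valid solution ends at some $P_n \in \FC(\bs{X})$. By the triangle inequality its cost is at least $|X_0P_n|$. Therefore $\OPT(\bs{X};\alpha)$ equals the Euclidean distance from the origin to the closed convex set $\FC(\bs{X}) = \FC(X_{\ell})\cap\FC(X_j)$, where $x_\ell=-1$ and $x_j=r$.

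\emph{The set as two half-planes.} The drone at $(x,y)$ covers $[-1,r]$ if and only if
\[
x - y\tan\alpha \le -1 \quad\text{and}\quad x + y\tan\alpha \ge r .
\]
Since $X_0$ is a request, $r\ge 0$. The apex is
\[
T=\Bigl(\tfrac{r-1}{2},\ \tfrac{(1+r)\cot\alpha}{2}\Bigr),
\]
and $|X_0T| = \tfrac12\sqrt{(1+r)^2\cot^2\alpha+(1-r)^2}$, which is the first formula. The origin violates the first constraint because $0>-1$. So the nearest point of the cone is one of two candidates: the orthogonal projection of the origin onto the boundary line $x-y\tan\alpha=-1$, if that projection also satisfies the second constraint, or otherwise the apex $T$.

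\emph{Case analysis.} The projection onto that line is obtained by moving along the normal $(1,-\tan\alpha)$ by distance $1/\sqrt{1+\tan^2\alpha}=\cos\alpha$. This gives the foot point $F=(-\cos^2\alpha,\ \sin\alpha\cos\alpha)$.

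Substituting $F$ into the second constraint gives $x+y\tan\alpha = \sin^2\alpha-\cos^2\alpha$. Thus $F\in\FC(\bs{X})$ exactly when
\[
r \le -\cos 2\alpha = \frac{\tan^2\alpha-1}{1+\tan^2\alpha}.
\]
\begin{itemize}
\item If $r$ is below this threshold, $F$ is the nearest point, because it is the nearest point of a half-plane containing the cone. Then $\OPT=\cos\alpha$.
\item If $r$ is at or above the threshold, the closest point of the half-plane is infeasible, so the nearest point lies on the other boundary ray or at $T$. One checks that the projection onto the other line falls outside the first half-plane; the origin lies to the left of the ray from $X_j$, so its foot point is too far right. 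Hence the nearest point is the apex $T$.
\item When $\alpha\le\pi/4$ the threshold is $\le 0\le r$, so only the apex case occurs. This matches the stated case split; at the threshold value both formulas agree.
\end{itemize}

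The main obstacle is this last verification: confirming that the nearest point is never in the interior of the second boundary ray. I would handle it by the standard fact about the nearest point of an intersection of two half-planes. If the projection onto each single boundary line is infeasible for the other constraint, then the minimizer is the vertex $T$. This reduces to checking one sign, namely that the foot point on the line $x+y\tan\alpha=r$ satisfies $x-y\tan\alpha>-1$. That follows from $r\ge0>-\cos^2\alpha$ by the same computation as for $F$.
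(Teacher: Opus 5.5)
Your proof is correct and follows the same route as the paper. Both arguments take \OPT to be the distance from $X_0$ to $\FC(\bs{X})$, with the nearest point being either the foot of the perpendicular on the left boundary line (distance $\cos\alpha$) or the apex $T$, depending on whether that foot lies beyond $T$. You carry this out with half-plane inequalities instead of the paper's angle comparison $\gamma\ge\alpha$, and you also justify two steps the paper leaves implicit: the reduction to a projection (nested cones plus the triangle inequality) and the exclusion of the right boundary ray.

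One small correction: at the foot point $(r\cos^2\alpha,\, r\sin\alpha\cos\alpha)$ on $x+y\tan\alpha=r$, the first constraint evaluates to $r\cos 2\alpha$. The needed inequality $r\cos 2\alpha\ge -r\sin^2\alpha>-1$ follows from $0\le r\le 1$ and $\alpha<\pi/2$, not from $r\ge 0>-\cos^2\alpha$ as you wrote.
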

\begin{figure}[ht]
  \begin{subfigure}[t]{0.5\textwidth}
   \includegraphics[scale=0.65]{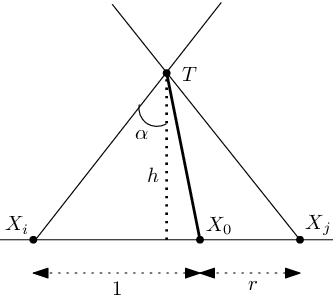}
  \end{subfigure}
  \hfill
 \begin{subfigure}[t]{0.5\textwidth}
    \includegraphics[scale=0.65]{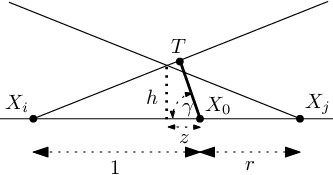}
  \end{subfigure}
  \caption{The cost of the offline algorithm is segment from $X_0$ to $T$. Points $X_i$ and $X_j$ are the extreme points in $\bs{X}$.
  }
\label{fig:lemma:opt}
\end{figure}
\makeatletter
\begin{proof}
In the optimal offline algorithm, the drone goes in a straight line from 
$X_0$ to the \textit{point closest to it} on the feasibility cone $\FC(\bs{X})$. There are two cases to consider depending on the angle $\alpha$ and the value of $r$ as shown in Figure \ref{fig:lemma:opt}. 

Let $X_i$ be the leftmost  point among the requests
and $T=T_n$ be the apex of the feasibility cone $\FC(\bs{X)}$ of %scaled 
height (or $y$-coordinate)  $h$. The triangle $X_i T X_j$ is isosceles with equal base angles $\pi/2 - \alpha$, and $\angle  T  = 2 \alpha$. Hence, the point that forms the base of the altitude dropped perpendicularly from $T$ to the $x$-axis is equidistant from $X_i$ and $X_j$ and is at 
distance $\frac{1-r}{2}$ from $X_0$. 

If  $\alpha \leq \pi/4$ as depicted in Figure \ref{fig:lemma:opt}(a), it is easy to see that $\angle x_i T x_0 \leq 2 \alpha \leq \pi/2$ and therefore, the point of $\FC(X)$ closest to $X_0$ is indeed $T$. The 
distance from $X_0$ to $T$ is 
\begin{align*}
\sqrt {\left(\frac{1-r}{2}\right)^2 + h^2}  = \sqrt {\left(\frac{1-r}{2}\right)^2 + \left(\frac{1+r}{2} ~\cot \alpha \right)^2} 
 = \frac{1}{2}\sqrt{(1+r)^2\cot^2{\alpha}+(1-r)^2}
\end{align*}

If  $\alpha \geq \pi/4$, then the point of $\FC(\bs{X})$ closest to $x_0$ 
depends on the value of $r$ and need not be $T$. From Figure \ref{fig:lemma:opt}(b), one can surmise that as long as the base of the perpendicular dropped from $X_0$ onto the half-ray $\stackrel{\longrightarrow}{X_i T}$  lies within the segment $X_i T$, the point $T$ is indeed the point on $\FC(\bs{X})$ that is closest to $X_0$. For this to happen, angle $\gamma = \angle X_i X_0 T \geq \alpha$ must hold. Equivalently, when $\tan \alpha \leq \tan \gamma = \frac{h}{z}$ then $T$ is the closest point on $\FC(\bs{X})$ to $x_0$, where $z=(1-r)/2$. Simplifying this, we get the condition
%\begin{align*}
 ${1+r} \geq (1-r)\tan^2 \alpha$.    
%\end{align*}
Notice that when $r=1$ the previous condition is satisfied and thus we assume below that $r\neq 1$.
The length of the perpendicular from $X_0$ to the half-ray $X_i T$ is $\sin (\pi/2 - \alpha) = \cos \alpha$, and in all other cases, that would be the shortest distance from $X_0$ to $\FC(\bs{X})$. 

If in the triangle $X_iT X_0$ the angle $\gamma$ at $X_0$ is less than $\alpha$ then the angle at $T$ is more than $\pi/2$. This occurs when
$h/z <\tan \alpha $ where $h=((r+1)\cot \alpha)/2$ is the height and $z=(1-r)/2$ as in Figure \ref{fig:lemma:opt}. Thus, it occurs when $(r+1)/(1-r) <\tan^2\alpha$.
In this case, the point of the $\FC(\bs{X})$ cone closest
to $X_0$ is not $T$ but the point reached by 
traveling  from $X_0$ perpendicularly  to the right edge of the $\FC(\bs{X})$ cone, see Figure \ref{fig:lemma:opt}(b), and the distance covered by the drone is equal to $\sin{(\pi/2 -\alpha)} =\cos{\alpha}$. 
Otherwise, the point closest to $X_0$ remains $T$ at distance   $\tfrac{1}{2}\sqrt{(r+1)^2\cot^2{\alpha}+(1-r)^2}$.
When $r=0$ and $\alpha \leq \pi/4$ then 
$\tfrac{1}{2}\sqrt{(r+1)^2\cot^2{\alpha}+(1-r)^2}=$ $\frac{1}{2}
\sqrt{\tfrac {\cos^2{\alpha}+\sin^2{\alpha}}{\sin^2{\alpha}}}=\tfrac{1}{2\sin{\alpha}}$.
When $r=0$ and $\alpha > \pi/4$ then $(r+1)/(1-r)=1$ and $\tan{\alpha}>1$ and so the cost is equal to $\cos{\alpha}$.   
\end{proof}
\subsection{Two simple online algorithms}
Given a %\denis{non-redundant} 
good instance $\bs{X}=(X_0,X_1,\ldots,X_n)$ of the   Drone Coverage Problem, 
the execution of an online algorithm for this problem consists of $n+1$ rounds. 
At round $0$ %, as stated above,  
the drone is at location $P_0=X_0=(0,0)$. At round $i$, $1\leq i\leq n$, the algorithm calculates, using  a specific strategy,  the next position $P_i$ of the drone based on
the known requests   $(X_0,X_1,\ldots,X_{i-1})$, the newly revealed request $X_i$, and the present position $P_{i-1}$. This 
increases the cost of the algorithm by $|P_iP_{i-1}|$.
Clearly, several strategies can be considered. We first specify below two  straight-forward  strategies and determine 
 their competitive ratios.  
These  
are used for comparison to our main \FA algorithm defined in Section \ref{FA Algo}.  \\

\noindent
{\bf The \SU~algorithm}\\
In  round $i$ of the   \SU algorithm, if the coverage area of the drone at its present location doesn't cover the next request $X_i$,   the drone  moves  vertically, or we can say {\em straight up} from its present location $P_{i-1}$ on the $y$ axis until it reaches the feasibility cone  of  $X_i$.   Thus, the $x$-coordinate of the drone remains fixed at $0$, and the next position $P_i$ of the drone is the smallest point on the intersection of the $y$-axis  with the feasibility cone of sequence $(X_0,X_1,\ldots,X_i)$, see the red segment in
Figure~\ref{fig:mother_of_all_figures}. 
We can see that the cost of the 
\SU algorithm depends only on the value of the extreme $x$-coordinate among the request points in sequence $\bs{X}$. Since the input is assumed to be good, the extreme $x$ coordinate is equal to $-1$, and the cost doesn't depend on $r$, the largest positive $x$ coordinate in the input instance, which gives the following lemma. 

\begin{lemma} 
\label{lemma:su}
The cost of the \SU algorithm for a given  input instance with parameters $\alpha, y$ is equal to    
\begin{equation}
    \SU(\alpha) = \cot{\alpha} \end{equation}
\end{lemma}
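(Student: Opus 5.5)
The plan is to track the drone's height under \SU and observe that its total movement is monotone and vertical, so the cost collapses to the final height. Under \SU the drone always has $x$-coordinate $0$, so every move is along the $y$-axis. I will first show that the drone only ever moves upward. A drone at $(0,h)$ covers exactly the interval $[-h\tan\alpha,\, h\tan\alpha]$ on the $x$-axis. It therefore covers all of $X_0,\ldots,X_i$ if and only if
\[
h \ge \max(|l_i|,|r_i|)\cot\alpha .
\]
So the lowest point of the $y$-axis lying in $\FC((X_0,\ldots,X_i))$ is $P_i=(0,h_i)$ with $h_i=\max(|l_i|,|r_i|)\cot\alpha$. Because $l_i$ is nonincreasing and $r_i$ is nondecreasing in $i$, the sequence $h_i$ is nondecreasing.

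Next I sum the moves. Each step length is $|P_{i-1}P_i| = h_i - h_{i-1}\ge 0$, so the total cost telescopes:
\[
\sum_{i=1}^{n}|P_{i-1}P_i| = h_n - h_0 = h_n ,
\]
using $h_0=0$ since $X_0=(0,0)$.

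Finally I use the normalization of good instances, $l_n=-1$ and $|l_n|\ge r_n$. These give $\max(|l_n|,|r_n|)=1$, hence $h_n=\cot\alpha$. This is independent of $r$ and of the order in which the requests arrive.

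There is no real obstacle. The one point needing care is the monotonicity step: it is what guarantees the path never descends, so that the path length equals the net vertical displacement rather than exceeding it.
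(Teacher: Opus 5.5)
Your proof is correct and follows the paper's approach: the paper observes that \SU climbs vertically along the $y$-axis until it covers the extreme request at $-1$, so its cost is the length $\cot\alpha$ of that vertical segment. Your monotonicity and telescoping argument makes explicit what the paper leaves to the figure.
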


\begin{theorem}
\label{thm:su}
\begin{equation}
    \rho(\SU;\alpha) = \begin{cases} 
    2 \cos{\alpha} & \mbox{ if } \alpha \leq \pi/4,\\
    1/\sin{\alpha} & \mbox{otherwise}.
    \end{cases}
\end{equation}
\end{theorem}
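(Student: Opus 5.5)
The plan is to combine Lemma~\ref{lemma:su} with Lemma~\ref{lemma:opt}. Since \SU always pays exactly $\cot\alpha$ on a good instance, whatever the value of $r$, the competitive ratio is $\cot\alpha$ divided by the minimum of $\OPT(\bs{X};\alpha)$ over $r\in[0,1]$. So the proof reduces to minimizing the offline cost in $r$, separately for the two regimes of $\alpha$.

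\textbf{Case $\alpha\le\pi/4$.} Here Lemma~\ref{lemma:opt} gives $\OPT=\frac12\sqrt{(1+r)^2\cot^2\alpha+(1-r)^2}$ for every $r$. Let $f(r)=(1+r)^2\cot^2\alpha+(1-r)^2$. Then
\[
f'(r)=2(1+r)\cot^2\alpha-2(1-r),
\]
which is nonnegative on $[0,1]$ because $\cot^2\alpha\ge 1$ and $1+r\ge 1-r$. Hence $f$ is minimized at $r=0$. There $\OPT=\frac12\sqrt{\cot^2\alpha+1}=\frac{1}{2\sin\alpha}$, which matches the remark at the end of the proof of Lemma~\ref{lemma:opt}. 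The ratio is then
\[
\frac{\cot\alpha}{1/(2\sin\alpha)}=2\cos\alpha .
\]
This value is attained by the instance with the single request $X_1=(-1,0)$, so the bound is tight.

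\textbf{Case $\alpha>\pi/4$.} Lemma~\ref{lemma:opt} gives $\OPT=\cos\alpha$ when $r<\frac{\tan^2\alpha-1}{\tan^2\alpha+1}$, and the square-root formula otherwise. I need to show that $\cos\alpha$ is the global minimum over $r$.

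The cleanest way is geometric. For every $r$, the offline cost is the distance from $X_0$ to $\FC(\bs{X})$. This set is contained in $\FC(X_{\text{left}})$ with $X_{\text{left}}=(-1,0)$. The distance from the origin to $\FC((-1,0))$ is the perpendicular distance to its right boundary ray, namely $\cos\alpha$. So $\OPT\ge\cos\alpha$ always, with equality at $r=0$, since $0$ lies below the threshold when $\tan\alpha>1$.

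The ratio is therefore
\[
\frac{\cot\alpha}{\cos\alpha}=\frac{1}{\sin\alpha},
\]
attained again by the single request at $-1$.

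The only real subtlety is the regime switch for $\alpha>\pi/4$: I must make sure the square-root branch never drops below $\cos\alpha$. The containment argument above handles this uniformly, so no separate algebra on that branch is needed. If one prefers algebra, at the threshold $r^*$ the two expressions coincide by continuity of the distance function, and $f$ is increasing for $r\ge r^*$ since $f'(r)\ge 0$ there. I expect this step to be routine.
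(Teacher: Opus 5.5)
Your proposal is correct and follows the same route as the paper. Since \SU pays $\cot\alpha$ regardless of $r$, the ratio is maximized where \OPT is minimized, namely at $r=0$, which yields $2\cos\alpha$ and $1/\sin\alpha$ in the two regimes. The one difference is that you actually prove $r=0$ minimizes \OPT: via $f'(r)\ge 0$ when $\alpha\le\pi/4$, and via the containment $\FC(\bs{X})\subseteq\FC((-1,0))$, whose distance from $X_0$ is $\cos\alpha$, when $\alpha>\pi/4$; the paper only asserts this step.
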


\begin{proof}
Observe in Figure \ref{fig:mother_of_all_figures} that the cost of \SU algorithm is the length of the red line equal to $\cot{\alpha}$, and the cost of \OPT is given in Lemma \ref{lemma:opt}.
By Lemma \ref{lemma:su}, the cost of the \SU algorithm does not depend on $r$. On the other hand, as mentioned at the end of the previous section, the cost of the offline algorithm is smallest when $r=0$ and so the ratio will be maximized for $r=0$.
If  $\alpha \leq \pi/4$,  
then    $\rho(\SU;\alpha)=\cot{\alpha}/ (1/(2\sin{\alpha})) = 2\cos{\alpha}$.
If  $\alpha >\pi/4$, 
then    $\rho(\SU;\alpha)=\cot{\alpha}/ \cos{\alpha} = 1/\sin{\alpha}$.
\end{proof}
Theorem \ref{thm:su} shows that the competitive ratio 
of the \SU algorithm is good for large values of $\alpha$, but poor when $\alpha$ is small.\\ 

\noindent
{\bf The \Greedy~algorithm}

\noindent
During round $i$ of the \Greedy algorithm, the drone moves from its present location $P_{i-1}$ in a straight line to the {\em closest point}  on  the feasibility cone  of the sequence $(X_0,X_1,\ldots, X_i)$ having apex $T_i$.  When $\alpha \leq \pi/4$, 
the closest point remains the apex $T_i$ of $FC(X_0,X_1,\ldots,X_i)$, but when $\alpha > \pi/4$, 
the closest point is on the side of the $FC(X_0,X_1,\ldots,X_i)$ cone, reached by traveling perpendicularly to the ray of the feasibility cone at $T_i$. 
%similarly as  in the analyses of the \OPT algorithm. 
Thus, when $\alpha \leq \pi/4$  the drone travels along the edge of the feasibility   cone from $P_{i-1}=T_{i-1}$ to $P_{i}=T_i$,  at angle $\alpha$ or $-\alpha$ to the vertical line. When $\alpha > \pi/4$  the drone travels from $P_{i-1}$ at angle $\pi/2 -\alpha$ or $(\alpha -\pi /2)$ to $P_{i}$ on the side of the feasibility cone with apex $T_i$.

\begin{theorem}
\label{thm:greedy}
\begin{equation}
    \rho(\Greedy;\alpha) = \begin{cases} 
    1/\cos{\alpha} & \mbox{ if } \alpha \leq \pi/4, \\
    (1+2\cos^2\alpha){\sin {\alpha}} & \mbox{otherwise}.
    \end{cases}
\end{equation}
\end{theorem}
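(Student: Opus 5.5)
For $\alpha\le\pi/4$ the description of \Greedy above says that the drone always sits at the apex of the current feasibility cone: $P_i=T_i$ for every $i$. Hence the trajectory of \Greedy is exactly the path $\mathcal{P}_n=T_0,T_1,\ldots,T_n$. By Lemma~\ref{isoceles} its length is $|X_nT_n|$, where $X_n$ is an extreme request. In the isosceles triangle with base $[-1,r]$ (length $1+r$) and apex angle $2\alpha$, the legs have length $|X_nT_n|=\frac{1+r}{2\sin\alpha}$. Dividing by the first case of Lemma~\ref{lemma:opt} gives
\[
\frac{\Greedy}{\OPT}=\frac{1+r}{\sin\alpha\sqrt{(1+r)^2\cot^2\alpha+(1-r)^2}}
=\frac{1}{\sin\alpha\sqrt{\cot^2\alpha+\left(\frac{1-r}{1+r}\right)^2}} .
\]
This expression is increasing in $r$, so it is maximized at $r=1$, where it equals $\frac{1}{\sin\alpha\cot\alpha}=1/\cos\alpha$. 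The bound is attained by the sequence $X_1=(-1,0)$, $X_2=(1,0)$, so it is tight.

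For $\alpha>\pi/4$ I would first compute the cost of \Greedy on the two-request family $X_1=(-1,0)$, $X_2=(r,0)$ and then prove a matching upper bound.
\begin{itemize}
\item In round $1$, $r=0$, so by Lemma~\ref{lemma:opt} the drone moves perpendicularly onto the right edge of $\FC(X_1)$, paying $\cos\alpha$ and reaching a point $Q$ strictly below the apex.
\item In round $2$ the drone moves from $Q$ perpendicularly onto the left edge of $\FC(X_2)$, or to the new apex if the foot of that perpendicular lies outside the cone. The length of this move is an explicit trigonometric function of $r$.
\item For $r=1$ we have $\OPT=\cot\alpha$. The conjectured ratio then corresponds to a second leg of length $2\cos^3\alpha$, giving total cost $\cos\alpha+2\cos^3\alpha$ and ratio $(1+2\cos^2\alpha)\sin\alpha$.
\end{itemize}
I would verify this value and then check, by elementary calculus in $r$ against both branches of Lemma~\ref{lemma:opt}, that $r=1$ is indeed the worst case within this family. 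That establishes the lower bound.

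For the upper bound on general good sequences I would argue structurally. After every non-redundant request, \Greedy lies on the edge of the current feasibility cone that faces the newest request, at perpendicular distance from it. A request on the same side as the previous one translates that edge parallel to itself, so the move is a perpendicular translation. A request on the opposite side forces a move perpendicular to the other edge.

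Since the two edge directions make angle $\pi-2\alpha$ with each other, I would decompose each move into components along the two normal directions $\pm(\cos\alpha,\sin\alpha)$. The aim is to show that the total cost is at most the sum of the final normal offsets of the two edges from $X_0$, with a correction that is charged only when the side switches. This would reduce any sequence to at most one effective alternation, namely the two-request case analysed above. Equivalently, I would show that the cost of \Greedy, as a function of the sequence, is maximized by sequences with a single switch of side, using the fact that subsequent moves on the same side only accumulate normal displacement, which \OPT must pay as well.

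The main obstacle is this reduction of arbitrary (multi-alternation) sequences to the two-request case when $\alpha>\pi/4$. Making the charging of each perpendicular move rigorous, including the transition when the drone reaches the apex rather than a side of the cone, is where I expect the real work to be.
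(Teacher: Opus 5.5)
Your argument for $\alpha\le\pi/4$ is correct and is essentially the paper's. \Greedy walks through the apexes $T_i$, Lemma~\ref{isoceles} gives cost $\frac{1+r}{2\sin\alpha}$, and you make explicit the monotonicity in $r$ that the paper only asserts. Your $r=1$ computation for $\alpha>\pi/4$ is also right, but it only yields a lower bound.

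\textbf{The gap is in the upper bound for $\alpha>\pi/4$, and it cannot be closed, because that bound is false.} The flaw is in the reduction to ``the two-request case analysed above''. There are two single-switch orders, and you only analysed the one where the long side $-1$ comes first; the order with the short side first is worse.

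Take $\alpha=\pi/3$, $X_1=(4/5,0)$, $X_2=(-1,0)$. \Greedy makes two perpendicular moves, each landing inside the current cone. The first costs $\frac45\cos\alpha=\frac25$, the second $\frac35\cos\alpha=\frac3{10}$, for a total of $\frac7{10}$. \OPT flies to the apex $(-\frac1{10},\frac{3\sqrt3}{10})$ at cost $\frac{\sqrt7}{5}$. The ratio is $\frac{\sqrt7}{2}\approx1.323$, which exceeds $(1+2\cos^2\alpha)\sin\alpha=\frac{3\sqrt3}{4}\approx1.299$.

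In general, for $\alpha>\pi/4$, \Greedy coincides with \FA for $\beta=\pi/2-\alpha<\alpha$. Lemma~\ref{lemma:fa} then gives its cost on $X_1=(r,0)$, $X_2=(-1,0)$ as $\cos\alpha\,(1+2r\cos^2\alpha)$. The resulting ratio is $\sin2\alpha\,\frac{1+2r\cos^2\alpha}{\sqrt{1+r^2+2r\cos2\alpha}}$. It peaks at the interior point $r_0=1/(1-\cos2\alpha-\cos^22\alpha)\in(-\cos2\alpha,1)$ with value $\sqrt{1+\sin^22\alpha}$. This is strictly larger than the claimed formula on $(\pi/4,\pi/2)$: with $c=\cos2\alpha$, the difference of the squares is $c^2(1+c)/2>0$.

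The paper's own proof has the same hole. It asserts without argument that the worst case is $r=1$ and evaluates only the instance $(-1,0),(1,0)$.

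Your normal-offset idea does give the correct upper bound once carried out carefully. Write the drone's coverage interval as $[-U,V]$ with $U=y\tan\alpha-x$ and $V=y\tan\alpha+x$. A greedy move of length $t$ that repairs one side pushes that endpoint outward by $\delta=t/\cos\alpha$, and pushes the other endpoint outward by $-\delta\cos2\alpha>0$. Hence \Greedy never needs an apex, and its cost is $\cos\alpha\,(S_L+S_R)$, where $S_L,S_R$ are the total direct pushes on each side.

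At the end of the last direct right push, $V\ge S_R$ and $V$ equals the current right extreme, so $S_R\le r$; likewise $S_L\le1$. If the last push is on the left, then $S_L-S_R\cos2\alpha\le1$, giving cost at most $\cos\alpha(1+2r\cos^2\alpha)$. If it is on the right, the cost is at most $\cos\alpha(r+2\cos^2\alpha)$, which is smaller. Combined with the instance above, this shows $\rho(\Greedy;\alpha)=\sqrt{1+\sin^22\alpha}$ for $\alpha\in(\pi/4,\pi/2)$, not $(1+2\cos^2\alpha)\sin\alpha$.
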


\begin{proof} It is easy to see that the worst case ratio is when $r=1$ and therefore we analyze this case below. \\ 
Case 1: $\alpha \leq \pi/4$.
As discussed above, the greedy algorithm 
follows the path $X_0,T_1,T_2,\ldots, T_n$ where $T_i$ is the tip of the feasibility cone of the first $i$ requests. As shown in Lemma \ref{isoceles}, the length of the path is equal to the distance between the leftmost request  and $T_n$. Since we assume the input instance is good, the leftmost request is the point
at $(-1,0)$ and for $r=1$ the rightmost request is at point $(1,0)$. Thus, the length of the path is equal to $1/\sin{\alpha}$, and the cost of the 
\OPT for $r=1$ is equal to $\cot \alpha$. This gives the competitive ratio in this case of $1/ \cos \alpha$.

Case 2: $\alpha > \pi/4$.
The cost of the optimal algorithm is the height $h$ in the triangle $X_0,X_i,T$, with $X_0=(0,0)$, $X_i=(-1,0)$ and the cost of the \Greedy algorithm is $|X_0U|+|UT'| $. We have the following equations: 
$h=\cot \alpha$, $|X_0U|=h \sin \alpha$,
$|UT|= \cos^2 \alpha /\sin \alpha$, 
$|UT'|/|uT|=\cos(2\alpha - \pi/2) = \sin 2\alpha=2\sin \alpha \cos \alpha$, 
and thus $|X_0U|+|UT'|=  (1+2\cos^2\alpha)\cos \alpha$.  Therefore, $\rho =  (1+2\cos^2\alpha)\sin \alpha$. 
\end{proof}

The competitive ratio formulas above show that the \SU algorithm 
performs better than the \Greedy algorithm when 
$\alpha > \pi/4$, and the \Greedy algorithm performs better than the \SU algorithm when  
$\alpha < \pi/4$. However the competitive ratio for both algorithms is rather large  for mid-range values  of the scanning angle $\alpha $  around  $\pi/4$.

\section{\FA Algorithm}
\label{FA Algo}

\begin{figure}[ht]
\begin{center}
\begin{tikzpicture}[scale=0.7]

% SU - red
% Greedy - blue
% beta-hedge - green

% Axis
\draw[thick] (-4.5,0) -- (5.5,0);

% Definitions setting the requests and alpha/beta slopes
\tikzmath {
	\cx0 = 0;
	\cx1 = -1.5;
	\cx2 = 2.5;
	\cx3 = -3.5;
	\cx4 = 5;
	\aslope = 1;
	\bslope = 0.5;
	\acoef = 1/(2*\aslope);
}

% helpers to easier define the lines
\coordinate (a) at (\aslope,1);
\coordinate (na) at (-\aslope,1);
\coordinate (b) at (\bslope,1);
\coordinate (nb) at (-\bslope,1);

% the requests and diagonal rays
\foreach \i in {0,...,4}
{
	\draw (\cx\i,0) coordinate (pt\i) node [below] {$X_\i$};
	\filldraw (\cx\i,0) circle (1pt);
	\draw [name path = rray\i] (\cx\i,0) -- ++($min(5,6-\cx\i)*(a)$);
	\draw [name path = lray\i] (\cx\i,0) -- ++($min(5,\cx\i+5)*(na)$) ;
}

% STRAIGHT-UP
\draw[thick, red] (pt0) -- (0,\cx4); 
\filldraw[red] (0,\cx4) circle (1pt);

% GREEDY
\draw[thick, blue] (pt0) -- ++($-\acoef*\cx1*(na)$) -- ++($\acoef*\cx2*(a)$)
	-- ++(${(\cx1-\cx3)*\acoef}*(na)$) -- ++(${(\cx4-\cx2)*\acoef}*(a)$) coordinate (ptOpt); 
\filldraw[blue] (ptOpt) circle (1pt);

% GREEDY COST
\draw[ultra thick, blue, dotted] (pt4) -- (ptOpt);

% OPT
\draw[thick] (pt0) -- (ptOpt);

% BETA
\path [name path = pathB0] (pt0) -- ++ ($2*(nb)$);
\path [name intersections={of=rray1 and pathB0,by=B1}];
\path [name path = pathB1] (B1) -- ++ ($2*(b)$);
\path [name intersections={of=lray2 and pathB1,by=B2}];
\path [name path = pathB2] (B2) -- ++ ($(nb)$);
\path [name intersections={of=rray3 and pathB2,by=B3}];
\path [name path = pathB3] (B3) -- ++ ($1.4*(b)$);
\path [name intersections={of=lray4 and pathB3,by=B4}];

\draw[thick, green] (pt0) -- (B1) -- (B2) -- (B3) -- (B4);
\filldraw[green] (B4) circle (1pt);
\end{tikzpicture}
\end{center}
  \caption{Algorithm Trajectories for $\alpha = \pi/4$: \OPT (in black), \SU algorithm (in red), \Greedy algorithm (in solid blue, which is equivalent in distance to the dotted blue trajectory), and the \FA algorithm (in green).}
\label{fig:mother_of_all_figures}
\end{figure}
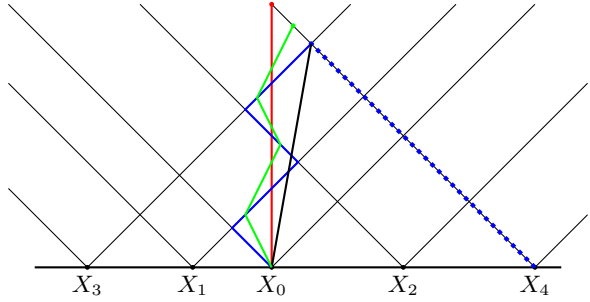

\makeatletter
\@ifundefined{myconfversion}{
  % Code if undefined

\begin{figure}[htb]
\begin{center}
  \epsfig{width=8cm,file=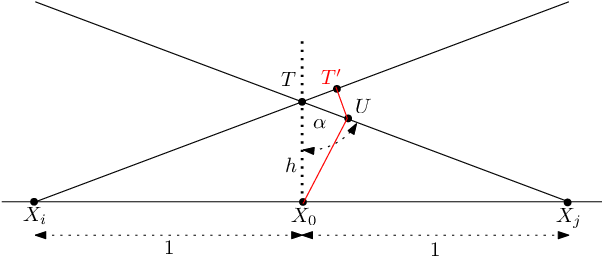}
\end{center}  
  \caption{The segments $X_0 U$ and $U  T'$ correspond to the path traveled by the \Greedy~algorithm when $\alpha > \pi/4$.}
\label{fig:gralg}
\end{figure}

}{

} 
\makeatother

In the \SU~ algorithm, the angle of the displacement of the drone with respect to the $y$ axis is always $0$. Thus, when  requests are all located to one side of $(0,0)$, the increase of the height of the drone in this algorithm is very large.   
 In  the \Greedy algorithm the angle  of the displacement of the drone with respect to the $y$ axis is either $\alpha$ or $-\alpha$ when $\alpha \leq \pi/4$ , and  is either $\pi/2 -\alpha$ or $\alpha - \pi/2$ when $\alpha > \pi/4$. Thus, when locations of the requests alternate between the left and right of $(0,0)$, the positions of the drone in this algorithm zig-zags too much. 
 %he maximal possible angle to reach the  feasibility cone  $\FC(X_i)$ of $X_i$. 
Furthermore, the angle used by \SU is better when $\alpha$ is large, while those used by \Greedy are better when $\alpha$ is small.  
These two drawbacks of \SU and \Greedy algorithms  motivate the \FA  algorithm given in this section, where the angle $\beta$ of the displacement of the drone, %with respect to the perpendicular line
 $\beta \leq \alpha$, is a  function of $\alpha$ that can be optimized.
 
In  round $i$ of the \FA  algorithm,   the drone travels 
from its current position $P_{i-1}$ to  point $P_i$ on the side of the feasibility cone $\FC(X_0,X_1,\ldots,X_i)$  using  angle either $\beta$ or $-\beta$  with respect to the line perpendicular to the $x$-axis, see Figure \ref{fig:betaalg}. Note that when $\beta = 0$, we obtain the \SU algorithm, and when $\beta = \alpha$ and
$ \alpha \leq \pi/4$, we obtain the \Greedy algorithm.
\begin{figure}[htb]
\begin{center}
  \epsfig{width=9cm,file=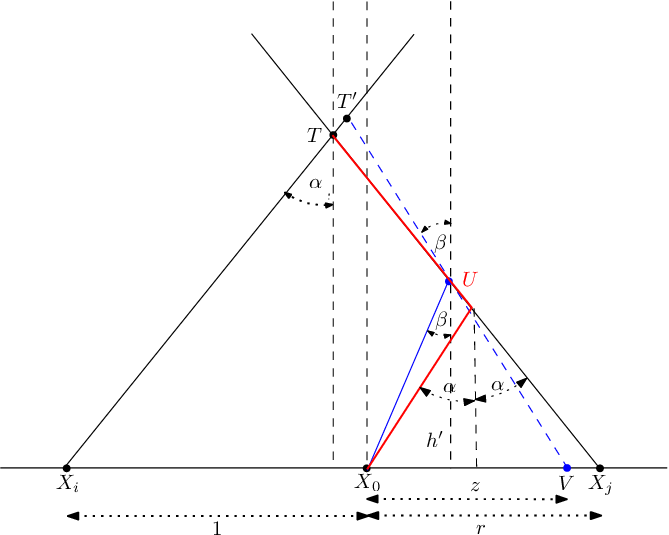} 
\end{center}  
  \caption{The path $X_0, U ,T'$ is traveled by the \FA~algorithm, while the length of the path traveled by the \Greedy corresponds to $|X_jT|$. 
    }
\label{fig:betaalg}
\end{figure}
\begin{lemma}
\label{lemma:fa}
Given a good input instance with parameters $\alpha, r$ and an angle $0\le \beta 
< \alpha$,
the distance travelled by the \FA ~algorithm with parameter $\beta$ 
is equal to 

$$\FA(\alpha,r,\beta)=\frac{\tan \alpha +(1+2r)\tan\beta}{(\tan \alpha + \tan \beta)^2\cos \beta} $$  

\end{lemma}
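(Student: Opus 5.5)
The plan is to linearize the \FA dynamics with two ``slack'' functions, one per edge of the feasibility cone. For a drone at $(x,y)$ and current extremes $l\le 0\le r'$, set
\[
f_L=l-x+y\tan\alpha,\qquad f_R=x+y\tan\alpha-r'.
\]
The drone covers all requests iff $f_L\ge 0$ and $f_R\ge 0$. A unit move at angle $-\beta$ (up-left) changes $f_L$ by $A=\tan\alpha\cos\beta+\sin\beta$ and $f_R$ by $B=\tan\alpha\cos\beta-\sin\beta$. A unit move at angle $+\beta$ does the reverse. Since $\beta<\alpha$, we have $B>0$, so every \FA move increases both slacks.

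In each round only the slack of the extended side can be negative. \FA moves toward that side until this slack returns to $0$, so after round $i$ the drone lies on the edge of $\FC(X_0,\dots,X_i)$ belonging to $X_i$. Let $L_L$ and $L_R$ be the total lengths of left and right moves. Starting from $f_L=f_R=0$ at $X_0$ and summing the linear changes (decreasing $l$ to $-1$ costs $1$ in $f_L$, increasing $r'$ to $r$ costs $r$ in $f_R$) gives
\[
f_L^{(n)}=-1+A L_L+B L_R,\qquad f_R^{(n)}=-r+B L_L+A L_R .
\]
The cost $L_L+L_R$ is therefore determined by the final slacks. The side of the last request has slack $0$.

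The whole difficulty sits in the slack of the other side. It vanished when that side was last extended, but it then grew by $B$ per unit of subsequent opposite-side movement. So it is not a function of $(\alpha,r)$ alone. To reach the stated formula I will take the last non-redundant request to be the leftmost point $(-1,0)$ and the right side to be extended only before it, as in the instance of Figure~\ref{fig:betaalg} with $X_1=(r,0)$ followed by $X_2=(-1,0)$. Then $L_R=r/A$, and $f_L^{(n)}=0$ gives $L_L=(1-Br/A)/A$. Hence
\[
L_L+L_R=\frac{A+r(A-B)}{A^2}=\frac{A+2r\sin\beta}{A^2}.
\]
Substituting $A=(\tan\alpha+\tan\beta)\cos\beta$ and dividing numerator and denominator by $\cos\beta$ yields $\dfrac{\tan\alpha+(1+2r)\tan\beta}{(\tan\alpha+\tan\beta)^2\cos\beta}$.

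The main obstacle is the quantifier ``for every good instance''. My first attempt will be to show, via the linear identities above, that interleaved sequences only add extra slack on the non-final side. Such a sequence then has cost equal to the formula plus a nonnegative correction proportional to that slack. Equality holds exactly when that slack is zero, which is the configuration the lemma's parameters $(\alpha,r)$ describe. If the correction cannot be shown to vanish, I will state the lemma for that canonical ordering and record the general identity as an inequality.
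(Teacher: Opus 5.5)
Your computation for the ordering $X_1=(r,0)$, $X_2=(-1,0)$ is correct and reaches the formula by a different route from the paper. The paper takes the two-segment path $X_0\to U\to T'$ from Figure~\ref{fig:betaalg}, unfolds it by reflection into the single segment $VT'$, and evaluates $|VT'|$ from the similar triangles $X_iVT'$ and $X_0X_jU$. Your slack functions replace that geometry by two linear equations. Unlike the figure, they also show that the cost depends on the order of the requests. The paper's argument likewise only establishes this ordering.

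The gap is in your plan for general instances: the sign of the correction is backwards. Interleaving removes slack on the non-final side rather than adding it, so the cost goes \emph{down}. Take $X_1=(-a,0)$, $X_2=(r,0)$, $X_3=(-1,0)$ with $0<a<1$ and $Ar>Ba$. The leftward move of round~1 adds $Ba/A$ to $f_R$ before $r$ is revealed, so $L_R=(Ar-Ba)/A^2<r/A$. Since $f_L^{(n)}=0$,
\[
L_L+L_R=\frac1A+\frac{A-B}{A}\,L_R=\frac{A+r(A-B)}{A^2}-\frac{(A-B)Ba}{A^3},
\]
which is strictly below the formula whenever $0<\beta<\alpha$. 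So equality for every good instance is false, not merely hard to prove. Your sentence ``equality holds exactly when that slack is zero'' is also inverted. In the canonical instance the non-final slack is $f_R^{(n)}=B(A-Br)/A^2>0$, which is its maximum, not zero.

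What your identities do give is the opposite inequality, split by the side of the last move.
\begin{itemize}
\item If the last move is leftward, $L_R$ is already final when the last rightward move ends, so $AL_R\le r$. At the end of the last leftward move, $AL_L+BL_R\le 1$. Hence $L_L+L_R\le \frac1A+\frac{A-B}{A}L_R\le\frac{A+r(A-B)}{A^2}$.
\item If the last move is rightward, symmetrically $AL_L\le 1$ and $AL_R+BL_L\le r$. This gives $L_L+L_R\le \frac rA+\frac{A-B}{A^2}$, which is smaller than the formula by $B(1-r)/A^2$.
\end{itemize}
So the correct statement is that the \FA cost is at most the formula on every good instance, with equality for the canonical ordering (all right-side requests before all left-side ones). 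Since \OPT depends only on $(\alpha,r)$, this ``at most, and attained'' version is exactly what the competitive analysis needs. Your fallback should record the general relation with $\le$, not as a nonnegative correction.
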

\begin{proof}
Assume we are given a good input instance $X_0=(0,0),X_1,\ldots,X_i$, where $X_j=(r,0)$ is the rightmost request.  Observe in Figure \ref{fig:betaalg} that the total distance travelled by the \FA ~algorithm is equal to the sum of the distance traveled to the feasibility cone of 
 points $X_0$ and $X_j$ using angle $\beta$, i.e., from $X_0$ to point $U$, 
and the distance travelling from $U$ to the feasibility cone of points 
$X_i$, $X_j$ using angle $-\beta$, i.e., from $U$ to $T'$ in the figure. Clearly, this distance is equal to the length of the segment $V, T'$. 
Since the triangles $X_i V T'$ and $X_0 X_j U$ are similar, we have that
$\frac{|V T'|}{1+z}=\frac{|X_0 U|}{r}$ where $z=|X_0 V|$. Since 
$\sin \beta = \frac{z}{2 |X_0 U|}$, we have that 
$|V T'|=\frac{z(z+1)}{2r \sin \beta}$. 
  We have the following equations: $ \tan \beta =z/(2h')$ and
$\tan \alpha = (2r -z)/2h'$, where $h'$ is the height of $U$ in the triangle $X_0 X_j U$. Solving the two equations we get
 $z= \frac {2r\tan \beta}{\tan \alpha + \tan \beta}$. Substituting 
the value of  $z$ into the expression for $|V T'|$ we get 
$$|V T'| = \frac { 2r \tan \beta}{2r\sin \beta (\tan \alpha + \tan \beta)}\cdot (1+\frac{2r \tan \beta}{\tan \alpha + \tan \beta}) = \frac{\tan \alpha +(1+2r)\tan\beta}{(\tan \alpha + \tan \beta)^2\cos \beta}$$    
\end{proof}

Obtaining an explicit expression for the optimal value of $\beta$ for a given value of $\alpha$ and thereby of the competitive ratio of the algorithm, requires an extensive and detailed  mathematical analysis that is  
presented fully in the next subsection. 
Using Theorems \ref{thm:su},
\ref{thm:greedy}, and 
\ref{thm: opt cr of fixed angle}, we get   the competitive ratios of the three algorithms in Table \ref{table:comp} below. The values 
show that \FA algorithm has a substantial advantage over \SU and \Greedy 
algorithms for $\pi/6 <\alpha <\pi/3$.

 \begin{table}
\begin{center}
    \begin{tabular} {|l | l| c|c|c|}
 \hline
     $\alpha$ & $\beta$ & $\rho(\FA ,\alpha, \beta)$ &$\rho(\SU,\alpha)$&$\rho(\Greedy,\alpha)$\\
 \hline
 $\pi /2.5$& 0&1.051 &1.051 & 1.132\\
 \hline
 $\pi /3$&0 & 1.154 & 1.154& 1.232\\
 \hline
 $\pi /3.5$&$\pi/15.707$ &1.231 & 1.279&1.389 \\
 \hline$\pi /4$& $\pi/9.666$&$1.2500$&$1.414$&1.414\\
 \hline
 $\pi/4.5$&$\pi/7.891$& 1.2386& 1.532&1.305\\
 \hline
 $\pi /5$&$\pi / 6.854$&1.2139&1.618&1.23\\
 \hline
 $\pi/5.5$&$\pi/6.369$& 1.1844& 1.682&1.188\\
 \hline
 $\pi /6$&$\pi / 6$&1.154&1.732&1.154\\
 \hline
 $\pi /8$&$\pi / 8$&1.08&1.847&1.08\\
 \hline

 \end{tabular}
 \caption{Competitive ratio values of \SU, \Greedy, and \FA algorithms.}
 \label{table:comp}
 \end{center}
 \end{table}

\subsection{\FA %Fixed-Angle 
Algorithm Competitive Analysis}

In this section, we determine the optimal choice of the parameter $\beta = \beta(\alpha)$ in the \FA %Fixed-Angle
algorithm to minimize its competitive ratio. We also present a tight performance analysis.
The main technical challenge is that, for each $\alpha \in [0, \pi/2]$, the algorithm must select $\beta = \beta(\alpha)$ to minimize the worst-case competitive ratio, where the adversarial parameter $r \in [0,1]$ depends on both $\alpha$ and $\beta$. However, the algorithm's choice of $\beta$ must be independent of $r$. 
We begin by introducing some preliminary definitions.

We define the function
\begin{equation}
\label{equa:cr formula}
f_1(\alpha, \beta, r) 
=
\frac{2\sinn{\alpha}}{\coss{\beta}(\tann{\alpha} + \tann{\beta})}
\cdot
\frac{1 + \tfrac{2\tann{\beta}}{\tann{\alpha} + \tann{\beta}} r}{\sqrt{1 + r^2 + 2\coss{2\alpha} r}},
\end{equation}
over the domain $\{ (\alpha, \beta, r) \in \reals^3 : \alpha \in [0, \pi/2], ~ \beta \in [0,\alpha], ~ \text{and} ~ r \in [0, 1] \}$.
We show later that $f_1(\alpha, \beta, r)$ is the competitive ratio of the \FA %Fixed-Angle 
algorithm, for the algorithmic choice $\beta=\beta(\alpha)$, and the adversarial choice $r=r(\alpha,\beta)$. 
For fixed $\alpha, \beta$, we introduce the abbreviations $A = \tfrac{2 \tann{\beta}}{\tann{\alpha} + \tann{\beta}}$ and $B = 2\coss{2\alpha}$, and we set
\begin{equation}
\label{equa:advchoice}
r_0 := \frac{2A - B}{2 - AB},
\end{equation}
noting that $r_0 = r_0(\alpha, \beta)$.
We are now ready to state the main theorem.

\begin{theorem}
\label{thm: opt cr of fixed angle}
For every $\alpha \in [0,\pi/2]$, the optimal angle $\beta$ for the \FA %fixed-angle 
algorithm is given by 
$$
\beta_0 = 
\begin{cases}
\alpha & \text{if } \alpha \in \left[0,\tfrac{\pi}{6}\right],\\
\tfrac{1}{2} \arccos\left(\frac{-2 \cos (4 \alpha)+\cos (6 \alpha)+2}{3-2 \cos (4 \alpha)}\right) & \text{if } \alpha \in \left(\tfrac{\pi}{6},\tfrac{\pi}{3}\right),\\
0 & \text{if } \alpha \in \left[\tfrac{\pi}{3},\tfrac{\pi}{2}\right],
\end{cases}
$$
where $\beta_0=\beta_0(\alpha)$. The choice of $\beta_0$ results in competitive ratio
$$\rho(\FA;\alpha)=
\begin{cases}
\tfrac{1}{\coss{\alpha}} & \text{if } \alpha \in 
\left[0,\tfrac{\pi}{6}\right),\\
f_1(\alpha,\beta_0,r_0(\alpha,\beta_0)) & \text{if } \alpha \in \left[\frac{\pi}{6},\frac{\pi}{3}\right],\\
\tfrac{1}{\sinn{\alpha}} & \text{if } \alpha \in 
\left[\tfrac{\pi}{3},\tfrac{\pi}{2}\right].
\end{cases}
$$
The analysis is tight, and is obtained by the adversarial choice $r_0(\alpha,\beta_0)$, where in particular $r_0=1$ if $\alpha \leq \pi/6$, and $r_0=-\coss{2\alpha}$ if $\alpha \geq \pi/3$. 
\end{theorem}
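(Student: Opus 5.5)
The plan is to reduce the competitive ratio of \FA, for fixed $\alpha$ and $\beta$, to a one-parameter maximization over the adversary's $r$. Then solve that inner problem in closed form, and finally minimize over $\beta$.

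\textbf{Step 1: the ratio equals $f_1$.} Take a good instance and compare the cost from Lemma~\ref{lemma:fa} with $\OPT$ from Lemma~\ref{lemma:opt}.

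First I would argue that the worst case for \FA is a two-request instance: one request on the right, followed by the far left request at $-1$. This should follow from an analogue of Lemma~\ref{isoceles}. Each \FA leg at angle $\pm\beta$ ends on a cone edge. Zig-zag segments can therefore be ``unfolded'' into the single segment $VT'$ of Figure~\ref{fig:betaalg}, so the total cost depends only on the extreme requests and on which side was hit first.

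Next, divide the expression of Lemma~\ref{lemma:fa} by the apex branch of $\OPT$. After renormalizing the instance, the law of cosines in the triangle with apex angle $2\alpha$ gives the factor $\sqrt{1+r^2+2\cos(2\alpha)r}$. With $A=\tfrac{2\tan\beta}{\tan\alpha+\tan\beta}$, the quotient should simplify exactly to $f_1(\alpha,\beta,r)$ from \eqref{equa:cr formula}.

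For $\alpha>\pi/4$ there is a second regime, where $\OPT=\cos\alpha$. There I must check one of two things: either the maximizing $r$ lies in the region where the apex formula holds, or the ratio on the other branch is dominated by the apex branch. I expect the boundary $r=\frac{\tan^2\alpha-1}{1+\tan^2\alpha}=-\cos 2\alpha$ to be exactly the $r_0$ value claimed for $\alpha\ge\pi/3$, which would make these two regimes meet consistently.

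\textbf{Step 2: the inner maximization over $r$.} For fixed $\alpha,\beta$, write $f_1\propto (1+Ar)/\sqrt{1+Br+r^2}$. Setting the derivative to zero gives
$A(1+Br+r^2)=(1+Ar)(r+B/2)$. This reduces to $(A-B/2)+r(AB/2-1)=0$, that is $r=r_0=\frac{2A-B}{2-AB}$, as in \eqref{equa:advchoice}.

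I would then check the following:
\begin{itemize}
\item the sign change of the derivative shows $r_0$ is a maximum;
\item when $r_0\notin[0,1]$ the maximum sits at the appropriate endpoint;
\item in particular, for $\beta=\alpha$ one gets $r_0\ge 1$ exactly when $\alpha\le\pi/6$, which recovers $r=1$ and ratio $1/\cos\alpha$, matching Theorem~\ref{thm:greedy};
\item for $\beta=0$ one gets $A=0$ and $r_0=-\cos 2\alpha$, which is admissible for $\alpha\ge\pi/4$ and gives $1/\sin\alpha$, matching Theorem~\ref{thm:su}.
\end{itemize}

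\textbf{Step 3: the outer minimization over $\beta$.} Let $F(\beta)=f_1(\alpha,\beta,r_0(\alpha,\beta))$. By the envelope theorem, $F'(\beta)=\partial_\beta f_1$ evaluated at $r=r_0$. Setting this to zero and substituting $t=\tan\beta$ (or working directly with $\cos 2\beta$) should produce an equation whose solution is the stated expression for $\cos 2\beta_0$.

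To finish, I would verify three things. First, this interior critical point lies in $(0,\alpha)$ exactly when $\alpha\in(\pi/6,\pi/3)$, reaching $\beta_0=\alpha$ at $\alpha=\pi/6$ and $\beta_0=0$ at $\alpha=\pi/3$. Second, outside that interval $F$ is monotone in $\beta$, so the endpoint choices $\beta_0=\alpha$ and $\beta_0=0$ are optimal. Third, the critical point is a global minimum, by a sign analysis of $F'$.

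\textbf{Main obstacle.} I expect the hardest part to be the trigonometric simplification of $\partial_\beta f_1=0$ after substituting $r_0$, together with the proof of unimodality of $F$ on $[0,\alpha]$. I would attempt this first with the substitution $t=\tan\beta$ and $s=\tan\alpha$, which turns everything into rational functions. The critical-point condition should then factor as a polynomial in $t$ with a single admissible root, and checking its sign on either side gives the minimum. Tightness comes for free: the two-request instance with $r=r_0(\alpha,\beta_0)$ attains the bound.
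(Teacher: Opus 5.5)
\textbf{Main gap (Step~3).} Your architecture matches the paper's: reduce to $f_1$, solve the inner problem through the linear numerator of $g'$, then minimize over $\beta$. The hole is in Step~3. You minimize $F(\beta)=f_1(\alpha,\beta,r_0(\alpha,\beta))$ over all of $[0,\alpha]$ as if it were the competitive ratio. By Lemma~\ref{lem:PerformanceFixedAngle}, however, the ratio is $\max_{r\in[\max\{0,-\cos 2\alpha\},1]} f_1(\alpha,\beta,r)$, which equals $F(\beta)$ only when $r_0$ is admissible.

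For $\alpha<\pi/4$ one has $r_0<0$ exactly when $\beta<t(\alpha)$. Since $\tan t(\alpha)=\cos 2\alpha\tan\alpha/(2-\cos 2\alpha)$, this is most of $[0,\alpha]$ for small $\alpha$. On that range the true ratio is $f_1(\alpha,\beta,0)=\sin 2\alpha/\sin(\alpha+\beta)$, strictly smaller than $F(\beta)$. Your own Step~2 computation shows this at $\beta=0$: $F(0)=1/\sin\alpha$, while \SU has ratio $2\cos\alpha$. So proving $F(\beta)\ge F(\beta_0)$ there, by monotonicity of $F$ for $\alpha\le\pi/6$ or by a sign analysis of the envelope derivative, says nothing about the actual ratio. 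Optimality of $\beta_0$ for $\alpha<\pi/4$ is therefore not established.

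The missing step is supplied in the paper by Lemma~\ref{lem: lower bound to cr} together with the lemma $\beta_0(\alpha)>t(\alpha)$:
\begin{itemize}
\item $\sin 2\alpha/\sin(\alpha+\beta)$ is decreasing in $\beta$, so on $[0,t(\alpha)]$ the ratio is at least its value at $t(\alpha)$;
\item that value equals $F(t(\alpha))$, because $r_0=0$ there;
\item hence it suffices to minimize $F$ over $[t(\alpha),\alpha]$ and check that your minimizer satisfies $\beta_0\ge t(\alpha)$.
\end{itemize}
With that patch, your sign analysis of $F'$ is a legitimate, and potentially more rigorous, substitute for the paper's numerically verified convexity of $F$.

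\textbf{Step~2: the $\pi/6$ threshold.} At $\beta=\alpha$ you get $A=1$, so $r_0=(2-B)/(2-B)=1$ for every $\alpha>0$, not only for $\alpha\le\pi/6$. More generally, $2-AB>0$ and $2-AB-(2A-B)=(2+B)(1-A)\ge 0$ give $r_0\le 1$ on the whole domain, so the constraint $r\le 1$ never binds. The threshold $\pi/6$ is not an inner-problem phenomenon. It is where the critical point of $F$ reaches $\beta=\alpha$ (Lemma~\ref{lem: critical points}).

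\textbf{Step~1: the unfolding claim.} The cost of \FA is not determined by the extreme requests and the side hit first. Each leg rises at angle $\beta$, so the cost is the final height divided by $\cos\beta$. For $0<\beta<\alpha$, a leftward leg also pushes out the right end of the covered interval, so intermediate alternations can lower the final height. For example, with $\alpha=\pi/4$ and $\tan\beta=1/3$, the instance $0.25,-0.5,0.5,-1$ costs less than $0.5,-1$.

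What you need is that the instance ``$r$, then $-1$'' is the most expensive for a given $r$. One way to see it: let $[L,R]$ be the covered interval and $p,q=\sin(\alpha\pm\beta)/\cos\alpha$. Then $R+\tfrac{q}{p}L$ is invariant along leftward legs and nondecreasing along rightward ones; the mirror-image quantity handles the case where the last leg is rightward. The paper simply applies Lemma~\ref{lemma:fa} to every good instance, so you are no worse off than the paper here, but your stated justification is false.

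\textbf{The $\cos\alpha$ branch for $\alpha>\pi/4$.} Only your second alternative works. On $r\in[0,-\cos 2\alpha]$, the algorithm's cost divided by $\cos\alpha$ is affine and nondecreasing in $r$, and it equals $f_1$ at $r=-\cos 2\alpha$. This is exactly how the paper disposes of that branch.
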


All parameters of Theorem~\ref{thm: opt cr of fixed angle} are shown in Figure~\ref{fig: parameters of fixed angle}. 
Note that for $\alpha \in [\pi/6,\pi/3]$, the optimal angle $\beta_0$ of the \FA %fixed-angle 
algorithm lies in the interval $(0,\alpha)$. Also, when $\alpha=\pi/4$, the competitive ratio equals $5/4$, the optimal angle is $\beta_0 = -2 \arctan\left(3-\sqrt{10}\right) \approx 0.321751$, corresponding to adversarial choice $r=1/2$. 
Similarly, when $\alpha=\pi/3$ or $\alpha=\pi/6$
the competitive ratio equals $2\sqrt{3}/3\approx 1.1547$, the optimal angle is $\beta_0 = 0$ and $\beta_0=\pi/6$, respectively, corresponding to adversarial choice $r=1/2$ and $r=1$, respectively. 

%bestBlargeA.pdf
%adversarialYlargeA.pdf
%crlargeA.pdf

\begin{figure}[ht!]
  \centering
  \begin{subfigure}[t]{0.32\textwidth}
    \includegraphics[width=\linewidth]{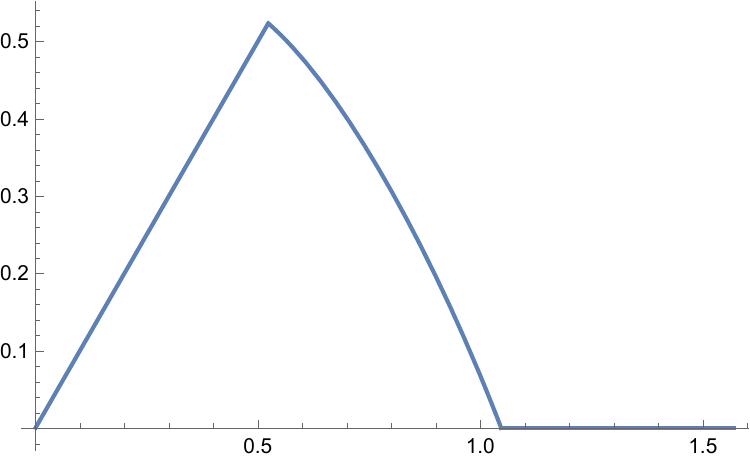}
    \caption{Plot of $\beta_0=\beta_0(\alpha)$}
      \label{fig: beta of fixed angle-angle}
  \end{subfigure}
  \hfill
  \begin{subfigure}[t]{0.32\textwidth}
    \includegraphics[width=\linewidth]{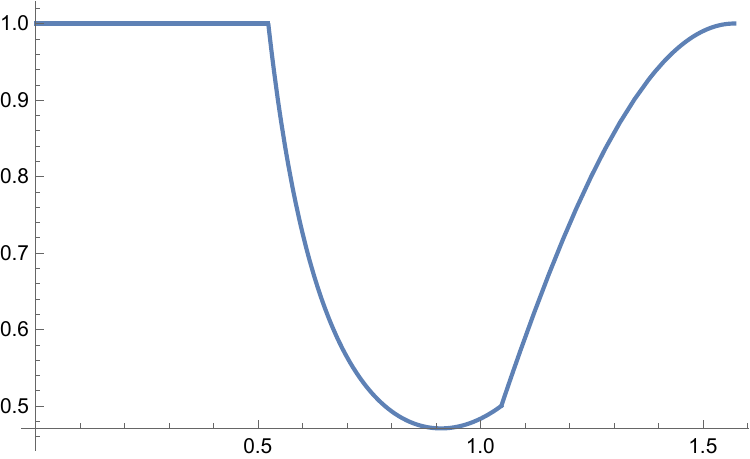}
    \caption{Plot of the adversarial $r=r_(\alpha,\beta_0)$}
          \label{fig: y of fixed angle-angle}
  \end{subfigure}
  \hfill
  \begin{subfigure}[t]{0.32\textwidth}
    \includegraphics[width=\linewidth]{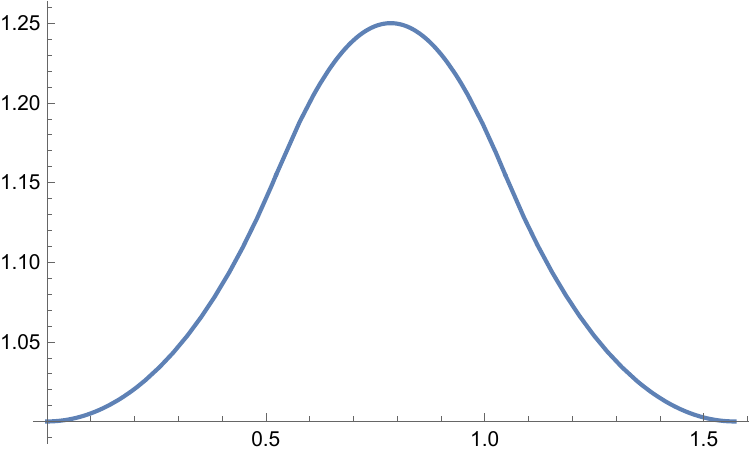}
    \caption{Plot of the competitive ratio, as a function of $\alpha$.}
         \label{fig: cr of fixed angle-angle}
  \end{subfigure}
  \caption{Performance parameters of the \FA algorithm, as per Theorem~\ref{thm: opt cr of fixed angle}.}
  \label{fig: parameters of fixed angle}
\end{figure}

The upper bound claimed by Theorem~\ref{thm: opt cr of fixed angle} admits a simple algebraic proof, since $\beta=\beta(\alpha)$ is an algorithmic choice and can be fixed to the expression as described in the theorem. Then, the upper bound can be shown by maximizing $f_1(\alpha,\beta,r)$ over $r\in [0,1]$, with $r=r(\alpha)$. However, below we prove the stronger statement that claims that the choice of $\beta_0=\beta_0(\alpha)$ stated in the theorem is actually optimal. 
We do so by finding for each $\alpha,\beta$ what is the worst adversarial choice $r_0=r_0(\alpha,\beta)$, and then we minimize $f_1(\alpha,\beta,r_0)$ with respect to $\beta=\beta(\alpha) \in [0,\alpha]$. 

The roadmap to prove Theorem~\ref{thm: opt cr of fixed angle} is as follows. For each $\alpha \in [0, \pi/2]$, the 
\FA %Fixed-Angle 
algorithm selects $\beta = \beta(\alpha)$ to minimize the competitive ratio. This competitive ratio is defined as the supremum, over all $r \in [0,1]$ where $r = r(\alpha, \beta)$, of a function, that depends on $\alpha$, $\beta$, and $r$. The proof proceeds by analyzing this function, as stated in the following lemma.%, and it simplified further later. 

\begin{lemma}
\label{lem:PerformanceFixedAngle}
For each $\alpha \in [0,\pi/2]$ and $\beta \in [0,\alpha]$, the competitive ratio of the \FA %Fixed-Angle 
algorithm equals 
$$
\max_{r \in \left[ \max\{0,-\coss{2\alpha}\},1 \right]}  f_1(\alpha,\beta,r) .
$$
\end{lemma}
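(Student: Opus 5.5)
We plan to compute, for a fixed pair $\alpha,\beta$, the ratio $\FA(\alpha,r,\beta)/\OPT(\bs{X};\alpha)$ as a function of the right extent $r\in[0,1]$ of a good instance, and then maximize it over $r$. By Lemma~\ref{lemma:fa}, the cost of \FA depends only on $\alpha,\beta,r$. It equals
$$\frac{\tan\alpha+(1+2r)\tan\beta}{(\tan\alpha+\tan\beta)^2\cos\beta}=\frac{1}{\cos\beta(\tan\alpha+\tan\beta)}\bigl(1+Ar\bigr),$$
with $A=\tfrac{2\tan\beta}{\tan\alpha+\tan\beta}$. A preliminary remark, to be justified from the construction in Figure~\ref{fig:betaalg}, is that the worst input for a given $r$ is the two-request instance, right then left (or the alternating one). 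Any good instance reduces to this one because \FA's total cost telescopes to $|VT'|$, exactly as in Lemma~\ref{isoceles}. So the competitive ratio is $\sup_r \FA(\alpha,r,\beta)/\OPT(r)$.

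Next we rewrite \OPT from Lemma~\ref{lemma:opt}. In the first regime we have
$$\tfrac12\sqrt{(1+r)^2\cot^2\alpha+(1-r)^2}=\frac{1}{2\sin\alpha}\sqrt{(1+r)^2\cos^2\alpha+(1-r)^2\sin^2\alpha}=\frac{1}{2\sin\alpha}\sqrt{1+r^2+2r\cos 2\alpha},$$
using $\cos^2\alpha-\sin^2\alpha=\cos2\alpha$. Dividing the \FA cost by this gives exactly $f_1(\alpha,\beta,r)$.

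This formula for \OPT is valid when $\alpha\le\pi/4$, or when $r\ge (\tan^2\alpha-1)/(1+\tan^2\alpha)=-\cos2\alpha$. That accounts for the lower endpoint $\max\{0,-\cos 2\alpha\}$: for $\alpha\le\pi/4$ we have $-\cos2\alpha\le 0$, so the whole interval $[0,1]$ is allowed.

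The remaining case is $\alpha>\pi/4$ with $r<-\cos2\alpha$. There $\OPT=\cos\alpha$ is constant in $r$, while the \FA cost is increasing in $r$, since $A\ge0$. The ratio on $[0,-\cos2\alpha]$ is therefore maximized at $r=-\cos2\alpha$. At that point the two expressions for \OPT agree by continuity; indeed $1+\cos^2 2\alpha-2\cos^2 2\alpha=\sin^2 2\alpha$ gives $\tfrac{\sin2\alpha}{2\sin\alpha}=\cos\alpha$. So this case is dominated by $f_1$ at the endpoint of the stated interval. The supremum over $[0,1]$ thus equals the maximum of $f_1$ over $[\max\{0,-\cos2\alpha\},1]$, and that maximum is attained because $f_1$ is continuous on a compact interval.

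The main obstacle is the reduction step: showing that for arbitrary good instances the \FA cost is exactly the formula of Lemma~\ref{lemma:fa}, and not larger for long alternating sequences. We would prove it by induction, as in Lemma~\ref{isoceles}. Each leg at angle $\pm\beta$ ends on a cone edge, and by similar triangles the accumulated path length equals the distance along a fixed $\beta$-direction from a ground point to the current position. The final cost then depends only on the extremes $-1$ and $r$ and on which side arrived last. We would take the maximum over the last side, which by the scaling $|l_n|\ge r_n$ is the left side, as in the figure.
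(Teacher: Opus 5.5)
Your main computation is the paper's own proof. You rewrite \OPT as $\frac{1}{2\sin\alpha}\sqrt{1+r^2+2r\cos2\alpha}$ on its first regime and divide to get $f_1$. You then dispose of the case $\alpha>\pi/4$, $r<-\cos2\alpha$ because there the ratio is affine and nondecreasing in $r$ (the paper's $f_2$) and equals $f_1$ at $r=-\cos2\alpha$. The problem is the reduction step you call the main obstacle. Your preliminary remark, that for each $r$ the instance $(X_0,(r,0),(-1,0))$ is worst, is true. The route you propose for it is not. The \FA cost on a good instance is \emph{not} determined by the extremes $-1,r$ and the side of the last request, so no telescoping induction in the style of Lemma~\ref{isoceles} can show it \emph{equals} the formula of Lemma~\ref{lemma:fa}. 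Unlike the apexes $T_i$, the drone's position depends on the whole history, and a non-redundant request may already lie in the drone's view. Take $\alpha=\pi/4$ and $\tan\beta=1/3$; every leg rises at rate $\cos\beta$, so the cost is the final height divided by $\cos\beta$.
\begin{itemize}
\item On $(1/2,0),(-1,0)$ the drone ends at $(-1/16,15/16)$, as Lemma~\ref{lemma:fa} predicts.
\item On $(1/4,0),(-1/2,0),(1/2,0),(-1,0)$, with the same extremes and the same last side, it ends at $(-17/128,111/128)$, which is strictly lower.
\item On $(-1,0),(2/5,0)$ it stops at $(-1/4,3/4)$ after the first request and never moves again, since $2/5$ is already covered.
\end{itemize}

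What is true, and all the lemma needs, is an inequality with equality on the two-request instance. Write $L=x-y\tan\beta$ and $R=x+y\tan\beta$ for the ground intercepts of the two $\pm\beta$ lines through the drone $(x,y)$. Then, for $\beta>0$, the cost is $(R-L)/(2\sin\beta)$. Let $A=\frac{2\tan\beta}{\tan\alpha+\tan\beta}\in[0,1]$. A request $x_j$ on the right keeps $L$ fixed and sets $R\leftarrow\max\{R,(1-A)L+Ax_j\}$; a request on the left acts symmetrically. Hence $L\le 0\le R$, $R\le A\,r_{\mathrm{cur}}$ and $L\ge A\,l_{\mathrm{cur}}$. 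Induction on the last request then gives $R-L\le A(M+Am)$, where $M$ and $m$ are the larger and smaller of $|l_{\mathrm{cur}}|$ and $r_{\mathrm{cur}}$. At the end this is $A(1+Ar)$, which is exactly the bound of Lemma~\ref{lemma:fa}. It is attained when $r$ is requested before $-1$, and the case $\beta=0$ is trivial. The paper itself skips this point by citing Lemma~\ref{lemma:fa}, whose proof treats only the instance of Figure~\ref{fig:betaalg}. With the inequality in place of your ``exactly'', your argument is complete.
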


\makeatletter
\@ifundefined{myconfversion}{

\begin{proof}
For the competitive analysis, we need to describe the optimal offline cost, as a function $\alpha,r$ in simpler terms. For this, using simple trigonometric calculations, we simplify the statement of Lemma~\ref{lem:offile_performance}, so that the optimal offline performance is given by the formula
\begin{equation}
\label{equa:optOffline}
\begin{cases}
\frac{1}{2\sinn{\alpha}}\sqrt{1 + r^2 + 2r\cos{2\alpha}} 
& \text{,if } \alpha \leq \frac{\pi}{4} \text{ or } r \geq -\coss{2\alpha} \\
\coss{\alpha} 
&  \text{,if } \alpha > \frac{\pi}{4} \text{ and } r < -\coss{2\alpha} .
\end{cases}
\end{equation}

Along with the definition of $f_1(\alpha,\beta,r)$ we also introduce function 
$$
f_2(\alpha,\beta,r) 
 =
\frac{1}{\coss{\alpha}\coss{\beta}\left(\tann{\alpha}+\tann{\beta}\right)}
\cdot
\left(1 +\tfrac{2\tann{\beta}}{\tann{\alpha}+\tann{\beta}}r \right), 
$$
with the same domain as of $f_1$.
Next, we examine two cases for $\alpha$, with threshold value $\pi/4$. 

In the first case, we assume that $\alpha\leq \pi/4$. Then, by Lemma~\ref{lemma:fa}, along with~\eqref{equa:optOffline}, we see that 
the competitive ratio of the \FA %fixed-angle 
algorithm equals 
$$
\max_{r \in [0,1]}  f_1(\alpha,\beta,r).
$$
Since in this case, we have $\max\{0,-\coss{2\alpha}\}=0$, the claim follows. 

In the second case, and for the remainder of the proof, we assume that $\alpha \geq \pi/4$. 
Again, Lemma~\ref{lemma:fa}, along with~\eqref{equa:optOffline} imply that 
the competitive ratio of the \FA %fixed-angle 
algorithm equals 
$$
\max\left\{
\max_{r \in [-\coss{2\alpha},1]}  f_1(\alpha,\beta,r) 
,
\max_{r \in [0,-\coss{2\alpha}]}  f_2(\alpha,\beta,r) 
\right\}.
$$
Hence, it suffices to show that 
$$
\max_{r \in [-\coss{2\alpha},1]}  f_1(\alpha,\beta,r) 
\geq 
\max_{r \in [0,-\coss{2\alpha}]}  f_2(\alpha,\beta,r). 
$$

In that direction we observe that

$$\max_{y \in [0,-\coss{2\alpha}]} f_2(\alpha,\beta,y) = f_2(\alpha,\beta,-\coss{2\alpha})$$
because $f_2(\alpha,\beta,y)$ is affine in $y$, with non-negative leading coefficient.

At the same time, we have
$$\max_{y \in [-\coss{2\alpha},1]} f_1(\alpha,\beta,y) \geq f_1(\alpha,\beta,-\coss{2\alpha})$$

The desired inequality then follows by observing that 
$$f_1(\alpha,\beta,-\coss{2\alpha})=f_2(\alpha,\beta,-\coss{2\alpha})$$,
which can be verified using elementary trigonometric manipulations over $\alpha \in [0,\pi/2]$. 
\end{proof}

}{

For the proof of this lemma and the rest of the argument, please see the full version of the paper attached at the end. At a high level, we proceed as follows. After establishing Lemma~\ref{lem:PerformanceFixedAngle}, we are ready to prove Theorem~\ref{thm: opt cr of fixed angle}.  We begin by simplifying the expression for the competitive ratio, identifying the worst-case adversarial choice $r = r(\alpha, \beta)$. This allows us to express the competitive ratio solely as a function of $\alpha$ and $\beta$. 
That is sufficient to derive the upper bound on the competitive ratio stated in Theorem~\ref{thm: opt cr of fixed angle}, given the choice of $\beta$ specified in the theorem. Later we show that this choice of $\beta$ is in fact optimal over $\beta \in [0, \alpha]$, thereby completing the proof of Theorem~\ref{thm: opt cr of fixed angle}.
Many of our arguments rely on a number of tedious trigonometric manipulations, which were carried out using symbolic computations in \textsc{Mathematica}.

} 
\makeatother

\makeatletter
\@ifundefined{myconfversion}{

We are now ready to prove Theorem~\ref{thm: opt cr of fixed angle}. We begin by simplifying the expression for the competitive ratio from Lemma~\ref{lem:PerformanceFixedAngle}, identifying the worst-case adversarial choice $r = r(\alpha, \beta)$. This allows us to express the competitive ratio solely as a function of $\alpha$ and $\beta$. This analysis is carried out in Section~\ref{lem: prelobs}.
That section is sufficient to derive the upper bound on the competitive ratio stated in Theorem~\ref{thm: opt cr of fixed angle}, given the choice of $\beta$ specified in the theorem. In Section~\ref{sec: optimality of beta}, we show that this choice of $\beta$ is in fact optimal over $\beta \in [0, \alpha]$, thereby completing the proof of the Theorem~\ref{thm: opt cr of fixed angle}.
Many of our arguments rely on a number of tedious trigonometric manipulations, which were carried out using symbolic computations in \textsc{Mathematica}.

\subsubsection{On the Adversarial Choice $r$}
\label{lem: prelobs}

The purpose of this section is to show that $r_0(\alpha,\beta)$, as in~\eqref{equa:advchoice}, is indeed the worst adversarial choice, for all $\alpha \in [0,\pi/2]$, and for all $\beta \in [0,\alpha]$ (which are not suboptimal, as we will see next). 

For each fixed $\alpha,\beta$, and $A=A(\alpha,\beta), B=B(\alpha,\beta)$, define 
\begin{equation}
\label{equa: intermediate g}
g(r) = \frac{1+A r}{\sqrt{1+r^2+B r}},
\end{equation}
and observe that 
\begin{equation}
\label{equa: alternative f1}
f_1(\alpha,\beta,r) =  \frac{2\sinn{\alpha}}{\coss{\beta}\left(\tann{\alpha}+\tann{\beta}\right)} \cdot g(r)
\end{equation}
Elementary calculations show that 
$$
\frac{\dd}{\dd r} g(r) = \frac{A (B r+2)-B-2 r}{2 (r (B+r)+1)^{3/2}}
$$
so that $\frac{\dd}{\dd y} g(r) = 0$ for the critical value 
$r_0 = \frac{2 A-B}{2-A B},$
where also $r_0=r_0(\alpha,\beta)$. 
We will also need the function
$$
t(\alpha) := 
\arctann{\tfrac{\sinn{3\alpha} - \sinn{\alpha}}{3\coss{\alpha} - \coss{3\alpha}}},
$$
which serves as a threshold for the parameter $\beta$. It is easy to verify that for all $\alpha \in [0,\pi/4]$, we have $0 \leq t(\alpha) \leq \alpha$, since in this range the quantity $\tfrac{\sinn{3\alpha} - \sinn{\alpha}}{3\coss{\alpha} - \coss{3\alpha}}$ is nonnegative.

\begin{lemma}
\label{lem: 1st derivative a more than pi/4}
In the domain $\alpha \in [0,\pi/2]$ and $\beta \in [0,\alpha]$ we have that 
$
r_0 \leq 1
$.
Moreover, 
$$
r_0 \leq 0~~\textrm{if and only if}~~
\alpha \leq \pi/4,
~~\textrm{and}~~
\beta \leq 
t(\alpha)
$$
\end{lemma}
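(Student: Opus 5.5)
The plan is to work directly with the abbreviations $A=\tfrac{2\tan\beta}{\tan\alpha+\tan\beta}$ and $B=2\cos(2\alpha)$, and to reduce both claims to elementary inequalities in $A$ and $B$. First I record the ranges of these quantities on the domain $\alpha\in[0,\pi/2]$, $\beta\in[0,\alpha]$. Since $0\le\tan\beta\le\tan\alpha$, we get $0\le A\le 1$, with $A=1$ exactly when $\beta=\alpha$. Clearly $-2\le B\le 2$, with $B=2$ only when $\alpha=0$.

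The first step is to show that the denominator $2-AB$ of $r_0$ is positive. If $B\le 0$ this is immediate. If $B>0$, then $AB\le B\le 2$, and equality $AB=2$ forces $\alpha=\beta=0$. That case is degenerate and I would exclude it, or treat it by continuity.

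With a positive denominator, the inequality $r_0\le 1$ is equivalent to $2A-B\le 2-AB$. Rearranging, this reads $(A-1)(2+B)\le 0$, which holds because $A\le 1$ and $B\ge -2$.

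For the second claim, positivity of the denominator again gives $r_0\le 0$ if and only if $2A\le B$. Since $A\ge 0$, this forces $B\ge 0$, that is $\alpha\le\pi/4$. At $\alpha=\pi/4$ we have $B=0$, so the condition requires $A=0$, i.e. $\beta=0$. This agrees with $t(\pi/4)=0$, because $\sin(3\pi/4)=\sin(\pi/4)$.

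For $\alpha\le\pi/4$, I clear the positive denominator $\tan\alpha+\tan\beta$, which is nonzero except in the trivial case $\alpha=\beta=0$. The condition $2A\le B$ then becomes
$$4\tan\beta\le 2\cos(2\alpha)(\tan\alpha+\tan\beta),$$
which is equivalent to
$$\tan\beta\,\bigl(2-\cos(2\alpha)\bigr)\le \tan\alpha\cos(2\alpha).$$
Since $2-\cos(2\alpha)>0$, this is the same as $\tan\beta\le \tfrac{\tan\alpha\cos(2\alpha)}{2-\cos(2\alpha)}$.

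The remaining step is to match this bound with the threshold $t(\alpha)$. This is the only step requiring any care, and it amounts to recognizing the identity. For the numerator, $\sin3\alpha-\sin\alpha=2\cos(2\alpha)\sin\alpha$. For the denominator, $3\cos\alpha-\cos3\alpha=6\cos\alpha-4\cos^3\alpha=2\cos\alpha\,(2-\cos(2\alpha))$. Therefore
$$\frac{\sin3\alpha-\sin\alpha}{3\cos\alpha-\cos3\alpha}=\frac{\tan\alpha\cos(2\alpha)}{2-\cos(2\alpha)}.$$
Because $\beta\in[0,\pi/2)$ and $\arctan$ is increasing, $\tan\beta\le \tan t(\alpha)$ is equivalent to $\beta\le t(\alpha)$. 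This gives the stated equivalence, and it also recovers the earlier observation that $0\le t(\alpha)\le\alpha$ on $[0,\pi/4]$.
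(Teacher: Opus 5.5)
Your proof is correct and follows essentially the same route as the paper. Both arguments establish positivity of the denominator $2-AB$, use the factorization $(2-AB)-(2A-B)=(2+B)(1-A)\ge 0$ to get $r_0\le 1$, and reduce the sign of $2A-B$ to a comparison of $\tan\beta$ with $\tan t(\alpha)$; your derivation of $2-AB>0$ from $0\le A\le 1$ and $|B|\le 2$ is cleaner than the paper's trigonometric bound, and it gives the strict positivity (away from $\alpha=\beta=0$) that the equivalence $r_0\le 0 \iff 2A-B\le 0$ needs, where the paper only shows $\ge 0$.
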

\begin{proof}
For all $\alpha \in [0,\pi/2]$ and $\beta \in [0,\alpha]$, we have that 
$$
2-A B 
=\tfrac{\sinn{\alpha}\coss{\beta}-\coss{3\alpha}\sinn{\beta}}{2\sinn{\alpha+\beta}}
=
\tfrac{\coss{\beta}\sinn{\alpha}}{2\sinn{\alpha+\beta}}
\left(
1- \tfrac{\coss{3\alpha}}{\sinn{\alpha}}\tann{\beta}
\right)\geq$$
$$\geq
\tfrac{\coss{\beta}\sinn{\alpha}}{2\sinn{\alpha+\beta}}
\left(
1- \tfrac{\coss{3\alpha}}{\coss{\alpha}}
\right).
$$
It is easy to see that $\coss{\alpha} \geq \coss{3\alpha}$, and therefore 
$2-A B  \geq 0$, for all $\alpha \in [0,\pi/2]$ and for all $\beta\in [0,\alpha]$.
Moreover, 
$$
2-A B-(2 A-B) 
= 
4 \cos^2(\alpha) \tfrac{\sinn{\alpha-\beta}}{\sinn{\alpha+\beta}} \geq 0. 
$$
This shows that for all $\alpha\in [\pi/4,\pi/2]$ and for all $\beta \in [0,\alpha]$, we have $\frac{2 A-B}{2-A B} \leq 1$. 

Next we calculate 
$$
2 A-B 
=
\frac{
2 \sinn{\alpha+\beta} -\sinn{\alpha-\beta}-\sinn{3\alpha + \beta}
}
{
\sinn{\alpha+\beta}
}.
$$
For the numerator of the expression, we further have\\
%\begin{align*}
$2 \sinn{\alpha+\beta} -\sinn{\alpha-\beta}-\sinn{3\alpha + \beta}
 =\\
\sinn{\beta}\left(3\coss{\alpha} - \coss{3\alpha} \right)
-
\coss{\beta}\left(\sinn{3\alpha} - \sinn{\alpha}\right) 
 =\\
\left(3\coss{\alpha} - \coss{3\alpha} \right)
\tann{\beta}
-
\left(\sinn{3\alpha} - \sinn{\alpha}\right) $\\
%\end{align*}
For all $\beta \in [0,\alpha]$ and $\alpha \in [0,\pi/2]$ we have $\tann{\beta}\geq 0$.
Moreover, 
$
\tfrac{3\sinn{\alpha} - \sinn{3\alpha}}{3\coss{\alpha} - \coss{3\alpha}} \geq 0
$
if and only if $\alpha \in [0,\pi/4]$. 
It follows that $2A-B \leq 0$ if and only if
$$
\alpha \in [0,\pi/4],
~~\textrm{and}~~
\beta \in \left[ 0, 
\arctann{\tfrac{\sinn{3\alpha} - \sinn{\alpha}}{3\coss{\alpha} - \coss{3\alpha}}}
\right].
$$
\end{proof}

Note that Lemma~\ref{lem: 1st derivative a more than pi/4} shows that $r_0 \in [-\cos{2\alpha},1]$, for all $\alpha \geq \pi/4$. 
The next lemma will be used to justify that $r=r_0(\alpha,\beta)$ is the worst adversarial choice, for any $\alpha \in [0,\pi/2]$, and for the best algorithmic choice $\beta=\beta(\alpha)$. 

\begin{lemma}
\label{lem:concavityy0}
For all $\alpha \in [0,\pi/2]$ and $\beta \in [0,\alpha]$, function $g(r)$ is concave at $r=r_0(\alpha,\beta)$. 
\end{lemma}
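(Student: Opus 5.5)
The plan is a direct second-derivative computation that exploits the vanishing of the numerator of $g'$ at the critical point. From the derivative computed above, write
$$
\frac{\dd}{\dd r} g(r) = \frac{N(r)}{2\,Q(r)^{3/2}}, \qquad N(r) = (AB-2)\,r + (2A-B), \qquad Q(r) = 1 + r^2 + Br .
$$
Here $N$ is simply the numerator $A(Br+2)-B-2r$ rearranged as an affine function of $r$. By definition, $r_0=\frac{2A-B}{2-AB}$ is exactly the root of $N$.

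Differentiating once more by the quotient rule gives
$$
g''(r) = \frac{N'(r)\,Q(r)^{3/2} - N(r)\,\tfrac{3}{2}Q(r)^{1/2}Q'(r)}{2\,Q(r)^{3}} .
$$
At $r=r_0$ the second term vanishes because $N(r_0)=0$. Hence
$$
g''(r_0) = \frac{N'(r_0)}{2\,Q(r_0)^{3/2}} = \frac{AB-2}{2\,Q(r_0)^{3/2}} .
$$
This avoids expanding $g''$ in full.

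It remains to sign the two factors.

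\textbf{Numerator.} In the proof of Lemma~\ref{lem: 1st derivative a more than pi/4} it was already shown that $2-AB\geq 0$ for all $\alpha\in[0,\pi/2]$ and $\beta\in[0,\alpha]$. Hence $AB-2\leq 0$.

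\textbf{Denominator.} With $B=2\cos(2\alpha)$, the quadratic $Q(r)=r^2+2\cos(2\alpha)\,r+1$ has discriminant $4\cos^2(2\alpha)-4\leq 0$. So $Q(r)\geq 0$ for all real $r$, with $Q(r)=0$ possible only when $\cos(2\alpha)=\pm1$ and $r=\mp1$.

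Combining the two signs gives $g''(r_0)\leq 0$, i.e.\ $g$ is concave at $r_0$.

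The only step needing care is the degenerate boundary. I expect this to be the main (minor) obstacle. The cases are $\alpha\in\{0,\pi/2\}$, where $Q(r_0)$ could vanish, and the case $2-AB=0$, where the second derivative is zero and concavity is only weak. I would handle these by noting two facts. First, for $\alpha=\pi/2$ we have $\beta\le\alpha$ with $\tan\alpha$ infinite, so $A=0$, $B=-2$ and $r_0=1$. Then $Q(1)=0$ corresponds to an $\OPT$ cost of zero, which is excluded, or treated as a limit. Second, at $\alpha=0$ we have $\beta=0$, so $A=0$, $r_0=-1$, and $r_0$ lies outside $[0,1]$. In the nondegenerate interior range we have $Q(r_0)>0$, and the displayed formula gives concavity. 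Strict concavity holds whenever $2-AB>0$, which by the factorization $2-AB=\tfrac{\cos\beta\sin\alpha}{2\sin(\alpha+\beta)}\bigl(1-\tfrac{\cos 3\alpha}{\sin\alpha}\tan\beta\bigr)$ is the generic case.
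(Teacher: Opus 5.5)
Your proof is correct, and it is much shorter than the paper's. The paper expands $g''(r)$ in full and substitutes $r_0$. It then uses \textsc{Mathematica} to factor the result into a trigonometric product, whose sign it settles with two claims. Claim~1 is $\sin\alpha\cos\beta+\cos\alpha(1-2\cos 2\alpha)\sin\beta\ge 0$, and Claim~2 is $F(\alpha,\beta)\le 0$, obtained via $\cos(3\alpha+\beta)\le 1$.

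You instead use $N(r_0)=0$ to kill the cross term of the quotient rule. The sign of $g''(r_0)$ is then just the sign of the slope $AB-2$ of the affine numerator of $g'$, given $Q(r_0)>0$.

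The two proofs rest on the same facts in disguise:
\begin{itemize}
\item Since $\cos\alpha(1-2\cos 2\alpha)=-\cos 3\alpha$, the paper's Claim~1 is a positive multiple of $2-AB$.
\item The paper's numerator at $r_0$ equals $2N'(r_0)Q(r_0)=2(AB-2)Q(r_0)$. So Claim~2 amounts to $Q(r_0)\ge 0$, which your discriminant observation gives for free (equivalently, $Q(r)=(r+\cos 2\alpha)^2+\sin^2 2\alpha$).
\end{itemize}

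Your formulation also gives more than the lemma states. Because $g'(r)$ has the sign of an affine function with slope $AB-2<0$, $g$ increases before $r_0$ and decreases after it. Hence $r_0$ is the global maximizer of $g$, which is what Lemma~\ref{lem:PerformanceFixedAngle-y0} actually uses.

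One correction to your boundary discussion. The case $2-AB=0$ is not one of ``weak concavity at $r_0$'': there $r_0=(2A-B)/(2-AB)$ is undefined. Moreover, this case never occurs for $\alpha\in(0,\pi/2)$. We have $2-AB=2(\sin\alpha\cos\beta-\cos 3\alpha\sin\beta)/\sin(\alpha+\beta)$, and the numerator is strictly positive:
\begin{itemize}
\item when $\cos 3\alpha\le 0$, it is at least $\sin\alpha\cos\beta>0$;
\item otherwise, it is at least $\sin\alpha(\cos\alpha-\cos 3\alpha)=2\sin^2\alpha\sin 2\alpha>0$.
\end{itemize}
So you in fact get strict concavity, $g''(r_0)<0$, on the whole open range. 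The endpoints $\alpha\in\{0,\pi/2\}$ are degenerate for both your argument and the paper's, and your brief treatment of them is adequate.
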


\begin{proof}
We compute 
$$
\frac{\dd^2}{\dd r^2} g(r) = 
\frac{-4 A B+3 B^2-4+(-A B^2-12 A+8 B)r+(8-4 A B)r^2}
{4 (r (B+r)+1)^{5/2}}
$$
The numerator of the above expression, when $r=r_0$, simplifies to the following product of rational functions (in trigonometric expressions)
$$
\ignore{
8 \frac{-2 + \coss{4 \alpha} + \coss{2 \beta} + \coss{2 \alpha + 2\beta} - \coss{4 \alpha + 2\beta} \sec^2(\beta) \sin^2(\alpha)}
{\left(\tann{\alpha} + \tann{\beta}\right) \left(\tann{\alpha} + \tann{\beta} - 2 \coss{2 \alpha} \tann{\beta}\right)}
}
\frac{2\sin ^2(2 \alpha)}
{\sin (\alpha+\beta)}
\cdot
 \frac{ 
4\sinn{\alpha}\sinn{\beta}\coss{3\alpha+\beta}
-2\left(\sin^{2}\alpha+\sin^{2}\beta\right)
 }{\sin (\alpha) \cos (\beta)+\cos (\alpha) (1-2 \cos (2 \alpha)) \sin (\beta)}.
$$
Figure~\ref{fig:concavityy0} plots the above expression for $\alpha \in [0,\pi/2]$ and for all $\beta \in [0,\alpha]$. Below we show formally that the expression is non-positive, implying the claim of the lemma. 
\begin{figure}[h!]
  \centering
  \includegraphics[width=0.45\textwidth]{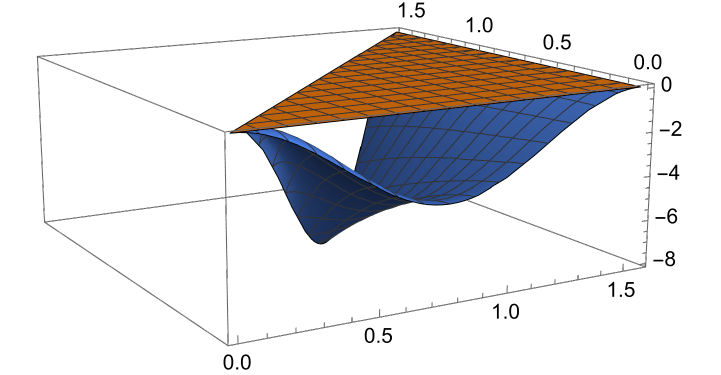}
  \caption{The plot of $g''(r)$ over $\alpha \in [0,\pi/2]$ and $\beta \in [0,\alpha]$, showing concavity.}
  \label{fig:concavityy0}
\end{figure}

First we observe that $\tfrac{2\sin ^2(2 \alpha)}{\sin (\alpha+\beta)}\geq 0$, for all $\alpha \in [0,\pi/2]$, and for all $\beta \in [0,\alpha]$, hence, we focus on the second factor of the product of rational functions. To that end, we show that the denominator is non-negative (see claim 1 below), and the numerator non-positive (see claim 2 below). 

\emph{Claim 1:} For all $\alpha \in [0,\pi/2]$, and for all $\beta \in [0,\alpha]$, we have 
$$\sin (\alpha) \cos (\beta)+\cos (\alpha) (1-2 \cos (2 \alpha)) \sin (\beta) \geq 0.$$ 

In order to prove claim 1, we set $C=1-2\cos2\alpha$, and we consider two cases for the value of $\alpha$ relative to $\pi/6$. 
In the 1st case we consider $\alpha\ge\pi/6$. Then $C\ge0$, and, with $0\le\beta\le\alpha\le\pi/2$, each factor is non‑negative.
In the 2nd case we consider $0\leq\alpha\leq\pi/6$, for which $C<0$.  Set $D=-C=2\cos2\alpha-1$, and note that $D \in[0,1]$.  
Since $\sin$ increases and $\cos$ decreases on $[0,\pi/2]$, we have $\sin\beta\le\sin\alpha$ and $\cos\beta\ge\cos\alpha$, hence
$$
\sinn{\alpha}\coss{\beta}-\coss{\alpha} D\sinn{\beta}
\ge
\sinn{\alpha}\coss{\alpha}(1-D)
\geq 0,
$$
as wanted. 

\emph{Claim 2:} For all $\alpha \in [0,\pi/2]$, and for all $\beta \in [0,\alpha]$, we have 
$$
F(\alpha,\beta):=
4\sinn{\alpha}\sinn{\beta}\coss{3\alpha+\beta}
-2\left(\sin^{2}\alpha+\sin^{2}\beta\right)
\leq 0.$$ 
Towards proving claim 2, we use $\coss{3\alpha+\beta}\leq 1$, so that we have 
$$
F(\alpha,\beta)\leq
4\sinn{\alpha}\sinn{\beta}
-2\bigl(\sin^{2}\alpha+\sin^{2}\beta\bigr)
=-2\bigl(\sinn{\alpha}-\sinn{\beta}\bigr)^{2}\le0.
$$
\end{proof}

We are now ready to provide a further simplification of the competitive ratio as described in Lemma~\ref{lem:PerformanceFixedAngle}. 
The next lemma claims that $r_0=r_0(\alpha,\beta)$, as in~\eqref{equa:advchoice}, is the worst adversarial choice determining the competitive ratio of the \FA %Fixed-Angle
algorithm. 

\begin{lemma}
\label{lem:PerformanceFixedAngle-y0}
Let $X_1=\{(\alpha,\beta): \alpha \in [0,\pi/4], \beta \in [t(\alpha), \alpha]\}$
and $X_2=\{(\alpha,\beta): \alpha \in [\pi/4,\pi/2], \beta \in [0, \alpha]\}$. 
Then, for all $(\alpha,\beta) \in X_1 \cup X_2$,  
the competitive ratio of the \FA %Fixed-Angle 
algorithm equals 
$
f_1(\alpha,\beta,r_0),
$
where $r_0=r_0(\alpha,\beta)$ is as in~\eqref{equa:advchoice}.
\end{lemma}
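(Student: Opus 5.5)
By Lemma~\ref{lem:PerformanceFixedAngle}, the competitive ratio equals $\max_{r\in I} f_1(\alpha,\beta,r)$ with $I=[\max\{0,-\cos 2\alpha\},1]$. By \eqref{equa: alternative f1}, $f_1$ is a positive factor (independent of $r$) times $g(r)$. So the plan is to show that, for $(\alpha,\beta)\in X_1\cup X_2$, the maximum of $g$ over $I$ is attained at $r_0$, and that $r_0\in I$.

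\textbf{Step 1 ($r_0\in I$).} By Lemma~\ref{lem: 1st derivative a more than pi/4}, $r_0\le 1$ always.
\begin{itemize}
\item On $X_1$, $\beta\ge t(\alpha)$, and the sign computation of $2A-B$ in that lemma gives $2A-B\ge 0$. Since $2-AB>0$, this yields $r_0\ge 0=\max\{0,-\cos2\alpha\}$.
\item On $X_2$, the lemma gives $r_0\ge 0$. We still need $r_0\ge -\cos 2\alpha$; this is the remark following Lemma~\ref{lem: 1st derivative a more than pi/4}. I would verify it by clearing the positive denominator $2-AB$, i.e.\ showing $2A-B+\cos(2\alpha)(2-AB)\ge 0$. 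Substituting $B=2\cos2\alpha$, this reduces to $A(2-2\cos^2 2\alpha)\ge 0$, which is $2A\sin^2 2\alpha\ge 0$ and holds since $A\ge0$.
\end{itemize}

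\textbf{Step 2 ($r_0$ is the global maximizer on $I$).} The derivative computed earlier is
\[
g'(r)=\frac{(AB-2)r+(2A-B)}{2\,(r^2+Br+1)^{3/2}}.
\]
The denominator is positive on $I$: $r^2+Br+1=(r+\cos2\alpha)^2+\sin^2 2\alpha>0$ except at the degenerate endpoint $\alpha=\pi/2$, which I would treat as a limit. The numerator is affine in $r$ with leading coefficient $AB-2\le 0$ (shown in Lemma~\ref{lem: 1st derivative a more than pi/4}), and it vanishes exactly at $r_0$. Hence $g'\ge0$ for $r\le r_0$ and $g'\le 0$ for $r\ge r_0$, so $g$ is unimodal with maximum at $r_0$. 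Since $r_0\in I$ by Step 1, $\max_{r\in I} g(r)=g(r_0)$, and the competitive ratio is $f_1(\alpha,\beta,r_0)$. This sign argument makes Lemma~\ref{lem:concavityy0} unnecessary except as a consistency check; I would cite it to confirm that $r_0$ is a maximum rather than a minimum.

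\textbf{Main obstacle.} The main difficulty is the degenerate cases, where $2-AB=0$ and $r_0$ is undefined. From the bound in Lemma~\ref{lem: 1st derivative a more than pi/4}, this can occur only when $\alpha=\pi/2$ or $\beta=0$ with $\cos\alpha=\cos3\alpha$, i.e.\ at boundary points. I would handle these by continuity of $f_1$ in $(\alpha,\beta)$ or exclude them explicitly.
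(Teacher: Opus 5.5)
Your proof is correct. It shares the paper's skeleton: reduce via Lemma~\ref{lem:PerformanceFixedAngle} to maximizing $g$ over $[\max\{0,-\cos 2\alpha\},1]$, check that $r_0$ lies in this interval, and show $r_0$ is the maximizer. The difference is how you certify maximality.

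The paper appeals to Lemma~\ref{lem:concavityy0}, a lengthy second-derivative computation (with the trigonometric Claims~1 and~2) giving $g''(r_0)\le 0$. By itself this only shows $r_0$ is a local maximum. The step to ``global maximizer'' implicitly uses that $r_0$ is the unique critical point, which the paper never states. Your observation that the numerator of $g'$ is $(AB-2)r+(2A-B)$, affine with slope $AB-2<0$ and root $r_0$, makes this explicit. It fixes the sign pattern of $g'$ directly, hence gives unimodality and global maximality on any interval containing $r_0$. So the concavity lemma is superfluous for this statement, and your route is shorter and logically complete.

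You also close a gap in the membership step. The paper remarks that Lemma~\ref{lem: 1st derivative a more than pi/4} gives $r_0\ge -\cos 2\alpha$ for $\alpha\ge\pi/4$, but that lemma only yields $r_0\ge 0$ there. Your identity $r_0+\cos 2\alpha = 2A\sin^2 2\alpha/(2-AB)$ settles it, in fact for every $\alpha$.

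A minor remark on the degenerate cases: for $\beta=0<\alpha$ you have $A=0$ and $2-AB=2$, so $\beta=0$ is not itself a source of degeneracy. The only genuinely degenerate parameter points are $(\alpha,\beta)=(0,0)$ and $(\pi/2,\pi/2)$, where $A$ or $\tan\alpha$ is undefined. Your list contains both, so excluding them or passing to a limit, as you propose, is fine.
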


\begin{proof}
By Lemma~\ref{lem:PerformanceFixedAngle}, we know that for all $\alpha \in [0,\pi/2]$ and all $\beta \in [0,\alpha]$, the competitive ratio of the \FA %Fixed-Angle 
algorithm equals 
$$
\max_{r \in \left[ \max\{0,-\coss{2\alpha}\},1 \right]}  f_1(\alpha,\beta,r),
$$
that is, it is obtained as the solution to a constrained maximization problem in $r$. 
We show that the expression simplifies to 
$
f_1(\alpha,\beta,r_0),
$
for all $(\alpha,\beta) \in X_1 \cup X_2$. 

By the derivation of~\eqref{equa:advchoice}, the value $r_0 = r_0(\alpha,\beta)$ is a stationary point of $f_1(\alpha,\beta,r)$ in the unconstrained optimization problem (i.e., with no bounds on $r$). Moreover, Lemma~\ref{lem:concavityy0} shows that $r_0$ is in fact a global maximizer in this setting.
Lemma~\ref{lem: 1st derivative a more than pi/4} further ensures that $r_0$ lies within the interval $r \in \left[ \max\{0, -\coss{2\alpha}\}, 1 \right]$ whenever $(\alpha,\beta) \in X_1 \cup X_2$. Therefore, in this range, $r_0$ is also the unique maximizer of the constrained problem.
\end{proof}

Later, we will show that for $\alpha \in [0, \pi/4]$, restricting $\beta \in [0, t(\alpha)]$ yields a strictly worse competitive ratio than considering $\beta \in [t(\alpha), \pi/2]$. Therefore, Lemma~\ref{lem:PerformanceFixedAngle-y0} is sufficient to establish the upper bound on the competitive ratio stated in Theorem~\ref{thm: opt cr of fixed angle}, given the choice of $\beta$ in the theorem. In the next section, we prove that this choice of $\beta$ is indeed optimal, thereby completing the proof of the theorem.

\subsubsection{The Optimality of Angle $\beta_0$}
\label{sec: optimality of beta}

We start by deriving a lower bound to the competitive ratio achievable by the \FA %Fixed-Angle 
algorithm, for some special cases of parameters $\alpha,\beta$. 
\begin{lemma}
\label{lem: lower bound to cr}
For all $\alpha \in [0,\pi/4]$, and $\beta \in [0,t(\alpha)]$, we have that 
$$
\max_{r \in \left[ 0,1 \right]}  f_1(\alpha,\beta,r)
\geq 
\tfrac{1}{2} \sqrt{6-2 \coss{4 \alpha}}
$$
\end{lemma}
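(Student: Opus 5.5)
The plan is to bound the maximum from below by the single choice $r=0$, and then minimize the resulting expression over $\beta\in[0,t(\alpha)]$. The endpoint $r=0$ is the natural choice here: by Lemma~\ref{lem: 1st derivative a more than pi/4}, when $\alpha\le\pi/4$ and $\beta\le t(\alpha)$ we have $r_0\le 0$. So the stationary point lies outside $(0,1]$, and the constrained maximum sits at the left endpoint (only the inequality is needed, not this fact).

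\emph{Step 1: evaluate at $r=0$.} Since $\max_{r\in[0,1]} f_1(\alpha,\beta,r)\ge f_1(\alpha,\beta,0)$, and $g(0)=1$ in \eqref{equa: alternative f1}, we get
$$
f_1(\alpha,\beta,0)=\frac{2\sinn{\alpha}}{\coss{\beta}\left(\tann{\alpha}+\tann{\beta}\right)}
=\frac{2\sinn{\alpha}\coss{\alpha}}{\sinn{\alpha+\beta}}=\frac{\sinn{2\alpha}}{\sinn{\alpha+\beta}}.
$$

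\emph{Step 2: monotonicity in $\beta$.} For $\alpha\le\pi/4$ and $0\le\beta\le t(\alpha)\le\alpha$ we have $\alpha+\beta\le\alpha+t(\alpha)\le 2\alpha\le\pi/2$. On this range sine is increasing, so $\sinn{\alpha+\beta}\le\sinn{\alpha+t(\alpha)}$. Hence $f_1(\alpha,\beta,0)\ge \sinn{2\alpha}/\sinn{\alpha+t(\alpha)}$.

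\emph{Step 3: closed-form identity.} Write $N=\sinn{3\alpha}-\sinn{\alpha}$ and $D=3\coss{\alpha}-\coss{3\alpha}$, so that $\tann{t(\alpha)}=N/D$ with $D>0$. Then
$$
\sinn{\alpha+t(\alpha)}=\frac{D\sinn{\alpha}+N\coss{\alpha}}{\sqrt{N^2+D^2}}.
$$
Using the sine addition formula, the numerator simplifies:
$$
D\sinn{\alpha}+N\coss{\alpha}=\sinn{2\alpha}+\bigl(\sinn{3\alpha}\coss{\alpha}-\coss{3\alpha}\sinn{\alpha}\bigr)=2\sinn{2\alpha}.
$$
Expanding the other quantity with product-to-sum formulas gives
$$
N^2+D^2=2+8\cos^2\alpha-2\coss{2\alpha}-4\coss{\alpha}\coss{3\alpha}=6-2\coss{4\alpha}.
$$
Therefore $\sinn{2\alpha}/\sinn{\alpha+t(\alpha)}=\tfrac12\sqrt{6-2\coss{4\alpha}}$, which is exactly the claimed bound.

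The only delicate point is this trigonometric identity in Step 3; the rest is immediate. Two sanity checks support it. At $\alpha=\pi/4$ we have $t=0$ and both sides equal $\sqrt2$. As $\alpha\to0$ both sides tend to $1$. The degenerate case $\alpha=0$ (where $\sinn{2\alpha}=0$) is handled by continuity, or excluded because the ratio is understood as a limit.
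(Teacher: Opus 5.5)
Your proof is correct and follows the same route as the paper. You bound the maximum below by $f_1(\alpha,\beta,0)=\sin(2\alpha)/\sin(\alpha+\beta)$, use that this is non-increasing in $\beta$ (the paper via the derivative $-\cos(\alpha+\beta)\sin(2\alpha)/\sin^2(\alpha+\beta)$, you via monotonicity of sine on $[0,\pi/2]$), and evaluate at $\beta=t(\alpha)$. Your Step~3 also supplies an explicit verification of the identity $f_1(\alpha,t(\alpha),0)=\tfrac12\sqrt{6-2\cos(4\alpha)}$, which the paper only asserts.
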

\begin{proof}
Easy calculations show that for all $\alpha \in (0,\pi/4)$, 
$$
\tfrac{\dd}{\dd \beta} f_1(\alpha,\beta,0) = 
-\frac{\coss{\alpha + \beta} \sinn{2\alpha}}{\sin^2(\alpha + \beta)}
< 0,
$$
that is, $f_1(\alpha,\beta,0)$ is decreasing in $\beta \in [0,t(\alpha)]$. 
We conclude that 
$$
h(\beta) = \max_{r \in \left[ 0,1 \right]}  f_1(\alpha,\beta,r)
\geq 
f_1(\alpha,\beta,0)
\geq 
f_1(\alpha,t(\alpha),0)
= \tfrac{1}{2} \sqrt{6-2 \coss{4 a}}.
$$
\end{proof}

Elementary calculations show that the competitive ratio upper bound described in Theorem~\ref{thm: opt cr of fixed angle} is indeed strictly better that the lower bound described in Lemma~\ref{lem: lower bound to cr}, see Figure~\ref{fig:crvariousfunctions}. This shows that the optimality of $\beta_0$ as in Theorem~\ref{thm: opt cr of fixed angle} can be proven by restricting ourselves to optimizing 
$f_1(\alpha,\beta,r_0)$
over $(\alpha,\beta) \in X_1 \cup X_2$, as per Lemma~\ref{lem:PerformanceFixedAngle-y0}. 

\begin{figure}[h!]
  \centering
  \includegraphics[width=0.45\textwidth]{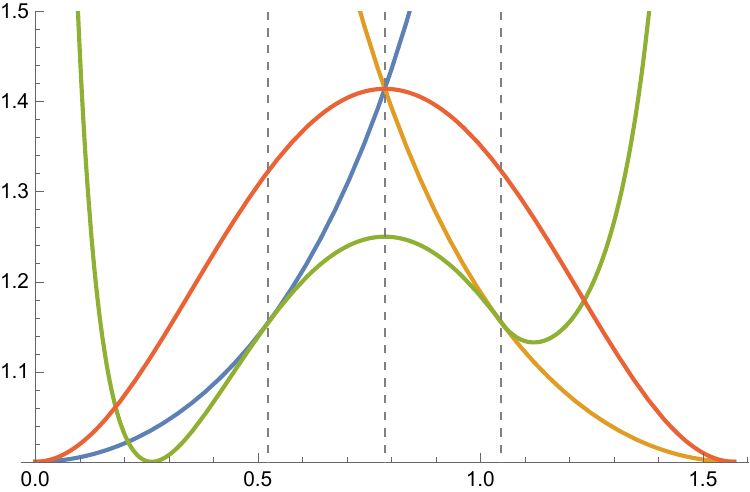}
  \caption{
  The competitive ratio upper bound of Theorem~\ref{thm: opt cr of fixed angle} is shown as
  the blue curve ($1/\coss{\alpha}$) for $\alpha \in [0,\pi/6]$, 
  the green curve ($f_1(\alpha,\beta_0,r_0)$) for $\alpha \in (\pi/6,\pi/3)$, 
  and the yellow curve ($1/\sinn{\alpha}$) for $\alpha \in [\pi/3,\pi/2]$. 
  The concatenation of these curves result into the smooth curve depicted in Figure~\ref{fig: cr of fixed angle-angle}.
  The red curve corresponds to the lower bound of Lemma~\ref{lem: lower bound to cr} obtained for $\alpha \in [0,\pi/4]$ when $\beta$ is restricted to set $[0,t(\alpha)]$. 
  Finally, the dotted lines corresponds to the threshold values $x=\pi/6,\pi/4,\pi/3$. 
  }
  \label{fig:crvariousfunctions}
\end{figure}

Next we examine the monotonicity of function $f_1(\alpha,\beta,r_0)$. 

\begin{lemma}
\label{lem: critical points}
For each $\alpha \in [0,\pi/2]$, let $r_0 = r_0(\alpha,\beta)$ be defined as in~\eqref{equa:advchoice}. 
Using formula~\eqref{equa:cr formula}, we have $\tfrac{\dd}{\dd \beta} f_1(\alpha,\beta,r_0) = 0$ with respect to $\beta$, for $0 \leq \beta \leq \alpha$, if and only if
$$
\beta = \tfrac{1}{2} \arccos\left(\frac{-2 \cos (4\alpha) + \cos (6\alpha) + 2}{3 - 2 \cos (4\alpha)}\right)
$$
for $\alpha \in [\pi/6, \pi/3]$.
\end{lemma}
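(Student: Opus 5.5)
The plan is to compute $\Phi(\beta):=f_1(\alpha,\beta,r_0(\alpha,\beta))$ in closed form first, so that the stationarity condition becomes an explicit equation in $\beta$. Substituting $r_0$ from~\eqref{equa:advchoice} into the square root by brute force would be messy. Instead I will use the fact that $g(r_0)$ is the maximum of the ratio of two quadratics; Lemma~\ref{lem:concavityy0} justifies this as the relevant critical value. Put $\lambda=g(r_0)^2$. Then
\[
(1+Ar)^2-\lambda(1+Br+r^2)
\]
has a double root at $r_0$, so its discriminant in $r$ vanishes:
\[
(2A-\lambda B)^2=4(A^2-\lambda)(1-\lambda).
\]
Discarding the root $\lambda=0$ leaves
\[
\lambda=\frac{4(1-AB+A^2)}{4-B^2}=\frac{1-AB+A^2}{\sin^2(2\alpha)},
\]
since $B=2\cos(2\alpha)$. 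By~\eqref{equa: alternative f1},
\[
\Phi(\beta)^2=\frac{4\sin^2\alpha}{\cos^2\beta\,(\tan\alpha+\tan\beta)^2}\cdot\frac{1-AB+A^2}{\sin^2 2\alpha}.
\]
This expression contains no square root of an $r_0$-dependent quantity. Since $\Phi>0$, its critical points in $\beta$ coincide with those of $\Phi^2$.

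Next I will change variables to $u=\tan\beta\in[0,\tan\alpha]$ and write $a=\tan\alpha$. Using $1/\cos^2\beta=1+u^2$ and $A=2u/(a+u)$, the function $\Phi^2$ becomes, up to a positive factor depending only on $\alpha$,
\[
R(u)=\frac{(1+u^2)\bigl((a+u)^2-2Bu(a+u)+4u^2\bigr)}{(a+u)^4}.
\]
I will differentiate $R$ and clear the positive denominator $(a+u)^5$. I expect the numerator to factor as a polynomial in $u$ of low degree, probably quadratic after cancelling a common factor, together with a factor of fixed sign on the domain. The sign of that extra factor would be checked in the same way as Claims~1 and~2 in Lemma~\ref{lem:concavityy0}. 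The remaining equation $N(u)=0$ will then be rewritten through $\cos 2\beta=(1-u^2)/(1+u^2)$, with $B$ and $a$ expressed through $\cos 2\alpha$. I will verify symbolically (in \textsc{Mathematica}, as the paper does elsewhere) that the resulting relation is equivalent to
\[
\cos 2\beta=\frac{2-2\cos 4\alpha+\cos 6\alpha}{3-2\cos 4\alpha}.
\]
Because $2\beta\in[0,\pi]$, this fixes $\beta$ uniquely via $\arccos$.

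Finally, I will show that this unique candidate lies in $[0,\alpha]$ exactly when $\alpha\in[\pi/6,\pi/3]$. At $\alpha=\pi/3$ the right-hand side equals $1$, giving $\beta=0$. At $\alpha=\pi/6$ it equals $1/2=\cos(2\alpha)$, giving $\beta=\alpha$. Both conditions $\beta\ge 0$ and $\beta\le\alpha$ then reduce to sign conditions on trigonometric polynomials in $\alpha$; I will factor these, expecting factors such as $1-2\cos 2\alpha$, which vanishes at $\pi/6$, and $1+2\cos 2\alpha$, which vanishes at $\pi/3$. Outside $[\pi/6,\pi/3]$ the formula yields a $\beta$ outside $[0,\alpha]$, or an $\arccos$ argument outside $[-1,1]$, so there is no critical point there. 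The main obstacle I anticipate is the factorization of the derivative's numerator and the clean identification of its root with the stated $\arccos$ expression. Spurious roots that arise from squaring, or from the extra factor, must be ruled out by showing that factor never vanishes on $0\le\beta\le\alpha$.
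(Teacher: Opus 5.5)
Your route differs from the paper's at the decisive step, and it is the stronger one. The paper differentiates $f_1^2$ in $x=\cos 2\beta$ with $z=\sqrt{1-x^2}$, solves the numerator for $z$, and squares. This gives a quartic whose roots $x_1,\dots,x_4$ are then sorted out with plots. Your discriminant trick yields, after simplification, $\Phi^2=(1-AB+A^2)/\sin^2(\alpha+\beta)$, which agrees with the paper's closed form. The substitution $u=\tan\beta$ then makes everything rational, so nothing is squared. Carrying it out, the $u^4$ terms cancel, and the numerator of $R'$ (after clearing $(a+u)^5$) is
\[
\frac{2}{1+a^2}\bigl((1+5a^2)u-a(3-a^2)\bigr)\bigl(3au^2+(a^2-1)u+a\bigr).
\]
The quadratic factor has product of roots $1/3$ and sum $(1-a^2)/(3a)$. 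Its roots are real only for $\alpha\le\pi/12$ or $\alpha\ge 5\pi/12$, and their images under $u\mapsto(1-u^2)/(1+u^2)$ are exactly the paper's $x_{2,3}$. When real, they are either both negative or both larger than $a$, so the quadratic is positive on $[0,a]$. The linear factor gives $\tan\beta_0=\tan\alpha\,(3-\tan^2\alpha)/(1+5\tan^2\alpha)=\sin3\alpha/(2\cos\alpha-\cos3\alpha)$, consistent with the identity the paper uses later. As a bonus, the sign of $\Phi'(\beta)$ on $[0,\alpha]$ is the sign of the linear factor. You therefore get unimodality outright and do not need the numerically verified convexity lemma to conclude that $\beta_0$ is the minimizer.

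The gap is in your last step. The equation $(1+5a^2)u=a(3-a^2)$ is not equivalent to $\cos2\beta=\frac{2-2\cos4\alpha+\cos6\alpha}{3-2\cos4\alpha}$. Since $(1-u^2)/(1+u^2)$ is even in $u$, the latter only says $\tan\beta=\pm u_0$, and this discards exactly the sign information that decides the range $\alpha>\pi/3$.
\begin{itemize}
\item For $\alpha>\pi/3$ we have $u_0<0$ but $0<|u_0|<\tan\alpha$. So the $\arccos$ argument lies in $(\cos2\alpha,1)$, and $\tfrac12\arccos(\cdot)$ lands inside $(0,\alpha)$.
\item For example, at $\alpha=2\pi/5$ it gives $\beta\approx0.39$, although $\Phi$ is strictly increasing on $[0,\alpha]$.
\end{itemize}
So your claim that outside $[\pi/6,\pi/3]$ the formula gives $\beta\notin[0,\alpha]$, or an argument outside $[-1,1]$, is false. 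The argument equals $(1-u_0^2)/(1+u_0^2)\in(-1,1]$ for every $\alpha$. Consequently the ``only if'' direction for $\alpha>\pi/3$ is not established. Note also that $\tfrac12\arccos(\cdot)\ge0$ holds automatically, so the sign change of $1+2\cos2\alpha$ can only be seen through $u_0$.

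The fix is to test the root $u_0$ itself:
\begin{itemize}
\item $u_0\ge0$ iff $\tan^2\alpha\le3$ iff $\alpha\le\pi/3$;
\item $u_0\le\tan\alpha$ iff $\tan^2\alpha\ge1/3$ iff $\alpha\ge\pi/6$.
\end{itemize}
Only for these $\alpha$, where $u_0\ge0$, should you rewrite $\tan\beta_0=u_0$ in the stated $\arccos$ form. The paper retains this sign information through its condition $z(x_1)\ge0$.
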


\begin{proof}
The roots to $\tfrac{\dd}{\dd \beta} f_1(\alpha,\beta,r_0)$ can be found by a series of algebraic and trigonometric manipulations. 
In the calculations below, we restrict our arguments to $\alpha \in [0,\pi/2]$. 
First we invoke the formula~\eqref{equa:advchoice} of $r_0=r_0(\alpha,\beta)$, and after elementary calculations, the competitive ratio function simplifies to 
$$
f_1(\alpha,\beta,r_0)
= 
\frac{\sqrt{-\coss{2 (\alpha+\beta)} + \coss{4 \alpha + 2 \beta} - \coss{4 \alpha} - \coss{2 \beta} + 2}}{\sqrt{2} \sin^2(\alpha+\beta)}
$$
Since $f_1(\alpha,\beta,r_0)\geq 0$, the candidate minimizers of the function are the same as the roots to 
$
\tfrac{\vartheta}{\vartheta \beta} f_1^2(\alpha,\beta,r_0).
$
The advantage of the latter function is that, using transformation $x=\coss{2\beta}$, 
the numerator of the resulting expression is the following rational function in $x$ and $z=z(x)=\sqrt{1-x^2}$. 
$$
-\frac{2 (x-2) z \cos (2 \alpha)-4 (x-1) \sin (\alpha) (x \cos (5 \alpha)-z \sin (5 \alpha))}
{2 z (-x \cos (2 \alpha)+z \sin (2 \alpha)+1)^3}-$$
$$-\frac{2 (x-4) x \sin (2 \alpha)+5 \sin (2 \alpha)-\sin (4 \alpha)+\sin (6 \alpha)+2 z}{2 z (-x \cos (2 \alpha)+z \sin (2 \alpha)+1)^3}$$
Next we compute the roots to that rational function. 
First it is easy to see that the denominator has $x=\coss{2\alpha}$ as a root. That would be useful momentarily, as we will show that it is also a root to the numerator. 
I order to find the roots to the numerator, first we consider the equation that makes the numerator $0$ with respect to $z$, finding that 
\begin{equation}
\label{equa: zexp}
z(x) = \frac{\sin (2 \alpha) ((1-2 (x-1) x) \cos (2 \alpha)+(2 (x-1) x-1) \cos (4 \alpha)+3 (x-1))}{2 (x-1) \sin (\alpha) \sin (5 a)+(x-2) \cos (2 \alpha)+1}
\end{equation}
Sine $z=\sqrt{1-x^2}\geq 0$, any roots with respect to $x$ should also make $z(x)$ non negative. 
It follows that for those $x$'s, we must have
$$
1-x^2 = 
\left( \frac{\sin (2 \alpha) ((1-2 (x-1) x) \cos (2 \alpha)+(2 (x-1) x-1) \cos (4 \alpha)+3 (x-1))}{2 (x-1) \sin (\alpha) \sin (5 \alpha)+(x-2) \cos (2 \alpha)+1}
\right)^2
$$
which easily translates to the degree-4 polynomial equation in $x$, $a_0+a_1x+a_2x^2+a_3x^3+a_4x^4=0$, where
\begin{align*}
a_0 & = \frac{1}{8} (-12 \cos (2 \alpha)-\cos (4 \alpha)+22 \cos (6 \alpha)-\cos (8 \alpha)-10 \cos (10 \alpha)+5 \cos (12 \alpha)+5) \\
a_1 & = \frac{1}{2} (5 \cos (2 \alpha)-15 \cos (4 \alpha)-11 \cos (6 \alpha)+6 \cos (8 \alpha)+2 \cos (10 \alpha)-\cos (12 \alpha)+10) \\
a_2 & = 4 \sin^2(2 \alpha) (-3 \cos (2 \alpha)+5 \cos (4 \alpha)-4) \\
a_3 & = 2 (\cos (2 \alpha)-6 \cos (4 \alpha)+\cos (8 \alpha)+5) \\
a_4 & = 2 \cos (4 \alpha)-3
\end{align*}
The roots to that polynomial are the following
\begin{align*}
x_1(\alpha) & =   \frac{2 \cos (4 \alpha) - \cos (6 \alpha) - 2}{2 \cos (4 \alpha) - 3} \\
x_{2,3}(\alpha) & =   \pm \sqrt{2 \cos^2(2 \alpha) \cos (4 \alpha) - \cos^2(2 \alpha)} - \cos (4 \alpha) + 1 \\
x_4(\alpha) & = \cos (2 \alpha),
\end{align*}
where $x_{2,3}(\alpha)$ are real, only if $\alpha\leq \pi/12$ or $\alpha \geq 5\pi/12$, see also Figure~\ref{fig: rootanalysis1}. 

Since $x_4(\alpha)$ is also a root of the denominator of the rational function, it cannot be a critical point of $f_1(\alpha,\beta,r_0)$. All of the remaining claims can be verified through direct calculations. To keep the argument short, we support our claims by appropriate figures and plots.

Next, we show that the only valid critical points are $x_1(\alpha)$, and only for $\alpha \in [\pi/6,\pi/3]$. This completes the proof of the lemma, since $x_1(\alpha) = \coss{2\beta}$.
Any candidate optimizer $x$ must satisfy $z(x) \geq 0$. Figure~\ref{fig: rootanalysis1} shows that the non-negative roots of the rational function are $x_1(\alpha)$ for $\alpha \in [0,\pi/3]$, and $x_{2,3}(\alpha)$ for $\alpha \in [0,\pi/12]$.
We are minimizing $f_1(\alpha,\beta,r_0)$ subject to $0 \leq \beta \leq \alpha$, which implies $1 \geq \coss{2\beta} \geq \coss{2\alpha}$. Hence, any root $x = \coss{2\beta}$ must lie in the interval $[\coss{2\alpha}, 1]$.
Figure~\ref{fig: rootanalysis2} shows that the only such roots are $x_1(\alpha)$ for $\alpha \geq \pi/6$ and $x_{2,3}(\alpha)$ for $\alpha \geq 5\pi/12$. Since $x_{2,3}(\alpha)$ are reals only for $\alpha \in [0,\pi/12] \cup [5\pi/12,\pi/2]$, they are not feasible in the overlapping range. Therefore, $x_1(\alpha)$ is the only valid critical point for $\alpha \in [\pi/6,\pi/3]$.

\begin{figure}[ht!]
  \centering
  \begin{subfigure}[b]{0.45\textwidth}
    \includegraphics[width=\linewidth]{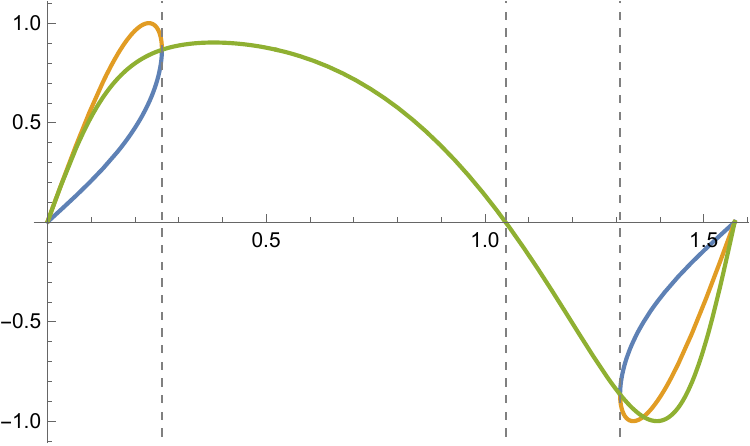}
    \caption{
    Plot of $z(x)$, as defined in~\eqref{equa: zexp}, for different roots of the rational function. 
    The blue, yellow, and green curves correspond to $z(x_2(\alpha))$, $z(x_3(\alpha))$, and $z(x_1(\alpha))$, respectively.
    }
    \label{fig: rootanalysis1}
  \end{subfigure}
  \hfill
  \begin{subfigure}[b]{0.45\textwidth}
    \includegraphics[width=\linewidth]{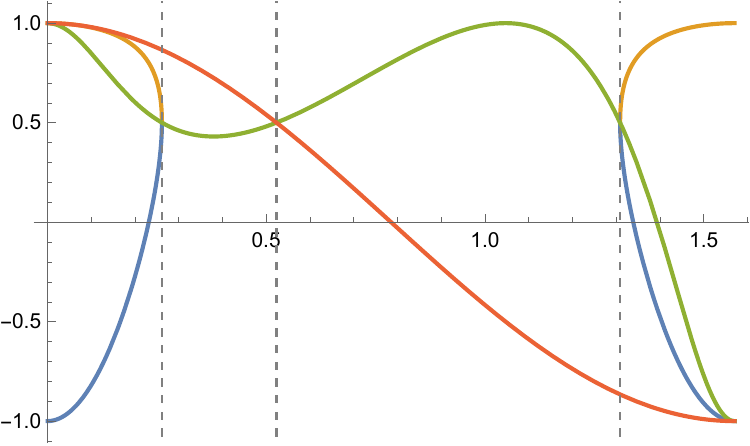}
    \caption{
    Plot of $x_1(\alpha)$ (green), $x_2(\alpha)$ (blue), and $x_3(\alpha)$ (yellow), compared to $\coss{2\alpha}$ (red).
    }
    \label{fig: rootanalysis2}
  \end{subfigure}
  \caption{
  Supporting plots for the analysis of the critical points of $f_1(\alpha,\beta,r_0)$ as a function of $\beta$.
  }
  \label{fig: rootanalysis}
\end{figure}
To conclude, the the only roots $x(\alpha)=\coss{2\beta}$ that are at least $\cos{2\alpha}$ and make $z(x)$ non-negative, corresponding to candidate optimizers of $f_1(\alpha,\beta,r_0)$ are indeed $x_1(\alpha)$, and only for $\alpha \in [\pi/6,\pi/3]$, as claimed. 
\end{proof}

Next we examine the second derivative of function $f_1(\alpha,\beta,r_0)$. 
\begin{lemma}
\label{lem: f1 convex in beta}
For each $\alpha \in [0,\pi/2]$, let $r_0 = r_0(\alpha,\beta)$ be defined as in~\eqref{equa:advchoice}. 
Then, function $f_1(\alpha,\beta,r_0)$ is strictly convex in $\beta \in [0,\alpha]$. 
\end{lemma}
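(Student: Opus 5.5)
We will work with the closed form from the proof of Lemma~\ref{lem: critical points},
$$
f_1(\alpha,\beta,r_0)=\frac{\sqrt{N(\alpha,\beta)}}{\sqrt{2}\,\sin^2(\alpha+\beta)},\qquad N(\alpha,\beta)=2-\coss{2(\alpha+\beta)}+\coss{4\alpha+2\beta}-\coss{4\alpha}-\coss{2\beta}.
$$
Our first step is to simplify $N$ by sum-to-product identities. Pair $\coss{4\alpha+2\beta}-\coss{4\alpha}=-2\sinn{4\alpha+\beta}\sinn{\beta}$ and $1-\coss{2\beta}=2\sin^2\beta$, and write $1-\coss{2(\alpha+\beta)}=2\sin^2(\alpha+\beta)$. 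This gives
$$N=2\sin^2(\alpha+\beta)+2\sinn{\beta}\bigl(\sinn{\beta}-\sinn{4\alpha+\beta}\bigr).$$
Setting $u=\alpha+\beta$, we therefore have
$$
f_1^2(\alpha,\beta,r_0)=\frac{1}{\sin^2 u}+\frac{\sinn{\beta}\bigl(\sinn{\beta}-\sinn{4\alpha+\beta}\bigr)}{\sin^4 u}.
$$
The first term $1/\sin^2 u$ is convex in $\beta$ on $u\in(0,\pi)$. The hope is that the second term, after dividing by $\sin^4u$, can be written as a combination of simple convex pieces in $u$ (powers of $\csc u$ and $\cot u$) once $\beta=u-\alpha$ is substituted.

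Our plan is to prove strict convexity of $F(\beta):=f_1^2(\alpha,\beta,r_0)$ first, and then transfer it to $f_1=\sqrt{F}$. Convexity of $F$ alone does not give convexity of $\sqrt F$, so we will instead verify directly that $2FF''-(F')^2>0$. This inequality is equivalent to $(\sqrt F)''>0$, since $(\sqrt F)''=\frac{2FF''-(F')^2}{4F^{3/2}}$. Positivity of $F$ is immediate, because it is the square of a ratio of positive costs.

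Concretely, we will:
\begin{enumerate}
\item Differentiate $F$ twice in $\beta$ with symbolic help, as the paper does elsewhere with \textsc{Mathematica}.
\item Clear the common positive factor $\sin^{-8}(\alpha+\beta)$ from $2FF''-(F')^2$.
\item Write the remaining numerator as a trigonometric polynomial $Q(\alpha,\beta)$.
\end{enumerate}

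The main obstacle is proving $Q>0$ on the triangle $0\le\beta\le\alpha\le\pi/2$. We will attempt this as in Claims~1 and~2 of Lemma~\ref{lem:concavityy0}:
\begin{enumerate}
\item Rewrite $Q$ in the variables $u=\alpha+\beta$ and $v=\alpha-\beta\ge0$, so that the constraints become $0\le v\le u\le\pi$ with $u+v\le\pi$.
\item Group $Q$ into a sum of squares plus terms that are manifestly nonnegative there, using bounds such as $\cos(\cdot)\le1$ to drop mixed terms as in Claim~2, and, where needed, splitting into the cases $\alpha\le\pi/4$ and $\alpha\ge\pi/4$ according to the sign of $\coss{2\alpha}$.
\end{enumerate}
If a clean decomposition does not appear, the fallback is the following. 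Consider $Q$ as a polynomial in $s=\tann{\beta}$ and $c=\coss{2\alpha}$ over the rectangle, after the substitution $\tan\beta\in[0,\tan\alpha]$. Check that its minimum on the boundary edges $\beta=0$, $\beta=\alpha$, $\alpha=0$ and $\alpha=\pi/2$ is positive, and show that there are no interior critical points with $Q\le0$, supported by a plot of $Q$ analogous to Figure~\ref{fig:concavityy0}.

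The endpoint $\alpha=0$ (and $\beta=\alpha=\pi/2$) is degenerate, since there $\sin u\to0$ and $F$ blows up. We will treat it by continuity, claiming strict convexity only on the nondegenerate range, which is all that is needed to conclude that the critical point of Lemma~\ref{lem: critical points} is the unique minimizer.
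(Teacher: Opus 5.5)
Your opening steps are correct. The sum-to-product rewriting gives $f_1^2=\csc^2u+\sin\beta\,(\sin\beta-\sin(4\alpha+\beta))/\sin^4u$ with $u=\alpha+\beta$, and $(\sqrt F)''=(2FF''-(F')^2)/(4F^{3/2})$ is the right criterion. The gap is that the whole content of the lemma is the inequality $Q>0$, and you neither compute $Q$ nor exhibit the decomposition, so as written nothing is proved. The tactics you sketch also do not work as stated:
\begin{itemize}
\item Termwise convexity fails. After $\beta=u-\alpha$, your second term equals $\sin^22\alpha\,\csc^4u-2\sin2\alpha\,\cot u\csc^2u$, and $-\cot u\csc^2u$ is concave on $(0,\pi/2)$.
\item The factor to clear is $\sin^{-10}u$, not $\sin^{-8}u$.
\item In the fallback, the cleared numerator tends to $0$ at the corner $\alpha=\beta=\pi/2$ (on the edge $\alpha=\pi/2$ one has $F=\sec^2\beta$). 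So a positive minimum over the boundary cannot be certified.
\end{itemize}
What remains is a plot. That is exactly the paper's argument: the paper only checks numerically that $\partial^2_\beta f_1(\alpha,\beta,r_0)\ge 0.707$ on the domain. Also, $f_1(\alpha,\beta,r_0)$ is not a ratio of actual costs when $r_0<0$ (e.g.\ $\alpha<\pi/4$, $\beta<t(\alpha)$), so that is not why $F>0$.

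The missing idea is one more identity, after which your analytic route becomes short and rigorous. Let $k=\sin2\alpha$. Since $\sin\beta-\sin(4\alpha+\beta)=-2k\cos(2\alpha+\beta)$ and $\sin(u-\alpha)\cos(u+\alpha)=\tfrac12(\sin2u-k)$, your numerator is $\sin^2u-k\sin2u+k^2$. Therefore
\[
f_1(\alpha,\beta,r_0)^2=1+w^2,\qquad w=k\csc^2u-\cot u=kc^2-c+k,\qquad c=\cot u\in[\cot2\alpha,\cot\alpha].
\]
Then $(\sqrt{1+w^2})''=w'^2(1+w^2)^{-3/2}+w\,w''(1+w^2)^{-1/2}$ with $w''=2\csc^2u\,(3kc^2-c+k)$. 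So it suffices that $w>0$ and $3kc^2-c+k>0$ on that interval, for $0<\alpha<\pi/2$. Both are convex quadratics in $c$, so it is enough to check the endpoints and, when it lies inside, the vertex.
\begin{itemize}
\item At the endpoints $c=\cot2\alpha$ and $c=\cot\alpha$, $w$ equals $\tan\alpha$ and $\cot\alpha$. The quadratic $3kc^2-c+k$ equals $(2\cos^22\alpha-\cos2\alpha+1)/\sin2\alpha$ and $\cot\alpha\,(1+4\cos^2\alpha)$. All four are positive.
\item The vertices $1/(2k)$ and $1/(6k)$ lie in the interval only when $|\cos2\alpha|\le\tfrac12$, resp.\ $-\tfrac56\le\cos2\alpha\le\tfrac16$. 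There $k\ge\sqrt3/2$, resp.\ $k\ge\sqrt{11}/6$, so the vertex values $(4k^2-1)/(4k)$ and $(12k^2-1)/(12k)$ are positive.
\end{itemize}
This gives strict convexity for every $\alpha\in(0,\pi/2)$, which is more than the paper's numerical check provides. As a by-product, $f_1'=0$ iff $\cot(\alpha+\beta)=1/(2\sin2\alpha)$. This is admissible iff $\alpha\in[\pi/6,\pi/3]$, with minimum value $\sin2\alpha+1/(4\sin2\alpha)$ (e.g.\ $5/4$ at $\alpha=\pi/4$).
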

\begin{figure}[h!]
  \centering
  \includegraphics[width=0.45\textwidth]{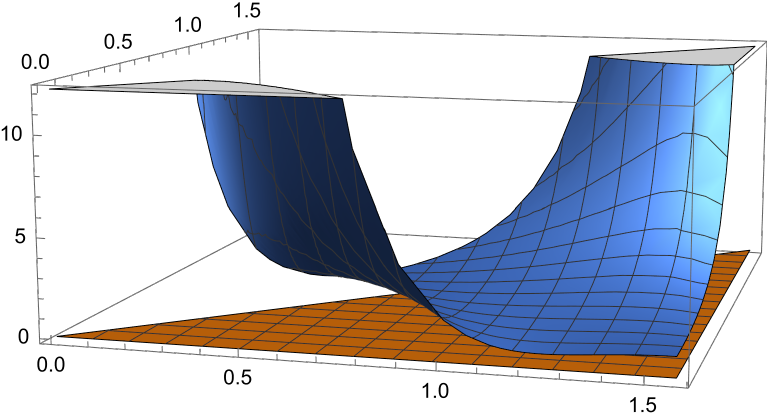}
  \caption{
The plot of $\tfrac{\dd^2}{\dd \beta^2}f_1(\alpha,\beta,r_0)$ for all $\alpha \in [0,\pi/2]$ and $\beta \in [0,\alpha]$. Numerical calculations show that the second derivative is at least $0.707355$ (that is, it is bounded away from $0$). 
}
  \label{fig:convexitycr}
\end{figure}
The lemma is verified numerically (see Figure~\ref{fig:convexitycr}) by showing that the second partial derivative of $f_1(\alpha,\beta,r_0)$ with respect to $\beta$ is bounded away from $0$ for all $\alpha \in [0,\pi/2]$ and $\beta \in [0,\alpha]$. This rules out any ambiguity due to numerical limitations.
We are now ready to prove the optimality of choice $\beta_0$ as claimed in Theorem~\ref{thm: opt cr of fixed angle}. For this we examine two cases for $\alpha$ relative to $\pi/4$. 

In the first case, we consider to $\alpha \in [0,\pi/4]$. 
By Lemma~\ref{lem:PerformanceFixedAngle-y0} the performance of the \FA %Fixed-Angle 
algorithm 
is 
$
f_1(\alpha,\beta,r_0),
$
when the algorithmic choice of $\beta$ is restricted to the interval $[(t(\alpha),\alpha]$. 
By Lemma~\ref{lem: f1 convex in beta}, the function is convex (even when $\beta \in [0,\alpha]$), and therefore $f_1(\alpha,\beta,r_0)$ is minimized either at the extreme values $t(\alpha),\alpha$ or at any critical points of the function. 
In the first subcase $\alpha \leq \pi/3$, the function has no critical points as per Lemma~\ref{lem: critical points}. 
Moreover, easy calculations show that $f_1(\alpha,\alpha,r_0)=1/\coss{\alpha}$ and that this value is strictly smaller than $f_1(\alpha,t(\alpha),r_0)$ as per Lemma~\ref{lem: lower bound to cr} (see also Figure~\ref{fig:crvariousfunctions}). Hence, restricted to $\beta \in [t(\alpha), \alpha]$, we have that $\beta_0=\alpha$ is indeed the choice that minimizes the competitive ratio. Again by Lemma~\ref{lem: lower bound to cr} this choice results to strictly better competitive ratio compared to when $\beta \in [0,t(\alpha)]$, and therefore, $\beta=\alpha$ is indeed the unique minimizer when $\alpha \leq \pi/6$.
In the second subcase $\alpha \in [\pi/3,\pi/4]$, we have a unique critical point described in Lemma~\ref{lem: critical points}. 
The next lemma shows that the critical point is at least $t(\alpha)$. 

\begin{lemma}
For all $\alpha \in (0,\pi/4]$, we have $\beta_0(\alpha) > t(\alpha)$. 
\end{lemma}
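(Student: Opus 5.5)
We plan to compare the two angles through $\cos 2\beta$, which turns the statement into a single-variable polynomial inequality. For $\alpha\in(0,\pi/4]$ both $\beta_0(\alpha)$ and $t(\alpha)$ lie in $[0,\pi/4]$. Indeed, $t(\alpha)\in[0,\alpha]$ as noted after its definition, and $\beta_0=\tfrac12\arccos(x_1(\alpha))$ with $x_1(\alpha)\in[0,1]$. Since $\cos 2\beta$ is strictly decreasing on $[0,\pi/2]$, the claim $\beta_0(\alpha)>t(\alpha)$ is equivalent to
$$x_1(\alpha) < \cos\bigl(2t(\alpha)\bigr),$$
where $x_1(\alpha)=\frac{2-2\cos 4\alpha+\cos 6\alpha}{3-2\cos 4\alpha}$ is the root identified in Lemma~\ref{lem: critical points}. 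Here we use the formula for $\beta_0$, not its capped value $\alpha$ for $\alpha\le \pi/6$.

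Write $\tan t = p/q$ with $p=\sin 3\alpha-\sin\alpha$ and $q=3\cos\alpha-\cos 3\alpha$. Then $\cos 2t=\frac{q^2-p^2}{q^2+p^2}$. We pass to the single variable $u=\cos 2\alpha$, which ranges over $[0,1)$ when $\alpha\in(0,\pi/4]$.

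\begin{itemize}
\item Since $\cos4\alpha = 2u^2-1$ and $\cos 6\alpha=4u^3-3u$, we get $x_1 = \frac{4u^3-4u^2-3u+4}{5-4u^2}$.
\item Since $p = 2\cos 2\alpha\sin\alpha$, we get $p^2 = 2u^2(1-u)$.
\item Since $q=2\cos\alpha\,(3-2\cos^2\alpha)$ and $\cos^2\alpha=(1+u)/2$, we get $q^2 = 2(1+u)(2-u)^2$.
\end{itemize}

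Both denominators, $5-4u^2$ and $q^2+p^2$, are strictly positive on $[0,1]$. Clearing them, the inequality becomes $P(u)>0$ for
$$P(u) = (q^2-p^2)(5-4u^2) - (q^2+p^2)(4u^3-4u^2-3u+4),$$
a polynomial of degree at most $6$.

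As sanity checks, at $u=0$ (i.e.\ $\alpha=\pi/4$) we have $t=0$, so the claim reads $4/5<1$. At $u=1$ (i.e.\ $\alpha\to 0$) both sides tend to $1$, so $P(1)=0$. We therefore expect $(1-u)$, possibly to a higher power, to divide $P$. We would expand $P$ (symbolically, as elsewhere in the paper), factor out the maximal power $(1-u)^k$, and show that the cofactor $Q(u)$ is positive on $[0,1]$. For $Q$ we would first try writing it with nonnegative coefficients in the variable $u$ or in $1-u$; failing that, we would bound it below on $[0,1]$ by elementary estimates or via a Sturm sequence. Positivity of $Q$ on $[0,1)$ together with $1-u>0$ gives $P(u)>0$, hence $x_1(\alpha)<\cos 2t(\alpha)$, and the lemma follows.

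The main obstacle is this last step. If $Q$ has mixed-sign coefficients and a root near $[0,1]$, certifying positivity will need a careful split of $[0,1]$ into subintervals or an explicit sum-of-squares style rewriting.
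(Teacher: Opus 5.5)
Your reduction is correct, and it differs from the paper's argument only in which monotone function of the angle you compare. The paper compares $\tan\beta_0$ with $\tan t$ instead of $\cos 2\beta_0$ with $\cos 2t$. In your variable $u=\cos 2\alpha$ it gets, from $\cos 2\beta_0=x_1$ via the half-angle formula, $\tan\beta_0=\tan\alpha\cdot\frac{1+2u}{3-2u}$, and also $\tan t=\tan\alpha\cdot\frac{u}{2-u}$. Hence $\tan\beta_0-\tan t=\frac{2\tan\alpha}{(3-2u)(2-u)}=\frac{2\tan\alpha}{7-7\cos 2\alpha+\cos 4\alpha}>0$, in one line.

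Your version never extracts a square root from the $\arccos$ definition and stays rational in $u$. The price is that $\cos 2\beta$ depends on $\tan^2\beta$, so you are implicitly handling $(\tan\beta_0-\tan t)(\tan\beta_0+\tan t)$, which is why a larger polynomial appears.

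The step you flag as the main obstacle is in fact immediate. With your $p^2$ and $q^2$ you get $q^2-p^2=8-8u^2+4u^3$ and $q^2+p^2=8-4u^2$, and the $u^5$ terms cancel, giving
\[
P(u)=16u^4-24u^3-24u^2+24u+8=8(1-u)(1+u)(1+3u-2u^2).
\]
So $k=1$, and $Q(u)=8(1+u)(1+3u-2u^2)\ge 8$ on $[0,1]$, since $1+3u-2u^2=1+u(3-2u)\ge 1$. No interval splitting, Sturm sequence or sum-of-squares certificate is needed. The factor $1+3u-2u^2$ is exactly the one in $\tan\beta_0+\tan t=\tan\alpha\cdot\frac{2(1+3u-2u^2)}{(3-2u)(2-u)}$. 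Write this computation out rather than leaving it as a plan; with it your proof is complete.

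Your choice to use the uncapped formula for $\beta_0$ matches what the paper's proof does. For $\alpha\le\pi/6$ the capped value $\beta_0=\alpha$ also exceeds $t(\alpha)$ trivially, since $\tan t=\tan\alpha\cdot\frac{u}{2-u}<\tan\alpha$ for $u<1$.
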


\begin{proof}
For all $\alpha \in (0,\pi/4]$, we have that $\beta_0(\alpha) \in [0,\alpha]$. Since $\tann{\cdot}$ is increasing, it is enough to show that $\tann{\beta_0(\alpha)} > \tann{t(\alpha)}$. Recall also that $\tann{t(\alpha)}> 0$.
Using that $\tann{x/2}=\sqrt{\tfrac{1-x}{1+x}}$, and with simple trigonometric manipulations, we get that 
$$
\tann{\beta_0(\alpha)}
-
\tann{t(\alpha)}
=
\frac{\sinn{3\alpha}}{2\coss{\alpha} - \coss{3\alpha}}
-
\frac{\sinn{3\alpha} - \sinn{\alpha}}{3\coss{\alpha} - \coss{3\alpha}}
=$$
$$=\frac{2 \tann{\alpha}}{7-7 \coss{2 \alpha}+\coss{4 \alpha}} \geq 0.
$$
\end{proof}

But then 
by Lemma~\ref{lem: f1 convex in beta} the problem of minimizing $f_1(\alpha,\beta,r_0)$ over $\beta \in [t(\alpha),\alpha]$ has a unique minimizer the critical point $\beta_0(\alpha)$ of Lemma~\ref{lem: critical points}.

In the final case, we consider $\alpha \in [\pi/4,\pi/2]$, and we minimize $f_1(\alpha,\beta,r_0)$ in $\beta \in [0,\alpha]$, which by Lemma~\ref{lem:PerformanceFixedAngle-y0} is indeed the performance of the \FA %Fixed-Angle 
algorithm.
In the first subcase, we consider $\alpha \in [\pi/4,\pi/3]$, in which by Lemma~\ref{lem: critical points} we have only one critical point, and that, by Lemma~\ref{lem: f1 convex in beta} it is the unique minimizer in the interval. 
Lastly, in the remaining interval $\alpha \in [\pi/3,\pi/2]$, again by Lemma~\ref{lem: critical points} we have no critical points, and hence $f_1(\alpha,\beta,r_0)$ is minimized either at $0$ or at $\alpha$. To that end, it is easy to see that, for $\alpha \in [\pi/3,\pi/2]$, we have 
$$
f_1(\alpha,0,r_0) = 1/\sinn{\alpha} < 1/\coss{\alpha}= f_1(\alpha,\alpha,r_0)
$$
concluding our arguments. 

}{

} 
\makeatother

\section{Lower bound on the competitive ratio}
\label{sec:LB}

In this section, we establish a lower bound (see Theorem~\ref{thm:lb} at the end of this section) on the competitive ratio of deterministic algorithms for $\alpha \in [0, \pi/4]$. As usual, we assume that $\alpha$ is fixed. For a given $s > 1$, the adversary defines the infinite sequence of requests $\bs{X}$, where $X_0 = (0,0)$ and $X_i = (x_i,0)$ with $x_i = 2(-s)^{i-1}$ is the $i$-th request. Observe that the $x_i$ form a geometric sequence with common ratio $(-s)$. Since an online algorithm does not know the length of the input, it must maintain good competitive ratio on every prefix $\bs{X}_i = (X_0, X_1 \ldots, X_i)$ of $\bs{X}$. To establish the lower bound we design an optimal online algorithm \MaxHedge{} for $\bs{X}$, and show that any other algorithm that deviates from \MaxHedge{} can be trapped by the adversary by terminating $\bs{X}$ at an appropriate time. Then we analyze the competitive ratio of \MaxHedge{} to derive a closed-form formula for the lower bound.

Recall that $T_i$ is the apex of the feasibility cone of $\bs{X}_i$. Observe that for the sequence $\bs{X}$ defined as above, we have the following:
\begin{align*}
    T_1 &= \left(1,\frac{1}{\tan(\alpha)}\right),\\
    T_i &= \left((-s)^{i - 2} (1 - s), \frac{s^{i - 2} (1 + s)}{\tan(\alpha)}\right) \text{ for } i \ge 2.
\end{align*}
Note that all $T_i$ for even $i$ lie on a straight line with $T_i = s^{i-2}T_2$. Similarly all $T_i$ for odd $i$ larger than $1$ lie on a straight line, again with $T_i = s^{i-3}T_3$. Observe that these two lines are symmetric along the vertical axis, 
holding an angle $\beta = \arctan(\frac{s-1}{s+1}\tan{\alpha})$ with it.

We also introduce the shorthand notation $o_i$ to denote the cost of the optimal solution for $\bs{X}_i$. For $i \ge 2$ we have:
\[ o_i = \OPT(\bs{X}_i; \alpha) = |X_0 T_i| = \sqrt{s^{-4 + 2i} \left(-4s + \frac{(1 + s)^2}{\sin^2(\alpha)}\right)}.\]

% the optimal scaling ratio s^*
\tikzmath{\s = 1 + sqrt(2); }
% the optimal CR \rho*
\tikzmath{\r = \s/2; }
% the optimal d_{\infty}
\tikzmath{\dInf =  0.14644662918; }

Next, we are ready to define the algorithm \MaxHedge{}. This algorithm has a parameter, which by a silght abuse of notation, we shall denote by $\rho$, as it corresponds to a desired/prescribed competitive ratio that \MaxHedge{} shall try to achieve on the input $\bs{X}$ (regardless of when the infinite sequence is terminated). In essence, \MaxHedge{} is doing maximal hedging (preparing for the next request on the other side of the $y$-axis) while ensuring it does not exceed the desired competitive ratio $\rho$. Intuitively, no algorithm can do better on this input sequence than \MaxHedge{}. More specifically, \MaxHedge with parameter $\rho$ is defined as follows:
\begin{itemize}
\item Response to request $X_i$ is a point $Z_i = T_i + z_i(T_{i+1}-T_i)$ for $z_i \geq 0$ such that $|Z_{i-1}Z_i| = \rho(o_i - o_{i-1})$,
\begin{itemize}
\item  i.e. $Z_i$ is the intersection of halfline $\halfline{T_iT_{i+1}}$ with a circle centered at $Z_{i-1}$ and radius $\rho(o_i - o_{i-1})$.
\end{itemize}
\item If such point $Z_i$ does not exist (i.e. the circle and the half-line do not intersect), we say that \MaxHedge{} { \em fails} in round $i$.
\end{itemize}

If \MaxHedge{} never fails, we say that it {\em succeeds}, the {\em domain} of \MaxHedge{} is $[0,i-1]$ if \MaxHedge{} fails in round $i$ and it is $[0, \infty]$ otherwise. Note that the definition of \MaxHedge{} inductively ensures that its competitive ratio is precisely $\rho$ on its domain. Moreover, if \MaxHedge{} succeeds then it achieves competitive ratio $\rho$ on every prefix of $\bs{X}$. Conversely, if \MaxHedge{} fails then it does not achieve competitive ratio $\rho$ on every prefix of $\bs{X}$. We claim that the optimal online algorithm for $\bs{X}$ is \MaxHedge{} with the smallest value of $\rho$ that makes it succeed.

\begin{definition}
The execution of \MaxHedge is {\em valid}, if for all rounds in the domain of \MaxHedge{}, $Z_i$ is on the same side of the $y$-axis as $T_i$.
\end{definition}
Note that when $s$ is large enough and $\rho$ is small enough we have that $Z_i$ lies on the same side as $T_i$ with respect to the $y$-axis (see Figure~\ref{fig:OZ}). For small values of $s$ or for large values of $\rho$, the computed point $Z_i$ may be too far towards $T_{i+1}$ -- it should not go beyond the vertical axis. This can be fixed by placing $Z_i$ appropriately on the vertical axis in such a case. However, that would complicate the analysis of \MaxHedge{}, and it happens only in situations irrelevant for our lower bound: a) small $s$ yield low lower bound (algorithm \SU is good enough for such cases), and b) we are looking at the smallest $\rho$ for which \MaxHedge succeeds which would be small enough. Thus, in the remainder of this section we shall focus on valid executions of \MaxHedge{}. 

Consider now an arbitrary algorithm $\ALG$ covering the requests. Let $P_i$ for $i\geq 1$ be the response of $\ALG$ to $X_i$ (by definition, $P_0 = X_0 = (0,0)$). 
Note that we do not insist on $P_i$'s lying on the boundary of $\FC(\bs{X}_i)$. Let $c_i = \ALG(\bs{X}_i; \alpha) = \sum_{j=0}^{i-1}|P_jP_{j+1}|$ be the $\ALG$'s cost for the first $i$ requests. Let $\mathcal{S}(V,r)$ be the closed disk centered at point $V$, with radius $r$.

The following terminology shall be used to compare the performance of $\ALG$ with \MaxHedge{}:
\begin{itemize}
%\itemsep=-\parsep
\item \ALG is {\em cautious} in round $i$ if $P_i \in \mathcal{S}(Z_{i-1},|Z_{i-1}Z_i|)$;
\item \ALG is {\em aggressive} in round $i$ if $P_i \notin \mathcal{S}(Z_{i-1},|Z_{i-1}Z_i|)$;
\item \ALG is {\em always cautious} if it is cautious in all rounds in the domain of \MaxHedge{};
\item \ALG has {\em gone aggressive} in round $i$ if it was cautious in all previous rounds, and is aggressive in round $i$.
\end{itemize}

\makeatletter
\@ifundefined{myconfversion}{

\tikzmath{\z12 =sqrt 1 + sqrt(2); }
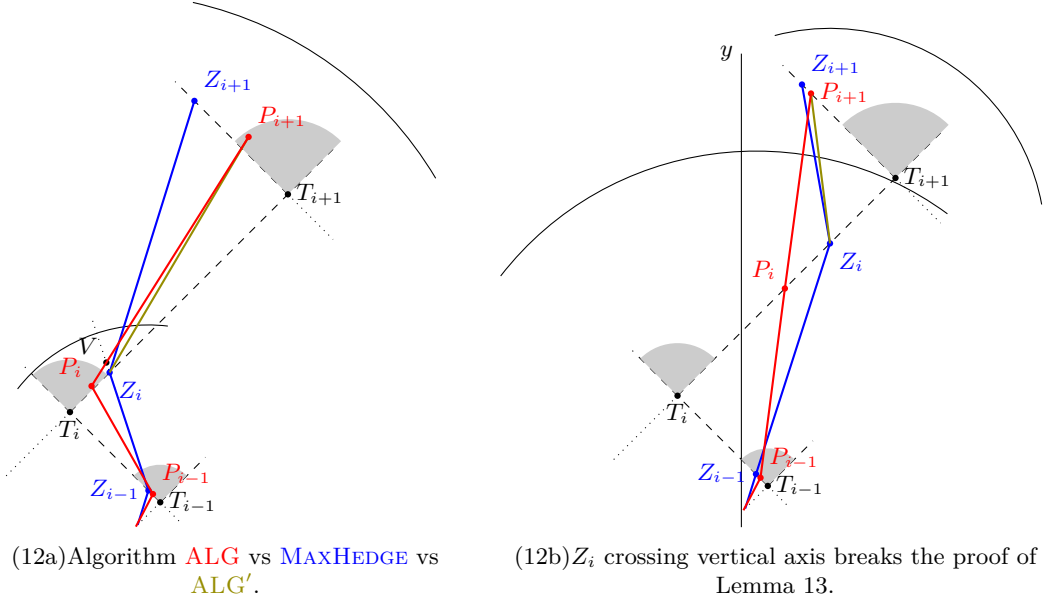
\begin{figure}
\centering
\subfloat[\centering Algorithm {\color{red} $\ALG$} vs {\color{blue}\MaxHedge{}} vs {\color{olive} $\ALG'$}.]
{\begin{tikzpicture}[scale = 0.6]
\coordinate (O1) at ({(\s-1)/\s}, {(\s+1)/\s});
\coordinate (O2) at (1-\s, 1+\s);
\coordinate (O3) at ({-\s * (1-\s)}, {\s * (1+\s)});
\coordinate (O2') at ($(O2)+(-0.5,0)$);
\coordinate (x1) at (1, 1);
\coordinate (x2) at ({-2*\s + 2}, 2);
\coordinate (x3) at ({2*\s*\s}, 0);
\coordinate (Z1) at (0.3239, 1.6760);
\coordinate (Z2) at (-0.5364, 4.29205);
\coordinate (Z3) at (1.3424, 10.3144);
\coordinate (Z0) at ({-0.5364/(\s*\s)}, {4.29205/(\s*\s)});

\coordinate (R1) at ($(Z1)+(0.1,-0.08)$);
\coordinate (R2) at ($(Z2)+(-0.4,-0.3)$);
\coordinate (R3) at ($(Z3)+(1.2, -0.8)$);

\path [name path = ZZ] (Z1) -- ($(Z1)!4cm!(Z2)$);
\path [name path = RR] (R2)--(R3);
\path [name intersections={of=ZZ and RR,by=V}];
\draw [dotted] (Z1) -- (V) node [anchor = south east] {$V$} -- ($(V)!-1cm!(Z2)$);

\draw [dotted] (O1) node [right] {$T_{i-1}$} -- ++ (-0.5, -0.5) coordinate (O0);
\draw [dashed] (O1) -- ++(1,1) coordinate (O1r);
\draw [dotted] (x1) -- (O1);
\draw [dashed] (O1) -- (O2) node [below] {$T_i$} -- ++(-1,1) coordinate (O2l);
\draw [dotted] (x2) -- (O2);
\draw [dashed] (O2) -- (O3) node [right] {$T_{i+1}$} -- ++(1,1) coordinate (O3r);
\draw [dotted] (O3) -- ++(1,-1) coordinate (O3d);
\draw [dashed] (O3) -- ++(-2.5,2.5) coordinate (O3l);
\draw (O1) pic[fill=black!20] {angle=O1r--O1--O2};
\draw (O2) pic[fill=black!20, angle radius = 7mm] {angle=O3--O2--O2l};
\draw (O3) pic[fill=black!20, angle radius = 1cm] {angle=O3r--O3--Z3};

\fill (V) circle[radius=2pt];
\foreach \n in {1,2,3} {
	\fill (O\n) circle[radius=2pt];
	\fill [blue] (Z\n) circle[radius=2pt];
	\fill [red] (R\n) circle[radius=2pt];
}

\coordinate (R0) at ($(O0)+(-0.04,-0.04)$);
\draw
    let \p1=(Z1), \p2=(Z2), \n1={veclen(\x2-\x1,\y2-\y1)*0.8}
in
    (Z1) pic[draw, angle radius = \n1] {angle=Z3--Z1--O2'};
\draw 
    let \p1=(Z2), \p2=(Z3), \n1={veclen(\x2-\x1,\y2-\y1)*0.8}
in
   (Z2) pic[draw, angle radius =\n1] {angle=O3d--Z2--O3l};

\draw [blue, thick] (O0) -- (Z1) node [left] {$Z_{i-1}$} -- (Z2) node [anchor = north west] {$Z_i$} -- (Z3) node [anchor = south west] {$Z_{i+1}$};

\draw [olive, thick] (Z2) -- (R3);

\draw [red, thick] (R0) -- (R1) node [anchor = south west] {$P_{i-1}$} -- (R2) node [anchor = south east] {$P_i$} -- (R3) node [anchor = south west] {$P_{i+1}$};

\end{tikzpicture}
\label{fig:AvsZL}
}
\qquad
\subfloat[\centering $Z_i$ crossing vertical axis breaks the proof of Lemma~\ref{lm:AvsZ}.]
{\begin{tikzpicture}[scale = 0.6]
%\coordinate (O0) at (0,0);
\coordinate (O1) at ({(\s-1)/\s}, {(\s+1)/\s});
\coordinate (O2) at (1-\s, 1+\s);
\coordinate (O3) at ({-\s * (1-\s)}, {\s * (1+\s)});
\coordinate (O2') at ($(O2)+(-0.5,0)$);
\coordinate (x1) at (1, 1);
\coordinate (x2) at ({-2*\s + 2}, 2);
\coordinate (x3) at ({2*\s*\s}, 0);
\coordinate (Z1) at (0.3239, 1.6760);
\coordinate (Z2) at ($(-0.5364, 4.29205)+(2.5,2.5)$);
\coordinate (Z3) at (1.3424, 10.3144);
\coordinate (Z0) at ({-0.5364/(\s*\s)}, {4.29205/(\s*\s)});

\coordinate (R1) at ($(Z1)+(0.1,-0.08)$);
\coordinate (R2) at ($(Z2)+(-1,-1)$);
\coordinate (R3) at ($(Z3)+(0.2, -0.2)$);

\draw (0,0.5) -- (0,11) node [left] {$y$};

\draw [dotted] (O1) node [right] {$T_{i-1}$} -- ++ (-0.5, -0.5) coordinate (O0);
\draw [dashed] (O1) -- ++(1,1) coordinate (O1r);
\draw [dotted] (x1) -- (O1);
\draw [dashed] (O1) -- (O2) node [below] {$T_i$} -- ++(-1,1) coordinate (O2l);
\draw [dotted] (x2) -- (O2);
\draw [dashed] (O2) -- (O3) node [right] {$T_{i+1}$} -- ++(1,1) coordinate (O3r);
\draw [dotted] (O3) -- ++(1,-1) coordinate (O3d);
\draw [dashed] (O3) -- ++(-2.5,2.5) coordinate (O3l);
\draw (O1) pic[fill=black!20] {angle=O1r--O1--O2};
\draw (O2) pic[fill=black!20, angle radius = 7mm] {angle=O3r--O2--O2l};
\draw (O3) pic[fill=black!20, angle radius = 1cm] {angle=O3r--O3--Z3};

\foreach \n in {1,...,3} {
	\fill (O\n) circle[radius=2pt];
	\fill [blue] (Z\n) circle[radius=2pt];
	\fill [red] (R\n) circle[radius=2pt];
}

\coordinate (R0) at ($(O0)+(-0.04,-0.04)$);

\draw
    let \p1=(Z1), \p2=(Z2), \n1={veclen(\x2-\x1,\y2-\y1)*0.8}
in
    (Z1) pic[draw, angle radius = \n1] {angle=O3d--Z1--O2'};

\draw 
    let \p1=(Z2), \p2=(Z3), \n1={veclen(\x2-\x1,\y2-\y1)*0.8}
in
   (Z2) pic[draw, angle radius =\n1] {angle=O3d--Z2--O3l};

\draw [blue, thick] (O0) -- (Z1) node [left] {$Z_{i-1}$} -- (Z2) node [anchor = north west] {$Z_i$} -- (Z3) node [anchor = south west] {$Z_{i+1}$};

\draw [olive, thick] (Z2) -- (R3);

\draw [red, thick] (R0) -- (R1) node [anchor = south west] {$P_{i-1}$} -- (R2) node [anchor = south east] {$P_i$} -- (R3) node [right] {$P_{i+1}$};

\end{tikzpicture}
\label{fig:AvsZR}
}
\caption{
Part (a) of this figure illustrates the situation described in the proof of Lemma~\ref{lm:AvsZ}. Part (b) shows that the lemma does not hold when \MaxHedge{} is not valid.
}\label{fig:last_round}
\label{fig:AvsZ}
\end{figure}

}{

} 
\makeatother

First, we show that being cautious does not help:
\begin{lemma}\label{lm:AvsZ}
Suppose that the execution of \MaxHedge{} is valid for $s$ and $\rho$, and \ALG is cautious until and including round $i$. Then $\ALG(\bs{X}_i; \alpha) \geq \ALG'(\bs{X}_i; \alpha)$ where $\ALG'$ follows \MaxHedge{} for the first $i-1$ steps and in the $i$-th step goes to $P_i$.
\end{lemma}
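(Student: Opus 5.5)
We want to show that any cautious \ALG pays at least as much on $\bs{X}_i$ as $\ALG'$. The cost of $\ALG'$ is $\sum_{j=1}^{i-1}|Z_{j-1}Z_j| + |Z_{i-1}P_i| = \rho\, o_{i-1} + |Z_{i-1}P_i|$, using the defining property $|Z_{j-1}Z_j|=\rho(o_j-o_{j-1})$ of \MaxHedge{} and $Z_0=P_0=X_0$. The cost of \ALG is $c_{i-1}+|P_{i-1}P_i|$. So it suffices to prove
\[
c_{i-1} + |P_{i-1}P_i| \;\ge\; \rho\, o_{i-1} + |Z_{i-1}P_i| .
\]
I would prove this by induction on $i$ together with the auxiliary inequality $c_{j}\ge \rho\, o_{j} - |P_jZ_j|$ for every cautious round $j$. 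Combined with the triangle inequality $|P_{i-1}P_i|\ge |Z_{i-1}P_i|-|Z_{i-1}P_{i-1}|$, it gives exactly the displayed inequality. The hard part is to obtain the stronger form with the correct term $|Z_{i-1}P_i|$ and no loss.

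The induction step is where geometry and validity enter. A cleaner route is to show directly that $c_{i-1}\ge \rho\,o_{i-1}$, i.e.\ that a cautious algorithm feasible for $\bs{X}_{i-1}$ has already spent at least as much as \MaxHedge{}, and that its position $P_{i-1}$ lies in a region from which reaching $P_i$ costs at least $|Z_{i-1}P_i|$. The natural region is the part of $\FC(\bs{X}_{i-1})$ inside the disk $\mathcal{S}(Z_{i-2},|Z_{i-2}Z_{i-1}|)$. Validity, meaning $Z_{i-1}$ is on the same side of the $y$-axis as $T_{i-1}$, together with the halfline $\halfline{T_{i-1}T_i}$ lying on the boundary side toward $\FC(\bs{X}_i)$, should show that $Z_{i-1}$ is the point of this feasible disk-region nearest (in the relevant sense) to $\FC(\bs{X}_i)$. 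Then any $P_i$ has $|P_{i-1}P_i|$ plus the extra cost already spent at least $|Z_{i-1}P_i|$.

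Concretely, I would first observe that the cautious region in round $j$ is the disk of radius $\rho(o_j-o_{j-1})$ about $Z_{j-1}$ intersected with $\FC(\bs{X}_j)$. Its boundary chord along the cone edge through $T_j$ ends at $Z_j$, since $Z_j$ lies on that edge/halfline. I would then prove by induction that $c_j + |P_jQ| \ge \rho\, o_j + |Z_jQ|$ for every point $Q$ on the far side, in particular in $\FC(\bs{X}_{j+1})$. The step uses the fact that the segment from any cautious $P_j$ to $Q$ must cross the line through $Z_{j-1}Z_j$ at a point $V$ (as in Figure~\ref{fig:AvsZ}(a)). Triangle inequalities through $V$, together with the induction hypothesis applied at $V$, then reduce the claim to $|Z_{j-1}V|\ge$ the path $Z_{j-1}\to Z_j\to V$ collapse, which holds because $V$ lies on the line beyond $Z_j$.

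The main obstacle is this geometric crossing claim: that the segment $P_jQ$ meets the ray from $Z_{j-1}$ through $Z_j$ beyond $Z_j$. This is exactly where validity is needed, and Figure~\ref{fig:AvsZ}(b) shows it fails otherwise. I would try to prove it by showing that the cautious region lies on one side of the line $Z_{j-1}Z_j$ while $\FC(\bs{X}_{j+1})$ lies on the other side beyond $Z_j$. For that I would use the explicit coordinates of $T_j$ and the fact that $Z_j$ is on the $T_j$ side of the $y$-axis.
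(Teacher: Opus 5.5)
\textbf{There is a gap in the key geometric step.} Your skeleton matches the paper's: induction on the round, the point $V$ where segment $P_jP_{j+1}$ meets $\halfline{Z_{j-1}Z_j}$ beyond $Z_j$, two triangle inequalities, and $|Z_{j-1}V|=|Z_{j-1}Z_j|+|Z_jV|$. But the geometric claim you plan to prove is false.

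You want $c_j+|P_jQ|\ge\rho\,o_j+|Z_jQ|$ for all $Q\in\FC(\bs{X}_{j+1})$, resting on $\FC(\bs{X}_{j+1})$ lying on the far side of the line $Z_{j-1}Z_j$. It does not. $Z_{j-1}$ lies on line $T_{j-1}T_j$ strictly before $T_j$, and $Z_j$ lies on $\halfline{T_jT_{j+1}}$. So $Z_j-Z_{j-1}$ makes an angle strictly less than $\alpha$ with the vertical, and the line $Z_{j-1}Z_j$ runs into the interior of the upward cone $\FC(\bs{X}_{j+1})$, which therefore straddles it.

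The inequality genuinely fails on the wrong part. Take $P_j=T_j$, which is cautious because triangle $Z_{j-1}T_jZ_j$ has angle $\pi-2\alpha\ge\pi/2$ at $T_j$. Let $Q\in\FC(\bs{X}_{j+1})$ go far out in a direction making angle $2\alpha-\epsilon$ with $\halfline{T_jT_{j+1}}$, i.e.\ nearly parallel to $\halfline{T_{j+1}T_{j+2}}$. Then $|T_jQ|-|Z_jQ|\approx|T_jZ_j|\cos 2\alpha$. On the other hand, $(|Z_{j-1}T_j|+|T_jZ_j|\cos2\alpha)^2=|Z_{j-1}Z_j|^2-|T_jZ_j|^2\sin^2 2\alpha$. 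So for small $\epsilon$ and large $|Q|$ you get $|Z_{j-1}T_j|+|T_jQ|<|Z_{j-1}Z_j|+|Z_jQ|$ whenever $Z_j\neq T_j$.

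\textbf{The missing idea is that \ALG{} is cautious in round $j+1$ as well.} Then $P_{j+1}\in\mathcal{S}(Z_j,|Z_jZ_{j+1}|)\cap\FC(\bs{X}_{j+1})$. Since $\angle Z_jT_{j+1}Z_{j+1}=\pi-2\alpha\ge\pi/2$ (this is where $\alpha\le\pi/4$ enters), this set lies inside the angle $T_{j+1}Z_jZ_{j+1}$. In the same way, the round-$j$ cautious region lies inside the angle $T_jZ_{j-1}Z_j$.

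Validity is what puts these two wedges on opposite sides of the line $Z_{j-1}Z_j$. The points $Z_{j-1},Z_j,Z_{j+1}$ climb while alternating sides of the $y$-axis, so the path turns toward $T_{j+1}$ at $Z_j$, i.e.\ $\angle Z_{j-1}Z_jZ_{j+1}<\pi$ on the side of $T_{j+1}$. This guarantees that $V$ exists. Moreover, $V$ lies beyond $Z_j$, because $P_j,P_{j+1}\in\FC(\bs{X}_j)$ while $Z_{j-1}$ is on the other side of line $T_jT_{j+1}$.

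Accordingly, the inductive statement should be carried only for $Q=P_{j+1}$ with $P_{j+1}$ cautious. The induction hypothesis is then used at $P_j$, giving $c_j\ge\rho\,o_{j-1}+|Z_{j-1}P_j|$, not at $V$.

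Finally, discard the ``cleaner route'': $c_{i-1}\ge\rho\,o_{i-1}$ is false for cautious algorithms. Following \MaxHedge{} and then stopping at $T_{i-1}$ costs $\rho\,o_{i-2}+|Z_{i-2}T_{i-1}|<\rho\,o_{i-1}$ when $Z_{i-1}\neq T_{i-1}$. This is exactly why the lemma compares \ALG{} with $\ALG'$ rather than with $\rho\,o_i$.
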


\makeatletter
\@ifundefined{myconfversion}{

\begin{proof}
By induction on $i$. The base case $i=1$ holds because $\ALG(\bs{X}_1; \alpha) = \ALG'(\bs{X}_1;\alpha) = |P_0P_1|$.
Assuming $\ALG(\bs{X}_i; \alpha) \geq \ALG'(\bs{X}_i; \alpha)$, lets prove $\ALG(\bs{X}_{i+1}; \alpha) \geq \ALG'(\bs{X}_{i+1}; \alpha)$.
From their definitions, we have
$$\ALG'(\bs{X}_{i+1}; \alpha) = \text{\MaxHedge}(\bs{X}_i; \alpha) + |Z_iP_{i+1}| =$$
$$=\ALG'(\bs{X}_i; \alpha) - |Z_{i-1}P_i|+ |Z_{i-1}Z_i| + |Z_iP_{i+1}|$$ and
$$\ALG(\bs{X}_{i+1}; \alpha)  = \ALG(\bs{X}_{i}; \alpha)  + |P_iP_{i+1}|.$$

Using the induction hypothesis we obtain
\begin{equation}\label{eq:cAB}
\ALG(\bs{X}_{i+1}; \alpha) - \ALG'(\bs{X}_{i+1}; \alpha) \geq |Z_{i-1}P_i| + |P_iP_{i+1}| -(|Z_{i-1}Z_i| + |Z_iP_{i+1})|
\end{equation}

If $P_i=Z_i$ then the statement of the lemma immediately follows from (\ref{eq:cAB}). Hence, in the rest of the proof we assume $P_i \neq Z_i$. It may be helpful to consult Figure~\ref{fig:last_round} for the next argument.

Since \MaxHedge{} is valid, none of the $Z_i$ crossed the $y$-axis, therefore \\
$\angle Z_{i-1}Z_iZ_{i+1} < \pi$. As $\alpha \leq \pi/4$, $\angle Z_iT_{i+1}Z_{i+1} \geq \pi/2$ and the intersection of $\mathcal{S}(Z_i, \rho(o_{i+1}-o_i))$ with $\FC(\bs{X}_{i+1})$ (and thus $P_{i+1}$) lies within the angle $T_{i+1}Z_iZ_{i+1}$. i.e. $\angle Z_{i-1}Z_iP_{i+1} \leq Z_{i-1}Z_iZ_{i+1} < \pi$.

Let $V$ be the intersection of halfline $\halfline{Z_{i-1}Z_i}$ with line $P_iP_{i+1}$. Because both $P_i$ and $P_{i+1}$ lie in the halfplane determined by $T_iT_{i+1}$ and $Z_{i+1}$, $V$ also lies in this halfplane (i.e. beyond $Z_i$). Therefore $Z_{i-1}VP_i$ and $Z_iVP_{i+1}$ are proper triangles.
From triangle inequality we have $$|Z_{i-1}V| < |Z_{i-1}P_i|+|P_iV|$$ and $$|Z_iP_{i+1}| < |Z_iV|+|VP_{i+1}|$$
Summing up and using $|Z_{i-1}V| = |Z_{i-1}Z_i| + |Z_iV|$:
$$|Z_{i-1}Z_i| + |Z_iV| + |Z_iP_{i+1}| < |Z_{i-1}P_i| + |P_iV| + |Z_iV|+ |VP_{i+1}|$$
Subtracting $|Z_iV|$ and applying  $|P_iP_{i+1}| = |P_iV|+|VP_{i+1}|$ produces
$$|Z_{i-1}Z_i| + |Z_iP_{i+1}| < |Z_{i-1}P_i| + |P_iP_{i+1}|$$

Combining with (\ref{eq:cAB}) yields $\ALG(\bs{X}_{i+1}; \alpha) \geq \ALG'(\bs{X}_{i+1}; \alpha)$, as desired.
\end{proof}

Note that Lemma~\ref{lm:AvsZ} does not hold if \MaxHedge{} is not valid (i.e. $Z_i$ and $T_i$ lie on different sides of the vertical axis) -- Figure~\ref{fig:AvsZR} shows an example where $|Z_{i-1}Z_i| + |Z_iP_{i+1}| > |Z_{i-1}P_i| + |P_iP_{i+1}|$. As it is possible to have $P_j = Z_j$ for $j<i$, this shows counterexample to the lemma's claim in such case. 

}{

} 
\makeatother

The following lemma shows that \MaxHedge{} is an optimal online algorithm for the request sequence $\bs{X}$.

\begin{lemma}\label{lm:lb}
If \MaxHedge fails for a given choice of parameters $s$ and $\rho$ then the competitive ratio of any \ALG is more than $\rho$. 
\end{lemma}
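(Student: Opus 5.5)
Suppose \MaxHedge{} with parameters $s,\rho$ fails in round $k$. Its domain is then $[0,k-1]$, and by construction \MaxHedge{} has cost exactly $\rho\, o_i$ on every prefix $\bs{X}_i$ with $i\le k-1$. Fix any online algorithm \ALG with responses $P_i$. I distinguish two cases according to whether \ALG is cautious in every round of the domain and also in the failing round $k$, or goes aggressive at some round $j\le k$. In both cases I will show that \ALG pays more than $\rho o_j$ on the prefix $\bs{X}_j$ for a suitable $j$. The adversary terminates the sequence there, so the competitive ratio of \ALG exceeds $\rho$.

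\emph{Case 1: \ALG goes aggressive in round $j$.} Then \ALG was cautious in rounds $1,\dots,j-1$, and $P_j\notin \mathcal{S}(Z_{j-1},|Z_{j-1}Z_j|)$. Apply Lemma~\ref{lm:AvsZ} to the first $j-1$ rounds (validity of \MaxHedge{} is assumed). This gives $\ALG(\bs{X}_{j-1};\alpha)\ge \ALG''(\bs{X}_{j-1};\alpha)$, where $\ALG''$ follows \MaxHedge{} up to $j-2$ and then moves to $P_{j-1}$. I then combine this with the next step exactly as in inequality (\ref{eq:cAB}) from the proof of Lemma~\ref{lm:AvsZ}. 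The difference $\ALG(\bs{X}_j)-\big(\MaxHedge(\bs{X}_{j-1})+|Z_{j-1}P_j|\big)$ is bounded below by $|Z_{j-2}P_{j-1}|+|P_{j-1}P_j|-|Z_{j-2}Z_{j-1}|-|Z_{j-1}P_j|$. The triangle argument in that proof, with $V$ the intersection of $\halfline{Z_{j-2}Z_{j-1}}$ with line $P_{j-1}P_j$, shows this quantity is nonnegative. That argument only used that $P_j\in\FC(\bs{X}_j)$ lies in the angle at $Z_{j-1}$, and this holds for any feasible $P_j$ in a valid execution. Hence
\[\ALG(\bs{X}_j;\alpha)\ge \rho\, o_{j-1}+|Z_{j-1}P_j| > \rho\, o_{j-1}+|Z_{j-1}Z_j| = \rho\, o_j .\]
If $j=k$, i.e.\ \ALG is cautious throughout the domain, then no point of $\FC(\bs{X}_k)$ lies in the disk $\mathcal{S}(Z_{k-1},\rho(o_k-o_{k-1}))$. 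I must justify this: the closest point of $\FC(\bs{X}_k)$ to $Z_{k-1}$ lies on the half-line $\halfline{T_kT_{k+1}}$ (or is $T_k$ itself). Since $\alpha\le\pi/4$, the angle at $T_k$ is at least $\pi/2$, so the half-line failing to meet the circle means the whole cone misses the closed disk. The same chain then gives $\ALG(\bs{X}_k;\alpha)\ge \rho o_{k-1}+|Z_{k-1}P_k|>\rho o_k$.

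\emph{Case 2: the edge issue when \ALG never exceeds the disk.} By the definition of failure, every feasible $P_k$ is outside the disk, so Case~1 with $j=k$ always applies. The two cases are therefore exhaustive.

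The main obstacle is the geometric claim that the nearest point of $\FC(\bs{X}_j)$ to $Z_{j-1}$ lies on the boundary ray from $T_j$ toward $T_{j+1}$. Equivalently, the disk meets the cone if and only if it meets that half-line. I would prove this from the position of $Z_{j-1}$, which by validity lies on the opposite side of the $y$-axis and below/outside the cone, together with $\alpha\le\pi/4$. I expect this to need a careful but short angle-chasing argument.
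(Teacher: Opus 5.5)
Your decomposition matches the paper's. An aggressive round is charged through Lemma~\ref{lm:AvsZ} plus the radius of the budget disk, and the always-cautious case is pushed to the failing round. Folding that case into the aggressive one is cleaner than the paper's ``neither can \ALG''. The step you call the main obstacle is easy: $Z_{k-1}$ lies on the extension of one edge of $\FC(\bs{X}_k)$ beyond the apex, and the other edge makes angle $\pi-2\alpha\ge\pi/2$ with $\overrightarrow{T_kZ_{k-1}}$. So the nearest point of the cone is $T_k$.

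\textbf{The gap.} The genuine problem is your claim that the triangle argument ``only used that $P_j$ lies in the angle at $Z_{j-1}$, and this holds for any feasible $P_j$''. In the proof of Lemma~\ref{lm:AvsZ}, $P_{i+1}$ lies in the angle $T_{i+1}Z_iZ_{i+1}$ precisely \emph{because} it is in the budget disk, i.e.\ because that round is cautious. Feasible points far up the edge $\halfline{T_jT_{j+1}}$ lie on the same side of line $Z_{j-2}Z_{j-1}$ as $P_{j-1}$, and then no crossing point $V$ exists.

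The inequality you derive is in fact false. Take the paper's parameters $\alpha=\pi/4$, $s=1+\sqrt2$, $\rho=(1+\sqrt2)/2$, which give a valid execution. Let \ALG play $P_1=Z_1$ and $P_2=T_2$; this is cautious, since $|Z_1T_2|\approx 2.458<|Z_1Z_2|\approx 2.754$. Then let it play $P_3=T_3+L(-1,1)/\sqrt2$, a feasible point on $\halfline{T_3T_4}$. Since $Z_2-T_2$ is parallel to $(1,1)$, $|T_2P_3|-|Z_2P_3|\to 0$ as $L\to\infty$. Hence $\ALG(\bs{X}_3;\alpha)<\rho\,o_2+|Z_2P_3|$ for large $L$; at $L=100$ the deficit is about $0.22$. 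The lemma's conclusion survives here only because $P_3$ is far away.

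The paper's Case~1 makes the same leap: it invokes Lemma~\ref{lm:AvsZ} for a round in which \ALG is aggressive, outside that lemma's hypothesis. So the hole is inherited, but your stated justification for it is wrong.

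\textbf{A repair.} Write $r_i=\rho(o_i-o_{i-1})$. After applying Lemma~\ref{lm:AvsZ} through round $j-1$, let $E$ be the point where segment $P_{j-1}P_j$ enters $\FC(\bs{X}_j)$.
\begin{enumerate}
\item A cautious $P_{j-1}$ lies outside $\FC(\bs{X}_j)$, because the nearest point of that cone to $Z_{j-2}$ is $T_j$, at distance more than $r_{j-1}$. Hence $E\in\halfline{T_jT_{j+1}}$ and $|P_{j-1}P_j|=|P_{j-1}E|+|EP_j|$.
\item If $E$ lies on the same side of line $Z_{j-2}Z_{j-1}$ as $T_j$, the triangle argument applies to $E$, since a cautious $P_{j-1}$ lies on $T_{j-1}$'s side. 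This gives $|Z_{j-2}P_{j-1}|+|P_{j-1}P_j|\ge r_{j-1}+|Z_{j-1}E|+|EP_j|\ge r_{j-1}+|Z_{j-1}P_j|>r_{j-1}+r_j$.
\item Otherwise $E$ lies beyond the point $W$ where that line meets the ray. Distances from $Z_{j-2}$ and from $Z_{j-1}$ are nondecreasing along $\halfline{T_jT_{j+1}}$. Validity, in the form $\angle Z_{j-2}Z_{j-1}Z_j<\pi$ used in Lemma~\ref{lm:AvsZ}, gives $|Z_{j-1}W|>r_j$; when $j=k$, failure gives $|Z_{j-1}W|\ge|Z_{k-1}T_k|>r_k$ instead. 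Hence $|Z_{j-2}P_{j-1}|+|P_{j-1}E|\ge|Z_{j-2}W|=r_{j-1}+|Z_{j-1}W|>r_{j-1}+r_j$.
\end{enumerate}
Either way $\ALG(\bs{X}_j;\alpha)>\rho\,o_j$.
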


\makeatletter
\@ifundefined{myconfversion}{

\begin{proof}
Let $t$ be the round in which \MaxHedge{} fails. The adversary observes the behaviour of \ALG. There are two cases to consider.

Case 1:  \ALG goes aggressive in round $i < t$. The adversary sets $X_i$ to be the last request. In this case, we have:
\begin{align*}
    \ALG(\bs{X}_i; \alpha) &\ge \ALG'(\bs{X}_{i}; \alpha) \\
    &= \text{\MaxHedge}(\bs{X}_{i-1}; \alpha) + |Z_{i-1}P_i| \\
    &> \text{\MaxHedge}(\bs{X}_{i-1}; \alpha) + |Z_{i-1}Z_i|\\
    &= \text{\MaxHedge}(\bs{X}_i;\alpha) = \rho o_i = \rho \OPT(\bs{X}_i;\alpha),
\end{align*}
where the first inequality is by Lemma~\ref{lm:AvsZ} and the second inequality is due to the aggressiveness of \ALG.

Case 2: \ALG is always cautious. In this case, the adversary sets $X_t$ to be the last request. Lemma~\ref{lm:AvsZ} applies for all rounds $i<t$. In round $t$ \ALG still has to cover $X_t$.  Since \MaxHedge{} can't cover it (i.e reach $T_t$) without exceeding competitive ratio $\rho$ (that's how we defined \MaxHedge{}'s failure), neither can \ALG.
\end{proof}

}{

} 
\makeatother

Next, we wish to compute the smallest value of $\rho$ such that \MaxHedge{} succeeds. Consider an execution of \MaxHedge{} in an arbitrary round $i\geq 2$. Let $b_{i+1} = \rho(o_{i+1} - o_i)$ be its budget for responding to request $X_{i+1}$. Let $D^i_1 = T_i + d_1(T_{i+1}-T_i) \in \halfline{T_{i+1}T_i}$ be a point for which $|D^1_iT_{i+1}| = b_{i+1}$ (note that due to regular scaling of $T_i$ points, $d_1$  is the same for all $i>2$). In other words, $D^1_i$ is the last point\footnote{Note that for sufficiently large $\rho$ we might have $d_1 \leq 0$, however we are interested in small $\rho$, for which $d_1 > 0$.} on the halfline $\halfline{T_{i+1}T_i}$ that is able reach (within the allowed budget $b_{i+1}$) the halfline $\halfline{T_{i+1}T_{i+2}}$ (refer to Figure~\ref{fig:d}).

\begin{figure}[h]
\centering
\begin{subfigure}[b]{0.45\textwidth}
\begin{tikzpicture}[scale = 0.6]
\coordinate (O0) at (0,0);
\coordinate (O1) at (1,1);
\coordinate (O2) at (1-\s, 1+\s);
\coordinate (O3) at ({-\s * (1-\s)}, {\s * (1+\s)});
\coordinate (x1) at (2, 0);
\coordinate (x2) at ({-2*\s}, 0);
\coordinate (x3) at ({2*\s*\s}, 0);
\coordinate (Z1) at (0.3239, 1.6760);
\coordinate (Z2) at (-0.5364, 4.29205);
\coordinate (Z3) at (1.3424, 10.3144);

\draw (-5.5,0) -- (5.5,0) node [below] {$x$};
% we don't want/need to draw vertical axis
\draw (0,-0.5) -- (0,11) coordinate (Y) node [left] {$y$};

\draw (x1)+(0,0.1) -- ++(0,-0.2) node [below] {$X_1$};
\draw (x2)+(0,0.1) -- ++(0,-0.2) node [below] {$X_2$};
\draw (0,0) node [anchor = north east] {$X_0$};
\draw [dotted] (O0) -- (O1) node [right] {$T_1$};
\draw [dashed] (O1) -- ++(1,1) coordinate (O1r);
\draw [dotted] (x1) -- (O1);
\draw [dashed] (O1) -- (O2) node [left] {$T_2$} -- ++(-2,2) coordinate (O2l);
\draw [dotted] (x2) -- (O2);
\draw [dashed] (O2) -- (O3) node [right] {$T_3$} -- ++(1,1) coordinate (O3r);
\draw [dotted] (O3) -- ++(1,-1);
\draw [dashed] (O3) -- ++(-2.5,2.5);
\draw [thick] (O0) -- (Z1) node [anchor = south west] {$Z_1$} -- (Z2) node [anchor = south east] {$Z_2$} -- (Z3) node [anchor = south west] {$Z_3$};
\draw (O3) pic["$2\alpha$", draw] {angle=O2--O3--x3};
\draw (O1) pic[fill=black!20] {angle=O1r--O1--O2};
\draw (O2) pic[fill=black!20, angle radius = 7mm] {angle=O3--O2--O2l};
\draw (O3) pic[fill=black!20, angle radius = 1cm] {angle=O3r--O3--Z3};
\draw [dotted] (O0) -- (O2) -- ($(O2)!-1cm!(O0)$);
\draw (O0) pic["$\beta$", angle radius = 1.2cm, draw] {angle=Y--O0--O2}; 
\end{tikzpicture}
\caption{\;The requests $X_i$, feasibility cone tips $T_i$ and \MaxHedge 's responses $Z_i$ for $i=1,2,3$ and $\alpha = \pi/4$.}\label{fig:OZ}
\end{subfigure}
\hfill
  \begin{subfigure}[b]{0.45\textwidth}
\centering
\begin{tikzpicture}[scale = 0.75]
\coordinate (O0) at (0,0);
\coordinate (O1) at (1,1);
\coordinate (O2) at (1-\s, 1+\s);
\coordinate (O3) at ({-\s * (1-\s)}, {\s * (1+\s)});
\coordinate (O4) at ({\s* \s * (1-\s)}, {\s * \s * (1+\s)});

\coordinate (x1) at (2, 0);
\coordinate (x2) at ({-2*\s}, 0);
\coordinate (x3) at ({2*\s*\s}, 0);
\coordinate (Z1) at (0.3239, 1.6760);
\coordinate (Z2) at (-0.5364, 4.29205);
\coordinate (Z3) at (1.3424, 10.3144);

\draw [dashed] ($(O2)!1cm!(O1)$) coordinate (O2r) -- (O2) -- ++(-1,1) coordinate (O2l);
\draw [dotted] (O2) -- ($(O2)!1cm!(x2)$);
\draw [dashed] (O2) node [left] {$T_i$} -- (O3) node [right] {$T_{i+1}$} -- ++(1,1);
\draw [dotted] (O3) -- ++(1,-1) coordinate (O3d);
\draw [dashed] (O3) -- ++(-3,3) coordinate (O3l);

\fill ($(O2)!0.07612!(O3)$) coordinate (D1) circle [radius=1pt] node [below] {$D_i^1$};
\fill ($(O3)!0.07612!(O4)$) coordinate (D1') circle [radius=1pt] node [right, blue] {$D_{i+1}^1$};

\fill [blue] ($(O2)!0.094582!(O3)$) coordinate (D2) circle [radius=1pt] node [anchor = north west] {$D_i^2$};
\fill [blue] ($(O3)!0.094582!(O4)$) coordinate (D2') circle [radius=1pt] node [anchor = south west, cyan] {$D_{i+1}^2$};

\fill [cyan] ($(O2)!0.1047829!(O3)$) coordinate (D3) circle [radius=1pt];
\fill [cyan] ($(O3)!0.1047829!(O4)$) coordinate (D3') circle [radius=1pt];

\fill [purple] ($(O2)!0.111427888!(O3)$) coordinate (D4) circle [radius=1pt];
\fill [purple] ($(O3)!0.111427888!(O4)$) circle [radius=1pt];

\fill [red] ($(O2)!0.1161524!(O3)$) circle [radius=1pt];
\fill [red] ($(O3)!0.1161524!(O4)$) circle [radius=1pt];

\fill [red] ($(O2)!0.119704652441!(O3)$) circle [radius=1pt];
\fill [red] ($(O3)!0.119704652441!(O4)$) circle [radius=1pt];

\fill [orange] ($(O2)!0.122482666612!(O3)$) circle [radius=1pt];
\fill [orange] ($(O3)!0.122482666612!(O4)$) circle [radius=1pt];

\fill [orange] ($(O2)!0.124719870118!(O3)$) circle [radius=1pt];
\fill [orange] ($(O3)!0.124719870118!(O4)$) circle [radius=1pt];

\draw [thick, yellow] ($(O2)!0.124719870118!(O3)$) -- ($(O2)!0.146421627737!(O3)$);
\draw [thick, yellow] ($(O3)!0.124719870118!(O4)$) -- ($(O3)!0.146421627737!(O4)$);

\fill ($(O2)!0.146421627737!(O3)$) coordinate (Dinf) circle [radius=2pt] node [right] {$D_i^{\infty}$};
\fill ($(O3)!0.146421627737!(O4)$) coordinate (Dinf') circle [radius=2pt] node [anchor = south west] {$D_{i+1}^{\infty}$};

\coordinate (O3a) at ($(O3)+(0.7,-0.7)$);
\coordinate (O3b) at ($(O3)+(-0.7,0.7)$);

\coordinate (D1a) at ($(D1')+(0.7,-0.7)$);
\coordinate (D1b) at ($(D1')+(-0.7,0.7)$);

\coordinate (D2a) at ($(D2')+(0.7,-0.7)$);
\coordinate (D2b) at ($(D2')+(-0.7,0.7)$);

\coordinate (D3a) at ($(D3')+(0.7,-0.7)$);
\coordinate (D3b) at ($(D3')+(-0.7,0.7)$);

\coordinate (Dinfa) at ($(Dinf')+(0.7,-0.7)$);
\coordinate (Dinfb) at ($(Dinf')+(-0.7,0.7)$);

\draw (D1) -- (O3);
\draw [blue] (D2) -- (D1');
\draw [cyan] (D3) -- (D2');
\draw [purple] (D4) -- (D3');
\draw [thick] (Dinf) -- (Dinf');

\draw [dotted, name path = pathY] (Dinf) -- ++(0,4) coordinate (Y);

\draw (Dinf) pic["$\gamma$", angle radius = 2cm, pic text options={shift={(0pt,-3pt)}},  draw] {angle=Dinf'--Dinf--Y}; 
\end{tikzpicture}
\caption{\;Progressively defining the dead zones: $D^1_i$ just barely reaches $T_{i+1}$, $D^2_i$ just reaches $D^1_{i+1}$ and in general $D^{j+1}_i$ just reached $D^j_{i+1}$. $D^{\infty}_i = \lim_{j\rightarrow \infty} D^j_i$, if such limit exists.}
\label{fig:d}
\end{subfigure}
\end{figure}

Clearly, if $z_i < d_1$ then \MaxHedge{} fails in round $i+1$. Hence, the point set $$\mathcal{D}_1 = \{T_i + r(T_{i+1}-T_i)| i>1, r<d_1\}$$ represents {\em level-1 dead points} and should be avoided by \MaxHedge{}.

As the aim is for \MaxHedge{} to succeed, $\mathcal{D}_1$ does not fully capture the points to be avoided. It is not sufficient from $Z_i$ to reach $\halfline{T_{i+1}T_{i+2}}$, it must reach outside of $\mathcal{D}_1$ on $\halfline{T_{i+1}T_{i+2}}$.
Let $D^2_i$ be the point of $\halfline{T_iT_{i-1}}$ at a distance $b_{i+1}$ from $D^1_{i+1}$. Again, we can express $D^2_i$ as $T_i + d_2(T_{i+1}-T_i)$. %From (2) of Lemma~\ref{lm:geom} 
Since $\angle T_i T_{i+1} T_{i+2} \ge \pi/2$, if $z_i<d_2$ then $Z_i$ can only reach level-1 dead points on $\halfline{T_{i+1}T_{i+2}}$, hence it will fail in round $i+2$.

The argument can be applied inductively, defining $D_i^{j+1}$ as the first point on $\halfline{T_iT_{i+1}}$ being able to reach a point outside $\mathcal{D}_j$ in $\halfline{T_{i+1}T_{i+2}}$. From this, we can analogously define $d_{j+1}$ and $\mathcal{D}_{j+1}$. Observe (basically from the definition of $d_j$) that if $z_i < d_j$ then \MaxHedge{} fails in round at most $i+j$.

\begin{lemma}\label{lm:d_i}
The sequence $\{d_i\}$ is a well defined and increasing sequence on the domain of \MaxHedge{}.
\end{lemma}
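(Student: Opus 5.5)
The plan is to exploit the self-similarity of the configuration and reduce everything to one fixed segment. For $i \ge 2$ we have $T_{i+2} = s^2 T_i$; also, $T_{i+1}$ is obtained from $T_i$ by reflection in the $y$-axis composed with scaling by $s$. Hence the map $\Phi_i$, reflection in the $y$-axis followed by scaling by $s$, sends the segment $T_iT_{i+1}$ onto $T_{i+1}T_{i+2}$. The budgets scale the same way: $o_{i+1}=s\,o_i$, so $b_{i+2}=s\,b_{i+1}$.

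\textbf{Step 1: well-definedness.} I would argue by induction on $j$. Assume $d_j$ is defined and independent of $i$, with $D^j_{i+1}=\Phi_i(D^j_i)$. The point $D^{j+1}_i$ is then the point $T_i+d(T_{i+1}-T_i)$ of the segment whose distance to $D^j_{i+1}$ equals $b_{i+1}$. Applying $\Phi_i$ maps this defining condition for index $i$ onto the one for index $i+1$, since lengths and budgets both scale by $s$. So $d_{j+1}$ does not depend on $i$.

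Existence of $d_{j+1}$ uses monotonicity of distance. The distance from $T_i+d(T_{i+1}-T_i)$ to $D^j_{i+1}$ is continuous and strictly decreasing in $d$ on the relevant range, because $\angle T_iT_{i+1}T_{i+2}\ge\pi/2$ when $\alpha\le\pi/4$ and $D^j_{i+1}$ lies on the ray beyond $T_{i+1}$. Restricting to the domain of \MaxHedge{}, where it has not failed, gives the needed point with $d_{j+1}<1$ in the relevant regime; the point $Z_i$ itself witnesses that a point beyond the dead zone exists. Existence then follows from the intermediate value theorem.

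I would also note why $D^{j+1}_i$ is the right threshold: points with smaller $d$ can only reach points of $\mathcal{D}_j$. This is the same obtuse-angle argument the paper uses for $d_2$. The closest point of the next ray $\halfline{T_{i+1}T_{i+2}}$ from any point of $T_iT_{i+1}$ is $T_{i+1}$ itself, and distance grows monotonically along that ray.

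\textbf{Step 2: monotonicity.} Again by induction. Base case: $d_2>d_1$. The point $D^1_i$ reaches $T_{i+1}=D^0_{i+1}$ with budget exactly $b_{i+1}$. The point $D^1_{i+1}$ lies strictly beyond $T_{i+1}$ on the next ray, assuming $d_1>0$, and by the obtuse angle it is strictly farther from $D^1_i$ than $T_{i+1}$ is. So $D^1_i$ cannot reach $D^1_{i+1}$, and one must move further along the segment to do so, giving $d_2>d_1$.

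Inductive step: if $d_j>d_{j-1}$, then $D^j_{i+1}$ lies beyond $D^{j-1}_{i+1}$ on the ray from $T_{i+1}$. By monotonicity of distance (the obtuse angle again), the point of $T_iT_{i+1}$ at distance $b_{i+1}$ from $D^j_{i+1}$ is strictly farther along than the one for $D^{j-1}_{i+1}$. Hence $d_{j+1}>d_j$.

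\textbf{Main obstacle.} The hard part is making the monotonicity of distance rigorous. This requires the "closer to $T_{i+1}$ on both rays means closer to each other" claim, i.e.\ that the distance between a point at parameter $d$ on $T_iT_{i+1}$ and a point at parameter $e$ on $T_{i+1}T_{i+2}$ is increasing in $e$ and decreasing in $d$ in the relevant ranges. I would prove it by the law of cosines with the obtuse angle at $T_{i+1}$, which makes both partial derivatives of the squared distance have fixed sign. The remaining care is to keep all $d_j$ in $(0,1)$ on the domain of \MaxHedge{}; the valid-execution assumption supplies this.
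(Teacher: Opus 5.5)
Your proposal is correct and follows the paper's argument. The obtuse angle $\angle T_iT_{i+1}T_{i+2} = \pi - 2\alpha \ge \pi/2$ makes the distance to $D^j_{i+1}$ strictly monotone along $\halfline{T_{i+1}T_i}$, so the circle meets that half-line at most once; existence comes from being in the domain of \MaxHedge{}; and $d_{j+1} > d_j$ follows from the same obtuse-angle comparison. You simply spell out what the paper leaves implicit: the scaling invariance ($T_{i+1}$ is the reflected, $s$-scaled copy of $T_i$, and $b_{i+2} = s\,b_{i+1}$) and the law-of-cosines monotonicity.
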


\begin{proof}
As we limit ourselves to $D^j_i$ on the halfline $\halfline{T_{i+1}T_i}$ and $\angle T_iT_{i+1}T_{i+2} \geq \pi/2$, there is at most one intersection of the circle with center $D^{j-1}_{i+1}$ and radius $b_{i+1}$ with $\halfline{T_{i+1}T_i}$. As $i$ is within the domain of \MaxHedge{}, this intersection exists. Hence $\{d_i\}$ is well defined. Lastly, $d_{j+1}>d_j$ follows from its definition, and from  $\angle T_iT_{i+1}T_{i+2} \geq \pi/2$. 
\end{proof}

Note that if $\{d_i\}$  does not have a limit (i.e. it eventually breaches the vertical axis), then the whole segment from  $T_i$ to the intersection of $\halfline{T_iT_i+1}$ with the vertical axis is a dead zone, \MaxHedge{} has no place to put $Z_i$ and eventually fails. Conversely, we have the following:
\begin{lemma}\label{lm:limit}
If the sequence $\{d_i\}$ has a limit $d_{\infty}$ and if $z_2\geq d_{\infty}$ then \MaxHedge{} succeeds.
\end{lemma}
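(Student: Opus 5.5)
Because of the self-similar geometry of $\bs{X}$, we have $T_{i+1}=-s\,\mathrm{refl}(T_i)$ for $i\ge 2$, where $\mathrm{refl}$ is reflection in the $y$-axis, and the budgets satisfy $b_{i+1}=s\,b_i$, since $o_i$ scales by $s$. The plan is to show by induction on $i\ge 2$ that \MaxHedge{} can always choose its response with $z_i\ge d_\infty$. Staying outside every dead zone then prevents failure. The whole argument reduces to one geometric fact about the fixed point $D^\infty_i=T_i+d_\infty(T_{i+1}-T_i)$: the segment $D^\infty_iD^\infty_{i+1}$ has length exactly $b_{i+1}$.

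First I would establish this fixed-point property. By definition, $D^{j+1}_i$ is the point of $\halfline{T_iT_{i+1}}$ at distance $b_{i+1}$ from $D^j_{i+1}$, and $D^j_{i+1}$ is the scaled reflection of $D^j_i$. So the map sending the parameter $d_j$ to $d_{j+1}$ is a fixed continuous function $\phi$, given by intersecting a circle with a half-line. By Lemma~\ref{lm:d_i} the sequence is increasing, so if $d_j\to d_\infty$ then continuity gives $\phi(d_\infty)=d_\infty$, which is the same as $|D^\infty_iD^\infty_{i+1}|=b_{i+1}$ for every $i$.

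Second comes a monotonicity claim. For $z\ge d_\infty$ on $\halfline{T_iT_{i+1}}$ (up to the $y$-axis), the point of $\halfline{T_{i+1}T_{i+2}}$ at distance $b_{i+1}$ from $T_i+z(T_{i+1}-T_i)$ exists and has parameter at least $d_\infty$. The argument is the same obtuse-angle one used in Lemma~\ref{lm:d_i}. Since $\angle T_iT_{i+1}T_{i+2}\ge\pi/2$, moving the centre further from $T_i$ along $T_iT_{i+1}$, that is, closer to $T_{i+1}$, moves the intersection of the circle with the next half-line further along it. This gives $\phi$ monotone, so $z_i\ge d_\infty$ implies $z_{i+1}=\phi(z_i)\ge\phi(d_\infty)=d_\infty$. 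Existence of the intersection also follows: the distance from $Z_i$ to $T_{i+1}$ is at most $|D^\infty_iT_{i+1}|\le b_{i+1}$.

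Third, I would run the induction. The hypothesis $z_2\ge d_\infty$ starts it. The step above shows \MaxHedge{} defines $Z_{i+1}$ with $z_{i+1}\ge d_\infty$, so it never fails in any round $i\ge 3$. Rounds $1$ and $2$ do not fail either, because $z_2$ is given as defined. Hence \MaxHedge{} succeeds.

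The main obstacle is the monotonicity of $\phi$. It needs the obtuse-angle property, which holds because $\alpha\le\pi/4$, and it needs the points to stay on the valid side of the $y$-axis. I would prove it via the law of cosines in triangle $Z_iT_{i+1}W$, where $W$ is the new point. The distance $|T_{i+1}W|$ solves $w^2-2w\,u\cos\theta+u^2=b^2$, with $u=|Z_iT_{i+1}|$ and $\cos\theta\le 0$. The root $w$ is decreasing in $u$, so a larger $z$ (smaller $u$) gives a larger $w$.
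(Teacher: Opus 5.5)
Your proposal is correct and is essentially the paper's proof. Both argue by induction that $z_i\ge d_\infty$, using $\angle T_iT_{i+1}T_{i+2}\ge\pi/2$ to get $|Z_iD^\infty_{i+1}|\le|D^\infty_iD^\infty_{i+1}|=b_{i+1}$ and hence $z_{i+1}\ge d_\infty$; you additionally justify, by continuity and the law of cosines, the fixed-point identity and the monotonicity that the paper leaves implicit. There are two harmless slips: the self-similarity is $T_{i+1}=s\,\mathrm{refl}(T_i)$ rather than $-s\,\mathrm{refl}(T_i)$, and the map $d_j\mapsto d_{j+1}$ runs backward (from the next half-line to the current one), so the forward map $z_i\mapsto z_{i+1}$ is its inverse rather than the same $\phi$ --- which does not matter, since both are increasing and fix $d_\infty$.
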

\begin{proof}
We show $\forall i: z_i\geq d_{\infty}$ by induction on round number $i$. Let $D_i^{\infty} = T_i+d_{\infty}(T_{i+1}-T(i))$. 
Since $z_i\geq d_{\infty}$, from  $\angle T_iT_{i+1}T_{i+2} \geq \pi/2$ we have $|Z_iD_{i+1}^{\infty}| \leq |D_iD_{i+1}|$ and therefore $z_{i+1} \geq d_{\infty}$.
\end{proof}

We are now ready to prove the main result of this section.
\begin{theorem}\label{thm:lb}
Fix $\alpha \in [0, \pi/4]$ and arbitrary deterministic algorithm \ALG, then 
\[\rho(\ALG;\alpha) \ge \rho^*,\]
where 
\[\rho^* = \frac{s^* (1 + s^*) \sqrt{1 - \cos(4\alpha)}}{\sqrt{2 (1 + (s^*)^4) - 4 (s^*)^2 \cos(4\alpha)}}.\]

\makeatletter
\@ifundefined{myconfversion}{

In particular, \ALG cannot obtain competitive ratio better than $\rho^*$ with respect to the adversarial sequence $\bs{X}$ defined in this section with parameter $s = s^*$, where 
\begin{align*}
s^* &= -\frac{1}{2} \cos (4 \alpha )+\frac{1}{2} \sqrt{\cos ^2(4 \alpha )-\frac{8 \sqrt[3]{2} \sin ^2(2 \alpha )}{h}+\frac{h}{3 \sqrt[3]{2}}}\\
&+\frac{1}{2}
   \sqrt{2 \cos ^2(4 \alpha )+\frac{8 \sqrt[3]{2} \sin ^2(2 \alpha )}{h}+\frac{16-8 \cos ^3(4 \alpha )}{4 \sqrt{\cos ^2(4 \alpha )-\frac{8
   \sqrt[3]{2} \sin ^2(2 \alpha )}{h}+\frac{h}{3 \sqrt[3]{2}}}}-\frac{h}{3 \sqrt[3]{2}}}
\end{align*}
and 
\[h = \sqrt[3]{-108 \cos ^2(4 \alpha )+\sqrt{\left(108-108 \cos ^2(4 \alpha )\right)^2-4 (12 \cos (4 \alpha )-12)^3}+108}.\]

}{
In particular, \ALG cannot obtain competitive ratio better than $\rho^*$  with respect to the adversarial sequence $\bs{X}$ defined in this section with parameter $s = s^*$, which is given by an explicit function of $\alpha$ that is presented in the full version of the paper in the appendix.
} 
\makeatother
\end{theorem}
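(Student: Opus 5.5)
The plan is to combine Lemma~\ref{lm:lb} with a self-similarity analysis of \MaxHedge{} on the geometric adversarial sequence $\bs{X}$. Fix $s>1$. By Lemma~\ref{lm:lb}, it suffices to find the threshold
$\rho(s)=\inf\{\rho : \MaxHedge \text{ with parameter } \rho \text{ succeeds}\}$.
Then every deterministic \ALG satisfies $\rho(\ALG;\alpha)\ge \rho(s)$. The theorem follows by evaluating at the maximizing $s=s^*$, and I would check that the resulting executions are valid in the sense required by Lemma~\ref{lm:AvsZ}.

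\textbf{Step 1: self-similar fixed point.} For $i\ge 2$ the cone tips satisfy $T_{i+2}=s^2T_i$, and $T_{i+1}$ is $-s$ times the mirror image of $T_i$. Likewise $o_{i+1}-o_i=s^{i-2}(s-1)K$ with $K=\sqrt{(1+s)^2/\sin^2\alpha-4s}$. Consequently every quantity in the dead-zone recursion scales by $s$ from round to round. In particular the map $d_j\mapsto d_{j+1}$ of Lemma~\ref{lm:d_i} does not depend on $i$.

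\textbf{Step 2: the limit point $d_\infty$.} By Lemma~\ref{lm:d_i} and Lemma~\ref{lm:limit}, the limit $d_\infty$, when it exists, is the fixed point of this map. It is characterized by the equation
\[
|D^\infty_iD^\infty_{i+1}| = \rho\,(o_{i+1}-o_i), \qquad D^\infty_i=T_i+d_\infty(T_{i+1}-T_i).
\]
Dividing by $s^{i-2}$ turns this into a single quadratic equation in $d_\infty$ with coefficients depending on $\alpha,s,\rho$. The threshold $\rho(s)$ is the smallest $\rho$ for which this equation has a root $d_\infty$ such that:
\begin{itemize}
\item $D^\infty_i$ lies on the segment from $T_i$ to the $y$-axis (validity), and
\item the early rounds can reach $z_2\ge d_\infty$ within budget.
\end{itemize}
Equivalently,
\[
\rho(s)=\min_{d}\ \frac{|D_iD_{i+1}|}{o_{i+1}-o_i}.
\]
I expect this minimum to be attained where $D_i$ meets the $y$-axis, i.e.\ the limiting zigzag degenerates to the vertical path through the points where the segments $T_iT_{i+1}$ cross the $y$-axis. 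Those points lie on the lines through the $T_i$ at angle $\beta=\arctan(\frac{s-1}{s+1}\tan\alpha)$ to the vertical, which makes the ratio computable in closed form.

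\textbf{Step 3: closing the argument.} For $\rho<\rho(s)$ I would argue that the sequence $\{d_j\}$ has no admissible limit before reaching the axis, so the whole segment becomes dead and \MaxHedge{} fails. For the first rounds, where the recursion is not yet self-similar, I would show that they only add a bounded amount of slack, which vanishes relative to the geometrically growing $o_i$.

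\textbf{Step 4: optimizing over $s$.} Carrying out the trigonometric simplification should yield
\[
\rho(s)=\frac{s(1+s)\sqrt{1-\cos 4\alpha}}{\sqrt{2(1+s^4)-4s^2\cos 4\alpha}} .
\]
As a sanity check, at $\alpha=\pi/4$ and $s=1+\sqrt2$ this gives $(1+\sqrt2)/2$. Setting $\frac{\dd}{\dd s}\rho(s)=0$ clears to a quartic in $s$. Solving it by Ferrari's method and selecting the real root with $s>1$ gives the stated $s^*$ and auxiliary quantity $h$.

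\textbf{Main obstacle.} I expect Step 2 together with Step 3 to be the hardest part. It requires showing rigorously that the fixed-point ratio really equals the success threshold, namely that the minimizing $d$ sits at the axis and that the initial non-self-similar rounds do not alter the threshold, and also verifying validity at $(s^*,\rho^*)$ for all $\alpha\le\pi/4$. I would attempt both first by explicit computation at $\alpha=\pi/4$ and then extend symbolically.
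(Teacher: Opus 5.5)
Your overall architecture is the paper's. Lemma~\ref{lm:lb} reduces the theorem to showing that \MaxHedge{} fails. Self-similarity (in fact $Z_3=s\,MZ_2$, with $M$ the reflection in the $y$-axis) collapses the dead-zone fixed point to the single equation $|Z_2Z_3|=\rho(s-1)o_2$. The threshold is then $\min_z |Z_2Z_3|/(o_3-o_2)$, and the adversary maximizes over $s$ via the quartic $s^4+2s^3\cos(4\alpha)-2s-1=0$.

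The genuine error is in Step~2: the minimum is \emph{not} attained where $D_i$ meets the $y$-axis. Write $c=\cot\alpha$ and $Z_2=T_2+z(T_3-T_2)$ (your $d$). Then
\[
|Z_2Z_3|^2=(s^2-1)^2\Bigl[\bigl((s+1)z-1\bigr)^2+c^2\bigl((s-1)z+1\bigr)^2\Bigr].
\]
This is a convex quadratic in $z$ with minimizer $z^*=\frac{(s+1)-(s-1)c^2}{(s+1)^2+(s-1)^2c^2}$. It satisfies $z^*<1/(s+1)$ for every $s>1$, so it lies strictly before the axis crossing. At $\alpha=\pi/4$, $s=1+\sqrt2$ it equals $(2-\sqrt2)/4$, half of $1/(s+1)=(2-\sqrt2)/2$.

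If you evaluate at the axis crossing instead, the limiting path is vertical, i.e.\ it is just \SU{}. You then obtain $\frac{2s\cot\alpha}{\sqrt{(1+s)^2/\sin^2\alpha-4s}}$, which is exactly the ratio of \SU{} on this input. At $\alpha=\pi/4$, $s=1+\sqrt2$ this is about $1.307$, and it tends to $2\cos\alpha$ as $s\to\infty$. Both values exceed the $5/4$ guaranteed by \FA{} at $\alpha=\pi/4$ (Theorem~\ref{thm: opt cr of fixed angle}). So that route would ``prove'' a false bound, and it cannot produce the formula you write down in Step~4.

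The fix is to minimize over all real $z$. Equivalently, as the paper does, equate with $\rho^2(s-1)^2o_2^2$ and require the discriminant of the resulting quadratic in $z$ to be nonnegative. This gives $\min_z|Z_2Z_3|=\frac{2sc\,(s^2-1)}{\sqrt{(s+1)^2+(s-1)^2c^2}}$. Using $(s+1)^2\sin^2\alpha+(s-1)^2\cos^2\alpha=s^2+1-2s\cos 2\alpha$ and $(1+s)^2-4s\sin^2\alpha=s^2+1+2s\cos 2\alpha$, the ratio to $o_3-o_2$ simplifies to $\frac{s(s+1)\sin 2\alpha}{\sqrt{s^4+1-2s^2\cos 4\alpha}}$. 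This is exactly the stated $\rho(s)$.

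Note also that the theorem needs only one direction, which removes most of what you list as the main obstacle. If $\rho$ is below this minimum, there is no fixed point at all. The increasing sequence $\{d_j\}$ of Lemma~\ref{lm:d_i} therefore cannot converge inside the segment, and must pass the axis crossing after finitely many steps, making every admissible $z_2$ dead. You do not need to show that the fixed-point ratio equals the exact success threshold. The non-self-similar early rounds are also irrelevant, since whatever $z_2$ they produce is eventually dead.
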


\makeatletter
\@ifundefined{myconfversion}{
We plot $\rho^*$ and $s^*$ from Theorem~\ref{thm:lb} in the following Figure~\ref{fig:rho_and_s_star}:
\begin{figure}[h]
    \centering
    \includegraphics[width=0.8\textwidth]{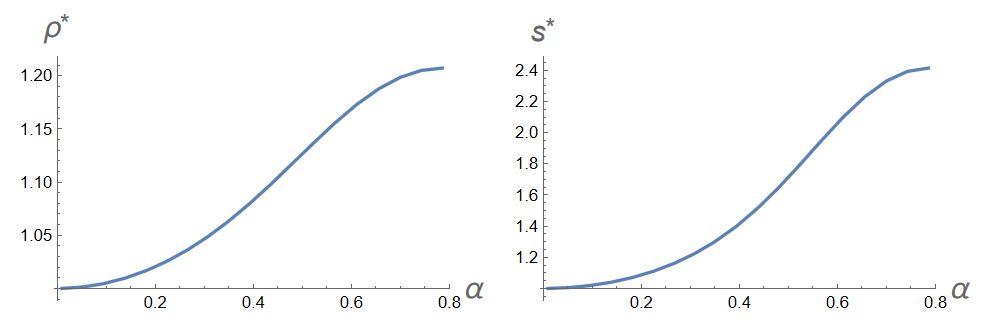}
    \caption{This figure shows $\rho^*$ (on the left) and $s^*$ (on the right) from Theorem~\ref{thm:lb} as functions of $\alpha$.}
    \label{fig:rho_and_s_star}
\end{figure}

\begin{proof}
As we aim for the lower bound, Lemma~\ref{lm:lb} motivates us to find the smallest $\rho$ for which \MaxHedge{} with parameter $\rho$ succeeds. In turn, Lemma~\ref{lm:limit} tells us how to find it: we need to compute $d_{\infty}$.

If $z=d_{\infty}$ is the fixed point of $\{d_i\}$ then (from the definition of $d_i$'s) $|D_iD_{i+1}| = \rho(o_{i+1}-o_i)$ must hold.
Since all $T_i$ for $i>1$ are scaled-up by a factor of $s^i$, it is sufficient to consider case $i=2$:

\begin{equation}\label{eq:dZ}
|Z_2Z_3| = \rho(o_3 - o_2) = \rho(s-1)o_2 =  \rho (s - 1) \sqrt{-4s + \frac{(1 + s)^2}{\sin^2(\alpha)}}.
\end{equation}

We can compute $|Z_2Z_3|$ in another way using the formula for $Z_i = T_i + z(T_i+1-T_t)$. Then we obtain:

\[ |Z_2Z_3| = (1 - s^2) \sqrt{(-1 + z + s z)^2 + \left(1 + (-1 + s)z\right)^2 \cot^2(\alpha)}.\]

Substituting into (\ref{eq:dZ}) yields
\[\rho (s - 1) \sqrt{-4s + \frac{(1 + s)^2}{\sin^2(\alpha)}} = (1 - s^2) \sqrt{(-1 + z + s z)^2 + \left(1 + (-1 + s)z\right)^2 \cot^2(\alpha)}.\]

Squaring the two expressions for $|Z_2Z_3|$ and bringing them to the same side results in the following:
\[ (1 - s^2)^2 \left( (-1 + z + s z)^2 + \left(1 + (-1 + s) z\right)^2 \cot^2(\alpha) \right)
- \rho^2 (1 - s)^2 \left(-4s + \frac{(1 + s)^2}{\sin^2(\alpha)}\right) = 0.\]
Interpreting the above expression as a quadratic formula in $z$, the determinant must be nonnegative:
\begin{equation*}
    2 (1 - s)^4 (1 + s)^2 \left(-s^2 (1 + s)^2 + 2 \rho^2 (1 + s^4) + s^2 (-4 \rho^2 + (1 + s)^2) \cos(4\alpha) \right) \csc^4(\alpha) \ge 0.
\end{equation*}
Isolating $\rho$, we obtain the following:
\begin{equation}\label{eq:rho_star_lb}
\rho \ge \frac{s (1 + s) \sqrt{1 - \cos(4\alpha)}}{\sqrt{2 (1 + s^4) - 4 s^2 \cos(4\alpha)}}.
\end{equation}
The adversary selects $s$ to maximize the competitive ratio, thus, we wish to maximize the right-hand side of the above expression. Define 
\[ f(s) = \frac{s (1 + s) \sqrt{1 - \cos(4\alpha)}}{\sqrt{2 (1 + s^4) - 4 s^2 \cos(4\alpha)}}.\]

Computing the derivative, we obtain:
\[f'(s)=\frac{(-1 - 2s + s^4 + 2s^3 \cos(4\alpha)) \sin(2\alpha)}{(1 + s^4 - 2s^2 \cos(4\alpha))^{3/2}}.\]
Setting $f'(s)$ to $0$ and solving for $s$ gives $4$ different solutions, but only one of them is greater or equal to $1$. This is the one that gives us the $s^*$ value stated in the theorem. The $\rho^*$ value is obtained by plugging this $s^*$ value into Equation~\eqref{eq:rho_star_lb}. Applying Lemma~\ref{lm:lb} finishes the proof of the theorem.
\end{proof}
}{
% BEGIN CONF VERSION
The proof appears in the full version of the paper attached at the end.
% END CONF VERSION$
} 
\makeatother

\begin{corollary}
    For $\alpha = \pi/4$, no deterministic online algorithm can achieve competitive ratio better than $\frac{1+\sqrt{2}}{2}$.
\end{corollary}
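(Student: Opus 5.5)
The plan is to specialize Theorem~\ref{thm:lb} to $\alpha=\pi/4$. Rather than evaluate the explicit radical formula for $s^*$, I would redo the one-variable maximization of the bound~\eqref{eq:rho_star_lb} directly. At this angle it collapses to a simple rational function.

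First, substitute $\alpha=\pi/4$, so $\cos(4\alpha)=\cos\pi=-1$ and $\sqrt{1-\cos(4\alpha)}=\sqrt2$. The function $f(s)$ from the proof of Theorem~\ref{thm:lb} becomes
\[
f(s)=\frac{\sqrt2\, s(1+s)}{\sqrt{2(1+s^4)+4s^2}}=\frac{\sqrt2\, s(1+s)}{\sqrt2\,(1+s^2)}=\frac{s+s^2}{1+s^2}.
\]
Second, maximize over $s>1$. The derivative has numerator $(2s+1)(1+s^2)-2s(s^2+s)=1+2s-s^2$. This agrees with the general numerator $-1-2s+s^4+2s^3\cos(4\alpha)$ from the proof, which factors as $-(s^2+1)(1+2s-s^2)$ at this angle. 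The unique root with $s\ge 1$ is $s^*=1+\sqrt2$. The function increases before this root and decreases after it, so $s^*$ is a maximizer.

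Third, evaluate at $s^*$. We have $s^{*2}=3+2\sqrt2$, so $s^*+s^{*2}=4+3\sqrt2$ and $1+s^{*2}=4+2\sqrt2$. Since $\tfrac{1+\sqrt2}{2}(4+2\sqrt2)=4+3\sqrt2$, this gives $\rho^*=f(s^*)=(1+\sqrt2)/2$.

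Theorem~\ref{thm:lb} applies because $\alpha=\pi/4$ lies in its range $[0,\pi/4]$. It then gives that every deterministic algorithm has competitive ratio at least $\rho^*=(1+\sqrt2)/2$, which is the corollary. The only point needing care is the standing assumption in that section that \MaxHedge{} runs validly, meaning $Z_i$ stays on the same side of the $y$-axis as $T_i$, for this $s^*$ and $\rho^*$. The text argues this holds for $s$ large enough and for $\rho$ at the smallest successful value, and Figure~\ref{fig:OZ} is drawn exactly with $s=1+\sqrt2$ and $\alpha=\pi/4$. If pressed, I would confirm validity by computing $d_\infty$ for these parameters and checking that $D_i^\infty$ lies before the $y$-axis.
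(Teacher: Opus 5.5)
Your proof is correct and is the intended argument: the corollary is just Theorem~\ref{thm:lb} specialized to $\alpha=\pi/4$. There $f(s)$ collapses to $(s+s^2)/(1+s^2)$, which is maximized at $s^*=1+\sqrt2$ (the value the paper uses in its figures) and gives $\rho^*=(1+\sqrt2)/2$. One note: your derivative numerator $1+2s-s^2$ has the correct sign, whereas the paper's displayed $f'(s)$ is off by an overall factor of $-1$, so the two agree only in their roots, and it is your sign that justifies calling $s^*$ a maximizer.
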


\makeatletter
\@ifundefined{myconfversion}{
\section{Conclusions} \label{sec:conclusions}

We considered an optimization problem for target coverage by a camera-equipped drone that can move to cover targets appearing online at arbitrary locations on a line. We designed path-planning algorithms which ensure coverage by a single mobile drone of new as well as old targets and gave upper and lower bounds on the optimal total movement of the drone. An interesting question for future research would be to study the drone coverage problem for camera-equipped swarms (drone networks) in more realistic geometric settings, e.g.,  in 3D space.   
  } {

}
\makeatother
%\newpage

%\bibliographystyle{plain}
\bibliography{refs}

\newpage
\makeatletter
\@ifundefined{myconfversion}{
  } 

\clearpage
%\endgroup

\makeatother

\end{document}